\documentclass{lmcs}



\overfullrule=0mm 

\newcommand{\takeout}[1]{\empty}


\usepackage[utf8]{inputenc}
\usepackage{amssymb}
\usepackage{amsthm}
\usepackage{stmaryrd}
\usepackage{amsmath}
\usepackage{dsfont}
\usepackage{thmtools}
\usepackage{mathtools}
\usepackage{makecell}
\usepackage{tikz-cd}
\tikzset{shiftarr/.style={
        rounded corners,%
        to path={--([#1]\tikztostart.center)
                     -- ([#1]\tikztotarget.center) \tikztonodes
                     -- (\tikztotarget)},
}}
\usepackage[final]{fixme}
\fxuselayouts{inline,nomargin}
\fxusetheme{color}
\FXRegisterAuthor{fb}{afb}{FB}
\FXRegisterAuthor{nw}{anw}{NW}
\FXRegisterAuthor{sm}{asm}{SM}
\FXRegisterAuthor{hu}{ahu}{HU}

%
%
\usepackage[inline]{enumitem}
\setlist[enumerate,1]{label=(\arabic*),font=\normalfont,align=left,leftmargin=0pt,labelindent=0pt,listparindent=\parindent,labelwidth=0pt,itemindent=!,topsep=3pt,parsep=0pt,itemsep=3pt,start=1}
\setlist[enumerate,2]{label=(\alph*),font=\normalfont,labelindent=*,leftmargin=*,start=1}

\usepackage[notref,notcite,final]{showkeys}
\usepackage{seqsplit}
\usepackage{xstring}
\newcommand{\defaultshowkeysformat}[1]{%
\StrSubstitute{#1}{ }{\textvisiblespace}[\TEMP]%
\parbox[t]{2.5cm}{\raggedright\normalfont\small\ttfamily\(\{\){\color{red!50!black}\expandafter\seqsplit\expandafter{\TEMP}}\(\}\)}%
}

\renewcommand*\showkeyslabelformat[1]{%
\noexpandarg%
\defaultshowkeysformat{#1}%
}

\PassOptionsToPackage{hyphens}{url}
\makeatletter
\usepackage{hyperref}
\hypersetup{
  final,
  hidelinks,
  pdftitle={\@title},
  pdfauthor={Fabian Lenke, Stefan Milius, Henning Urbat},
}
\makeatother

\usepackage{xspace}
\usepackage{bussproofs}
\EnableBpAbbreviations 

\newcommand{\ifarx}[2]{}

\usepackage{cite} 

\def\resettheorembrackets{
\def\theorembracketopen{(}
\def\theorembracketclose{)}
}
\resettheorembrackets
\makeatletter
\def\@spopargbegintheorem#1#2#3#4#5{\trivlist
      \item[\hskip\labelsep{#4#1\ #2}]{#4{\theorembracketopen}#3{\theorembracketclose}\@thmcounterend\ }#5}
\makeatother

\newcommand{\resetCurThmBraces}{%
  \gdef\curThmBraceOpen{(}%
  \gdef\curThmBraceClose{)}}
\resetCurThmBraces
\newcommand{\removeThmBraces}{%
  \gdef\curThmBraceOpen{}%
  \gdef\curThmBraceClose{}}
\resetCurThmBraces

\newenvironment{notheorembrackets}{\removeThmBraces}{\resetCurThmBraces}

\usepackage{etoolbox}
\patchcmd{\thmhead}{(#3)}{\curThmBraceOpen #3\curThmBraceClose }{}{}





\renewcommand{\P}{\mathcal{P}}
\newcommand{\Pfin}{{\P_{\mathsf{f}}}}

\newcommand{\power}{\mathbin{\pitchfork}}
\newcommand{\slice}{\mathbin{\downarrow}}
\DeclareMathOperator{\uncurry}{\mathsf{uncurry}}
\DeclareMathOperator{\flip}{\mathsf{flip}}

\DeclareMathOperator{\Ran}{\mathsf{Ran}}
\DeclareMathOperator{\Lan}{\mathsf{Lan}}

\DeclareMathOperator{\Spec}{\mathsf{Spec}}
\DeclareMathOperator{\Cnt}{\mathsf{C}}
\DeclareMathOperator{\Id}{\mathsf{Id}}
\DeclareMathOperator{\id}{\mathsf{id}}

\DeclareMathOperator{\Cl}{Cl}
\DeclareMathOperator*{\colim}{colim}

\newcommand{\Kl}{\mathbf{Kl}}
\newcommand{\EM}{\mathbf{EM}}
\newcommand{\y}{\mathbf{y}}
\newcommand{\Cody}[1]{\mathsf{Cody}({#1})}
\newcommand{\V}{\mathbb{V}}
\newcommand{\T}{\mathcal{T}}
\newcommand{\Sh}{\mathsf{Sh}}
\newcommand{\At}{\mathsf{At}}
\newcommand{\Cstar}{\mathsf{C}^*\mathbf{Alg}}
\newcommand{\Stalk}{\mathsf{Stalk}}
\newcommand{\K}{\mathbb{K}}
\newcommand{\Viet}{\mathbb{V}}
\newcommand{\Dist}{\mathcal{D}}

\newcommand{\Exp}{\mathcal{E}}
\newcommand{\Giry}{\mathcal{G}}
\newcommand{\Act}{\mathcal{M}}
\newcommand{\Rad}{\mathcal{R}}
\newcommand{\N}{\mathbb{N}}
\newcommand{\Z}{\mathbb{Z}}

\newcommand{\Sierp}{\mathbb{S}}
\newcommand{\Bo}{\mathcal{B}}
\newcommand{\SMeas}{\mathcal{M}_{S}}
\newcommand{\Ffrm}{\mathcal{L}}

\newcommand{\Sob}{\mathcal{S}}

\newcommand{\ol}{\overline}

\newcommand{\Cpx}{\mathbb{C}}

\newcommand{\U}{\mathcal{U}}
\newcommand{\UP}{\mathcal{U}\mathcal{P}}
\newcommand{\I}{\mathcal{I}}
\newcommand{\F}{\mathcal{F}}

\newcommand{\subto}{\hookrightarrow}
\newcommand{\cat}[1]{\ensuremath{\mathbf{#1}}\xspace}
\newcommand{\Poc}{\cat{Poc}}
\newcommand{\Med}{\cat{Med}}
\newcommand{\Fam}{\cat{Fam}}
\newcommand{\A}{\cat{A}}
\newcommand{\B}{\cat{B}}
\newcommand{\C}{\cat{C}}
\newcommand{\D}{\cat{D}}

\newcommand{\EA}{\cat{EA}}
\newcommand{\sEA}{\cat{EA_{\sigma}}}
\newcommand{\sEMod}{\cat{EMod_{\sigma}}}
\newcommand{\EMod}{\cat{EMod}}

\newcommand{\CMon}{\cat{CMon}}
\newcommand{\Mat}{\mathbf{Mat}}
\newcommand{\tp}{\intercal}
\newcommand{\Mod}{\mathbf{Mod}}
\newcommand{\Stone}{\cat{Stone}}
\newcommand{\Set}{\cat{Set}}
\newcommand{\Ab}{\cat{Ab}}
\newcommand{\CHAb}{\cat{CHAb}}
\newcommand{\Setf}{\cat{Set}_\mathsf{f}}
\newcommand{\BSh}{\cat{BSh}}
\newcommand{\Top}{\cat{Top}}
\newcommand{\Frm}{\cat{Frm}}

\newcommand{\CHaus}{\cat{CHaus}}
\newcommand{\CMet}{\cat{CMet}}
\newcommand{\Setc}{\cat{Set}_\mathsf{c}}
\newcommand{\Meas}{\cat{Meas}}

\newcommand{\JSL}{\cat{JSL}}
\newcommand{\CJsl}{\cat{CJSL}}

\newcommand{\MSL}{\cat{MSL}}
\newcommand{\Caba}{\cat{CABA}}
\newcommand{\Cabac}{\cat{CABA}_\mathsf{c}}
\newcommand{\Vect}{\cat{Vect}}
\newcommand{\BA}{\cat{BA}}
\newcommand{\DL}{\cat{DL}}
\newcommand{\Pos}{\cat{Pos}}

\newcommand{\f}{\mathsf{f}}
\newcommand{\cnt}{\mathsf{c}}
\newcommand{\fd}{\mathsf{fd}}
\newcommand{\fp}{\mathsf{fp}}
\newcommand{\op}{\mathsf{op}}

\newcommand{\epi}{\twoheadrightarrow}

\newcommand{\incl}{\hookrightarrow}
\newcommand*\diff{\mathop{}\!\mathrm{d}}
\newcommand{\xto}{\xra}

%
%
\newcommand{\xra}[1]{\xrightarrow{~#1~}}
\newcommand{\Sier}{\mathbf{Sier}}
\newcommand{\inl}{\mathsf{inl}}
\newcommand{\inr}{\mathsf{inr}}




\numberwithin{equation}{section}

\usepackage{cleveref}
\newcommand{\iref}[2]{\Cref{#1}\ref{#1:#2}}
\crefname{cor}{Corollary}{Corollary}
\crefname{exa}{Example}{Examples}
\crefname{defi}{Definition}{Definitions}
\crefname{rem}{Remark}{Remark}
\crefname{lem}{Lemma}{Lemma}
\crefname{thm}{Theorem}{Theorem}
\crefname{thmC}{Theorem}{Theorem}
\crefname{prop}{Proposition}{Proposition}

\begin{document}


\title{Codensity Monads via Density and Duality}

\author[F.~Lenke]{Fabian Lenke\lmcsorcid{0000-0001-5890-9485}}
\thanks{Fabian Lenke is supported by Deutsche Forschungsgemeinschaft (DFG, German Research Foundation) -- project number 470467389.}

\author[H.~Urbat]{Henning Urbat\lmcsorcid{0000-0002-3265-7168}}
\thanks{Henning Urbat is supported by Deutsche Forschungsgemeinschaft (DFG, German Research Foundation) -- project numbers 470467389 and 569130867.}

\author[S.~Milius]{Stefan Milius\lmcsorcid{0000-0002-2021-1644}}
\thanks{Stefan Milius is supported by Deutsche Forschungsgemeinschaft (DFG, German Research Foundation) -- project number 517924115.}

\author[N.~Wittrock]{Nico Wittrock\lmcsorcid{0009-0004-8740-117X}}
\thanks{  Nico Wittrock is supported by Fundação para a Ciência e Tecnologia (FCT, Portuguese Foundation for Science and Technology) -- grant number 2023.02568.BD -- and partly by the FCT KaleidosQope project -- project number 2023.13603.PEX.}

\address{Friedrich-Alexander-Universität Erlangen-Nürnberg, Germany}
\email{\{fabian.lenke,stefan.milius,henning.urbat\}@fau.de, }%

\address{International Iberian Nanotechnology Laboratory (INL), Portugal\and High-Assurance Software Laboratory (HASLab), Portugal\and Department of Computer Science, University of Minho, Portugal}
\email{nico.wittrock@inl.int}

\begin{abstract}
Codensity monads provide a universal method to generate complex monads from simple functors. Recently, a wide range of important monads in logic, denotational semantics, and probabilistic computation, such as several incarnations of the ultrafilter monad, the Vietoris monad, and the Giry monad, have been presented as codensity monads, using complex arguments. We propose a unifying categorical approach to codensity presentations of monads, based on the idea of relating the presenting functor to a \emph{dense} functor via a suitable \emph{duality} between categories. We prove a general presentation result applying to every such situation and demonstrate that most codensity presentations known in the literature emerge from this strikingly simple duality-based setup, drastically alleviating~the complexity of their proofs and in many cases completely reducing them to standard duality results. Additionally, we derive a number of new codensity presentations using our framework, including the filter, lower Vietoris and Stone-\v{C}ech compactification monads on topological spaces, and the canonical extension and expectation monads on sets.
\end{abstract}
\maketitle
\section{Introduction}
\label{sec:introduction}

Monads are among the most fundamental concepts of category theory. They provide a common
abstraction of algebraic theories~\cite{Manes76} and notions of computation~\cite{moggi91},
and the tight interplay between both viewpoints has inspired decades of fruitful research in
theoretical computer science. While monads and their underlying structure can be defined
from scratch, it is often simpler and more informative to \emph{generate} monads via a
universal construction. A powerful method to do so was discovered by Kock~\cite{kock66}
in the 1960s: {every} functor $F\colon \C_0\to \C$ canonically induces a monad $\T=\Cody{F}$ on
the category $\C$, the \emph{codensity monad} of~$F$, provided that certain limits in $\C$
exist. It is constructed via the right Kan extension of $F$ along itself and generalizes the
familiar construction of monads from adjunctions. The prototypical example is the
{ultrafilter monad} on the category of sets (whose algebras are compact Hausdorff
spaces~\cite{Manes76}), which has been characterized as the codensity monad of the inclusion
$\Set_\f\subto \Set$ of the category of finite sets into $\Set$~\cite{kg71}. Generally, while
every monad can be presented as the codensity monad of \emph{some} functor (e.g.~the
forgetful functor of its Eilenberg-Moore category), the goal is to come up with the
\emph{simplest} possible presentation.

In recent years, codensity monads have found a growing number of applications in computer
science; in particular, they have been identified as an elegant categorical underpinning of
profinite methods in automata theory~\cite{camu16,uacm17,gpr20,acmu21}, program
optimizations in functional languages~\cite{hinze12}, and the construction of liftings of
monads along fibrations relevant in type theory~\cite{ksu18}. Furthermore, a number of key
monads appearing in logic, denotational semantics, and probabilistic computation have been
presented as codensity monads, most notably generalizations of the ultrafilter monad to
algebraic and topological categories~\cite{l13,as21}, the Vietoris hyperspace monad on Stone
spaces~\cite{gpr20}, and probability monads such as the all-important Giry monad on
measurable spaces and many of its variants~\cite{a16,vb22,r20,s24}. The importance of such
codensity presentations is that they relate the given (fairly complex) monads to
structurally much simpler generating functors and endow those monads with a universal
property. This provides new insights into the structure of the monads themselves,
facilitating their use in theory and practice. For example, desirable properties of
probability monads such as commutativity or affinity, which are instrumental in the
synthetic approach to probability theory based on Markov categories~\cite{cj2019,fritz20} or effectuses~\cite{j18e},
and somewhat tedious to prove directly, can be derived in a principled manner from their
codensity presentation~\cite{s24}.

All the above codensity presentations are non-trivial results. Their proofs in the literature rely on a careful analysis of the structure of the respective monads and on domain-specific knowledge, such as advanced results from measure theory for probability monads. Overall, this leads to technically challenging and largely ad hoc proofs of codensity presentations.

In the present paper, we address this issue by developing a \emph{simple}, \emph{general} and
\emph{uniform} method to synthesize codensity presentations of monads, putting a common
umbrella over most known instances, and many more. The core insight underlying our
contribution is that
\enlargethispage*{10pt}
\[
  \textbf{Codensity Monads} \;=\; \textbf{Density} \;+\; \textbf{Duality}.
\]
More specifically, our approach rests on a simple but effective
observation: all `interesting' codensity presentations (where $\T$ is the codensity
monad of some functor $F$)
follow the pattern shown in the diagram below; that is, the functor $F$ decomposes as the pseudoinverse of a \emph{dual equivalence} $E$, the opposite of a
\emph{dense} functor $G$, and a \emph{contravariant right adjoint}~$R$
generating the monad $\T$; in symbols: $F\cong RG^\op E^{-1}$.
\begin{equation}\label{eq:codensity-sit-intro}
  \begin{tikzcd}[column sep=30]
    \C_{0}
    \rar{F} &
    \C
    \ar[out=30,in=-30,loop,looseness=5, "\T"]
    \ar[yshift=0pt, bend right=20,]{d}[swap]{L}
    \ar[yshift=0pt, bend left=20, leftarrow, swap]{d}[swap]{R}
    \ar[phantom]{d}{\dashv}
    \\
    \D_{0}^\op \ar{r}{G^\op}[swap]{\text{($G$ dense)}}
    \ar{u}{E}[swap]{\rotatebox{-90}{$\simeq$}}
    &
    \D^\op
  \end{tikzcd}
\end{equation}
For instance, the presentation of the ultrafilter monad as the
codensity monad of the inclusion functor $I\colon \Set_\f\subto\Set$ is captured by the
diagram below:
\begin{equation}\label{eq:codensity-sit-ultrafilter-intro}
  \begin{tikzcd}[column sep=20, row sep=20]
    \Set_\f
    \ar[hook]{r}{I} &
    \Set
    \ar[out=25,in=-30,loop,looseness=4, "\U",pos=0.48]
    \ar[yshift=0pt, bend right=20,]{d}[swap]{\Set(-,2)}
    \ar[yshift=0pt, bend left=20, leftarrow, swap]{d}[swap]{\BA(-,2)}
    \ar[phantom]{d}{\dashv}
    \\
    \BA_\f^\op \ar[hook]{r}{J^\op}
    \ar{u}{\rotatebox{90}{$\simeq$}}
    &
    \BA^\op
  \end{tikzcd}
\end{equation}
Here we use the familiar duality $\BA_\f^\op\simeq \Set_\f$ between finite Boolean algebras
and finite sets, and density of the inclusion functor $J\colon \BA_\f\subto \BA$ corresponds to the
standard fact that every Boolean algebra is a canonical colimit of finite
Boolean algebras.
Such a decomposition, which we call a \emph{codensity setting}, is all that is needed to get
a codensity presentation of the monad $\T$. Our main result (\Cref{thm:codensity}) states
that:

\smallskip
\centerline{
  \emph{In every codensity setting \eqref{eq:codensity-sit-intro}, the monad $\T$ is the codensity monad of the functor $F$.}
}

\smallskip
\noindent While not technically difficult, this theorem is the conceptual gist behind most
known codensity presentations of monads.  In fact, it can be applied in two different ways: to
characterize the codensity monad $\T$ of a given functor $F$, and conversely, to discover a
simple functor~$F$ presenting a given monad $\T$ as a codensity monad.  In both cases, it
simplifies proofs of codensity presentations by reducing a question regarding
\emph{codensity} to one about \emph{density}.  Formally these are of course dual concepts,
but in practice density is well-understood, especially in algebraic contexts, and typically
follows from general results, while codensity rarely occurs when working with `everyday'
categories of algebras or spaces.

We demonstrate the strength and scope of our main theorem by deriving codensity presentations for a wide variety of different monads.
Our applications not only cover most codensity presentations known in the literature, including all of the monads mentioned above~\cite{l13,as21,gpr20,vb22,r20,kg71,a16,s24}, but also several new examples, notably the first non-trivial codensity presentations for the filter monad on topological spaces, for the (lower) Vietoris hyperspace monad and the Stone-\v{C}ech compactification monad on topological spaces, and for the expectation monad on sets.
In all these cases, establishing the codensity presentation boils down to simply instantiating our theorem to a suitable codensity setting~\eqref{eq:codensity-sit-intro}. Much like in the setting \eqref{eq:codensity-sit-ultrafilter-intro} for the ultrafilter monad, the choice of the duality and of the dense functor is usually fairly obvious and suggested by standard results on the corresponding algebraic categories, and the chosen dualities are very basic ones between categories of finite algebras or categories of relations.

For all monads with a known codensity presentation covered in our paper, the principled and modular nature of our approach leads to proofs that are dramatically shorter and more transparent compared to those found in the original literature.
In fact, a lot of the complexity of the latter can be attributed to the fact that the authors (implicitly) rediscover the arguments underlying the duality and density results appearing in the respective instantiation of \eqref{eq:codensity-sit-intro}. By applying our categorical framework, we can instead appeal to well-known properties of the respective categories and get this work entirely for free. In this way, all our codensity presentations of (ultra)filter and Vietoris-type monads become an essentially straightforward instance of our main theorem. For probability monads like the Giry monad, the (only) non-straightforward part lies in identifying a suitable dual adjunction inducing the given monad, which requires representation theorems relating linear functionals to probability measures. This identifies precisely the part of the reasoning where measure theory is needed.

\paragraph*{Outline. } We recall some categorical preliminaries in \Cref{sec:preliminaries,sec:codensity-monads}.
In \Cref{sec:main-theorem} we state the abstract main theorem.
In Sections \ref{sec:simpl-codens-monads} to \ref{sec:probability-monads}
we spell out different classes of examples in detail. We conclude with a concrete application of the codensity presentation of the ultrafilter monad in \Cref{sec:appl}.

This paper is a revised and extended version of our STACS'26 conference
paper~\cite{LenkeEA26}. In addition to including detailed proofs, we have added some
new material:
\begin{itemize}
\item some remarks on pushforward monads and their relation to the presented codensity monads;
\item a generalized and simplified treatment of double-dualization monads (\Cref{sec:dd-monads});
\item several new codensity presentations of well-known monads such as the canonical extension monad on Boolean algebras (\Cref{sec:ultrafilter}) and the Stone-\v{C}ech compactification monad (\Cref{sec:stone-cech});
\item a treatment of additional codensity presentations known from the literature (median algebras in \Cref{sec:meas-medi-algebr} and ultraproducts in \Cref{sec:ultraproducts}) within our framework;
\item a new proof of Börger's characterization~\cite{b87} of the ultrafilter functor as an application of how to use codensity presentations (\Cref{sec:appl}).
\end{itemize}

%
%
%
%

\section{Categorical Background}
\label{sec:preliminaries}

We assume some familiarity with basic category theory~\cite{m98}, such as equivalence functors, (co)limits, adjunctions, and monads. In the following we introduce the notation used in the paper, and review the key concepts of (co)dense functors and codensity monads.

\paragraph{Monads.}\label{page:mnds}
Recall that a \emph{monad} $\T=(T,\eta,\mu)$ on a category $\C$ is given by an endofunctor $T$
on~$\C$ and two natural transformations, the 
\[\text{\emph{unit} }\eta\colon \Id_\C\to T\qquad\text{and}\qquad
\text{\emph{multiplication} }\mu\colon TT\to T,\] satisfying the usual unit and associative
laws. We denote by $\Kl(\T)$ the Kleisli category for $\T$; its objects are those of $\C$, and morphisms from $X$ to $Y$ are morphisms $f\colon X\to TY$ with the usual Kleisli composition. Moreover, we write $\EM(\T)$ for the \emph{Eilenberg-Moore} category for $\T$, the category of $\T$-algebras and their morphisms. The categories $\Kl(\T)$ and $\EM(\T)$ come with forgetful functors $U_\T\colon \Kl(\T)\to \C$, mapping $X\in \C$
to $TX$ and a Kleisli morphism $f\colon X\to TY$ to $f^\#=\mu_Y\circ Tf\colon TX\to
TY$, and $U^\T \colon \EM(\T) \rightarrow \C$ mapping an algebra to its carrier.
Moreover, there is a full embedding $I_\T\colon \Kl(\T)\subto \EM(\T)$ given by $X\mapsto
(TX,\mu_X)$ and $f\mapsto f^\#$ that identifies the Kleisli category $\Kl(\T)$ with the full
subcategory of $\EM(\T)$ given by free $\T$-algebras. For a monad $\T$ on $\Set$, the category
of sets and functions, we write $\Kl_\f(\T)$
for the full subcategory of $\Kl(\T)$ given by finite sets, or equivalently, the category of finitely generated free $\T$-algebras. We denote the domain restrictions of the above functors also by $U_\T\colon \Kl_\f(\T)\to \Set$ and $I_\T\colon \Kl_\f(\T)\subto \EM(\T)$.

\paragraph{Algebraic Categories.} One important class of monads are \emph{free-algebra
  monads} on $\Set$. Every algebraic theory $(\Sigma,E)$, specified by a signature a $\Sigma$ of
finitary operation symbols and a set $E$ of equations between $\Sigma$-terms, induces a monad
$\T$ on $\Set$ which assigns to each set~$X$ the free $(\Sigma,E)$-algebra generated by $X$,
carried by the set of $\Sigma$-terms modulo equations. The category $\EM(\T)$ is isomorphic to
the category of all $(\Sigma,E)$-algebras. The monad $\T$ is \emph{finitary} (i.e.~preserves directed colimits), and conversely, every finitary monad on $\Set$ is induced by some algebraic theory. In fact, \emph{all} monads on $\Set$ have an algebraic presentation by operations and equations, provided that large signatures and infinitary operations are admitted. For example, the category of algebras for the finite power set monad $\P_\f$ is isomorphic the category $\JSL$ of join-semilattices with bottom (equivalently, the category $\MSL$ of meet-semilattices with top). The full power set monad $\P$ yields the category $\CJsl$ of complete (semi)lattices and join-preserving maps. We tacitly identify isomorphic categories; for instance, the functor $I_{\P}\colon \Kl(\P)\subto \EM(\P)$ is identified with the functor $I_\P\colon \Kl(\P)\subto \CJsl$ sending a set $X$ to the free complete semilattice $\P X$. See Manes~\cite{Manes76} for more background on the relation between monads and algebraic theories.

\paragraph{Dual Adjunctions.}
Another natural source of monads are adjunctions. We use the notation
\(L \dashv R\colon \D\to \C\), or simply $L\dashv R$, for an adjunction where
$L\colon \C\to \D$ has a right adjoint $R\colon \D\to \C$. Every adjunction with unit \(\eta \colon \Id_\C \rightarrow RL\) and counit \(\varepsilon \colon LR \rightarrow \Id_\D\) induces a monad \((RL,\eta,R\varepsilon L)\) on \(\C\). We denote this monad by $RL$, leaving the structure implicit.

A \emph{dual adjunction} is an adjunction of type $L\dashv R\colon \D^\op\to \C$. Many dual
adjunctions of interest are given by \emph{dualizing objects}~\cite[Sec.~VI.4]{j82}, that is, $\C$ and $\D$ are concrete categories (with respective forgetful functors $|{-}|$ to $\Set$), and there are objects $S\in \C$ and $T\in \D$ with
\[ |S|=|T|,\qquad |LX| \cong \C(X,S), \qquad\text{and}\qquad |RY| \cong \D(Y,T). \]
Then the induced monad $RL$ is given by a `double hom-set' construction. For example, the dual adjunction 
$\Set(-,2)\dashv \Set(-,2)\colon \Set^\op\to \Set$ where $2=\{0,1\}$ yields the \emph{neighbourhood monad} on $\Set$. Similarly, by reading $2$ as a Boolean algebra, we obtain a dual adjunction $\Set(-,2)\dashv \BA(-,2)\colon \BA^\op\to \Set$ between the categories of sets and Boolean algebras that induces the \emph{ultrafilter monad}.
Both monads are studied in detail in \Cref{sec:simpl-codens-monads}.

\begin{rem}
  It is no coincidence that the neighbourhood monad and the ultrafilter monad look so similar:
  \emph{every} dual adjunction of type  $L\dashv R\colon \D^\op\to \Set$ is induced by a dualizing object.
  To see this, note that the contravariant left adjoint $L$ preserves colimits, thus sends copowers in $\Set$ to powers in $\D$.
  Since every set $X \in \Set$ is a copower $X \cdot 1$ of $1$, we have
  \[LX \cong L(X \cdot 1) \cong X \power L1,\]
  so $L \cong (-) \power D \colon \Set \rightarrow \D^\op$ for an object $D = L1 \in \D$.
  This determines $R \cong \D(-, D)$, since $(-) \power D \dashv \D(-, D)$, and therefore
  $RL \cong \D((-) \power D, D)$.
\end{rem}

%
%
%
%
%
%
%

\paragraph{Kan Extensions.}
 We let $[\C,\D]$ denote the possibly superlarge category of all functors between
    two categories $\C$ and $\D$ and natural transformations between them. The \emph{right Kan extension} of a functor \(F \colon \A \rightarrow \C\) along a functor \(J \colon \A \rightarrow \B\) is given by a functor \(\Ran_{J} F \colon \B \rightarrow \C\) and a bijection
  \begin{equation}
    \label{eq:ran-prop}
    [\B,\C](G, \Ran_{J}F) \cong [\A,\C](GJ, F)
  \end{equation}
  natural in \(G\). 
  Concretely, this means that there is a natural transformation \(\varphi \colon (\Ran_J F) J
  \rightarrow F \)
  (called the \emph{counit}) such that every natural transformation \(\alpha \colon GJ
  \rightarrow F\) factorizes as
  \begin{align}\label{eq:ran-prop-explicit}
    \alpha
    =
    (\, GJ \xto{\hat\alpha J} (\Ran _J F) J \xto{\varphi} F\,)\qquad \text{ for some unique \(\hat{\alpha} \colon G \rightarrow \Ran_J F\)}.
	\end{align} 
If the limit below exists for all $X\in \B$, the right Kan extension is given by 
	\begin{equation}\label{eq:ran}
		(\Ran_{J}F)X = \lim_{f \colon X \rightarrow JA} FA.
	\end{equation}
This holds, for instance, if the category $\A$ is small and $\C$ is complete. Right Kan
extensions of this type are called \emph{pointwise}. All Kan extensions 
in our applications are pointwise.

A functor $H \colon \C \rightarrow \D$ \emph{preserves} the right Kan extension $\Ran_{J}F$ of $F$ along $J$ if $H \cdot \Ran_J F$ is the right Kan extension of $HF$  along $J$:
\[
  \begin{tikzcd}
    \A \dar[swap]{J} \ar[pos=.75,phantom]{drr}{\cong} \ar{rr}{F} & & \C \dar{H} \\
    \B \ar[outer sep=-2pt]{urr}{\Ran_J F} \ar[swap,dashed]{rr}{\Ran_J(HF)} & & \D
  \end{tikzcd}
\]
%

%

%
%
%

\paragraph{Dense Functors.}
Given a  functor \(F \colon \A \rightarrow \B\) and \(B \in \B\), the  \emph{slice category} \(F \slice B\) has as objects all morphisms \(f \colon FA \rightarrow B\) for \(A \in \A\), and a morphism from $(f\colon FA\to B)$ to $(g\colon FA'\to B)$ is a morphism $h\colon A\to A'$ of $\A$ with \(f = g \cdot Fh\).
There is an obvious projection
\[\pi_{B} \colon F \slice B \rightarrow \A, \qquad (f \colon FA\rightarrow B) \mapsto A.\]
The functor $F$ is \emph{dense}~\cite{gu71} if the following equivalent conditions are satisfied:
\begin{enumerate}
\item Each $B\in \B$ is the colimit of the (possibly large) diagram
  \[
    F\slice B \xra{\pi_B} \A \xra{F} \B,
  \]
  with the colimit injections $f\colon FA \to B$ ($f\in F\slice B$).
  \item The functor $\B\to [\A^\op,\Set]$ given by $B\mapsto \A(F-,B)$ is fully
    faithful.
  \item The left Kan extension $\Lan_F F \colon \B \rightarrow \B$ (dual to the notion of right Kan extension) of $F$ along itself exists, is pointwise and isomorphic to the identity $\Id_\B$.
\end{enumerate}
A subcategory $\A\subto \B$ is \emph{dense} if its inclusion functor is dense. Informally, this states that each object of $\B$ can be canonically constructed from objects of $\A$.

\begin{exa}\label{ex:dense-set}
   The full subcategory $1 \incl \Set$ on the singleton is dense in $\Set$: every set is a coproduct of singletons.
\end{exa}

\begin{exa}\label{ex:dense-yoneda}
For every small category $\C$, the Yoneda embedding $\tilde{y}\colon \C^\op\subto [\C,\Set]$ given by $C\mapsto \C(C,-)$ is dense~\cite[Sec.~III.7, Thm.~1]{m98}.
\end{exa}

The example above is particularly powerful when combined with the following result:
\begin{thmC}[{\cite[Thm 5.13]{k82}}]\label{lem:dense-cancel}
  Let $G \colon \B \rightarrow \C$ be fully faithful.
  For every functor $F \colon \A \rightarrow \B$, if $GF$ is dense then so are $F$ and $G$.
\end{thmC}

\begin{exa}\label{ex:dense-alg}
In most of our applications, we will consider dense full subcategories of $\EM(\T)$ for a monad $\T$ on $\Set$. There are several generic choices of such subcategories:%
\begin{enumerate}
\item\label{ex:dense-alg:kleisli} The full subcategory $I_\T\colon \Kl(\T)\subto \EM(\T)$ of
        free algebras is dense.
        This is well-known; a quick argument is as follows:
        By~\cite[Theorem 14]{s72} the category $\EM(\T)$ is isomorphic to a full subcategory of $[\Kl(\T)^\op, \Set]$ that contains all representables, i.e.\ the Yoneda embedding factors as $\Kl(\T) \incl \EM(\T) \incl [\Kl(\T)^\op, \Set]$.
        Since this composite is dense by \Cref{ex:dense-yoneda} the statement follows by \Cref{lem:dense-cancel}.
\item\label{ex:dense-alg:kleisli-finitary} If $\T$ is finitary, the full subcategory
  $I_\T\colon \Kl_\f(\T)\subto \EM(\T)$ of finitely generated free algebras is also
  dense; see e.g.~\cite[Lem.~4.2]{arv10} for the corresponding result on Lawvere theories, which are equivalent to finitary monads on $\Set$. Moreover, if there is an algebraic theory presenting the monad $\T$ with an upper bound $n>0$ to the arities of operations, the one-object full subcategory $\{\T n\}\subto \EM(\T)$ given by the free algebra on $n$ generators is dense~\cite[Sec.~2.2]{isbell60}.
\item\label{ex:dense-alg:fp} If $\T$ is finitary, another dense full subcategory
  $(\EM(\T))_\fp\subto \EM(\T)$ is given by the finitely presentable algebras (i.e.~algebras
  presentable by finitely many generators and relations). This follows from the fact that
  $\EM(\T)$ is locally finitely presentable~\cite[Cor.~3.7]{ar94} and that in any such
  category the finitely presentable objects form a dense
  subcategory~\cite[Ex.~1.24.1]{ar94}. Note that in case the monad $\T$ preserves
  finite sets, $(\EM(\T))_\fp$ coincides with the full subcategory $(\EM(\T))_\f\subto
        \EM(\T)$ of all finite algebras.
        This is for example the case for sets ($\Set \simeq \EM(\Id)$) and join-semilattices ($\JSL \simeq \EM(\Pfin)$).
\end{enumerate}
\end{exa}

Dual to density, we have the notion of a \emph{codense} functor $F\colon \A\to \B$: every object $B\in \B$ is the limit of the canonical diagram of all morphisms $f\colon B\to FA$ with $A\in \A$.

\begin{nota}
	For \(F\) and \(B\) as above and \(G \colon \A \rightarrow \C\), we write
	\[\colim_{\substack{f \colon FA \rightarrow B \\ A \in \A}} GA \qquad\text{or simply}\qquad 
\colim_{\substack{f \colon FA \rightarrow B}} GA\] for the colimit of the diagram $(F \slice B) \xra{\pi_B} \A \xra{G} \C$; similarly for limits.
\end{nota}

\section{Codensity Monads}
\label{sec:codensity-monads}

We recall the definition and some basic facts about codensity monads. The notion was originally
introduced by Kock~\cite{kock66} who gave it its name. Later, Applegate and Tierney
independently defined the dual concept~\cite{ApplegateT69}, calling it \emph{model-induced cotriple}.

\begin{defi}[Codensity Monad]
  The \emph{codensity monad} of a functor \(F\colon \A \to \C \) is the monad \[\Cody{F}=(\Ran_F F, \eta, \mu)\]
  where $\Ran_F F$ is the right Kan extension of $F$ along itself (if it exists), and the unit and multiplication 
  \[\eta\colon \Id\to \Ran_F F \qquad \text{and}\qquad \mu\colon (\Ran_F F)(\Ran_F F) \to \Ran_F F\] are the natural transformations corresponding to the natural transformations below, where $\varphi$ denotes the counit of $\Ran_F F$:
  \[ F\xto{\id} F \qquad\text{and}\qquad  (\Ran_F F)(\Ran_F F) F \xto{(\Ran_F F) \varphi} (\Ran_F F) F \xto{\varphi} F. \]
\end{defi}
We say that $\Cody{F}$ is \emph{pointwise}, if so is the right Kan extension $\Ran_F F$. In
this case,
\[\Cody F X = \lim_{f \colon X \rightarrow FA} FA.\] 
Moreover, the action of $\Cody{F}$ on a morphism $h\colon X'\to X$ and the unit and multiplication at $X\in \C$ are uniquely determined by the commutative diagrams below, where $f$ ranges over all $f\colon X\to FA$ with $A\in \A$ and $\pi_f\colon \Cody{F} X \to FA$ is the corresponding limit projection:
\[
\begin{tikzcd}[column sep=40]
\Cody{F} X' \ar{dr}[swap]{\pi_{f\circ h}} \ar{r}{\Cody{F} h} & \Cody{F} X \ar{d}{\pi_f} \\
& FA 
\end{tikzcd}
\qquad 
\begin{tikzcd}[column sep=30]
X \ar{dr}[swap]{f} \ar{r}{\eta_X} & \Cody{F} X \ar{d}{\pi_f} & \Cody{F}\Cody{F} X \ar{l}[swap]{\mu_X} \ar{dl}{\pi_{\pi_f}} \\
& FA 
\end{tikzcd}
\]
\begin{defi}[Codensity Presentation]
  A \emph{codensity presentation} of a monad $\T$ on $\C$ is a functor $F\colon \A\to \C$
  such that $\T$ is the codensity monad of $F$; in symbols: \[ \T\cong \Cody F.\]
\end{defi}
The definition of the codensity monad as a right Kan extension endows it with a universal property as a functor. It also has a universal property as a \emph{monad}, due to Street~\cite{s72} (in fact, he defines codensity monads in a 2-categorical setting via this universal property). Recall that an \emph{$\mathcal{S}$-module over $\A$} for a monad $\mathcal{S}$ on $\C$ is an algebra for the monad $\mathcal{S}_*$ on
$[\A, \C]$ given by postcomposition: $\mathcal{S}_*(F) = SF$. Thus an $\mathcal{S}$-module is given by a functor $F\colon \A\to \C$ together with a natural transformation $\alpha\colon SF\to F$ satisfying the associative and unit laws. In particular, $\mathcal{S}$-modules over $\A=1$ (the terminal category) are just $\mathcal{S}$-algebras.
\begin{thmC}[{\cite{s72}}]\label{prop:monad-ext-univ-prop}
    Let $F \colon \A \rightarrow \C$ be a functor whose codensity monad exists, and let
    $\mathcal{S} = (S, \eta, \mu)$ be a monad on $\C$.
    There exists an isomorphism, natural in $\mathcal{S}$,
    between $\mathcal{S}$-module structures on $F$ and monad morphisms $\mathcal{S} \rightarrow \Cody F$.
  \end{thmC}
\begin{rem}\label{R:cody}
\begin{enumerate}
\item The codensity monad may be seen as a measure of codensity of the functor \(F\); indeed,~$F$
is codense if and only if its codensity monad is pointwise and the identity monad.
\item\label{R:cody:2} The construction of codensity monads generalizes the
  construction of monads from adjunctions: If \(R\) has a left adjoint \(L\), then the codensity monad of \(R\) exists and is isomorphic to the induced monad \(RL\)~\cite{l13}. 
 In particular, since every monad arises from an adjunction, every monad is a codensity
 monad. Hence the challenge is not to find \emph{some} codensity presentation of a given monad, but rather to find a {simple} and natural~one.
\end{enumerate}
\end{rem}

  Codensity monads are an instance of a more general concept that extends arbitrary monads along a given functor, known in the literature under the name of \emph{monad extension}~\cite{s72} or
    \emph{pushforward}~\cite{m25}
  \begin{defi}[Pushforward]
    The \emph{pushforward} of a monad $\mathcal{T}$ on $\A$ along a functor $F \colon \A \rightarrow \C$ is the monad $F_{\#} \mathcal{T}$ on $\C$ given by the
    right Kan extension $F_{\#}T = \Ran_{F}FT$ of $FT$ along $F$ (if it exists), and unit and multiplication analogous to the codensity monad.
    \[
      \begin{tikzcd}
        \A
        \ar{r}{T}
        \ar{d}[swap]{F}
        &
        \A
        \ar{d}{F}
        \\
        \C
        \ar[dashed]{r}{F_{\#}T}
        &
        \C
      \end{tikzcd}
    \]
  \end{defi}
  The codensity monad of $F$ is thus the pushforward $F_{\#} \Id$ of the identity monad  on $\A$.
  \begin{rem}\label{rem:monad-ext}
    In this paper we just focus on codensity monads.
    While this might seem like a restriction, it formally is not: As observed by Do\~{n}a Mateo \cite[Cor.~1.11]{m25}, the pushforward $F_{\#} \T$ is
    isomorphic to the codensity monad of $\EM(\T) \xra{U^\T} \A \xra{F}\C$.
    In fact, his argument works not just for $U^\T$ but all right adjoints $U \colon \B \rightarrow \A$ whose induced monad is isomorphic to $\T$:
    \[\Ran_{FU}FU \cong \Ran_{F}(\Ran_U FU) \cong \Ran_{F}(F\,\Ran_U U) = \Ran_F(F\, \Cody{U}) \cong \Ran_FFT = F_{\#}\T.\]
    In the second step we use that right Kan extensions along right adjoints are \emph{absolute}, that is, preserved by any functor.

    We will sometimes encounter codensity presentations of the form~$FU^\T$, so it makes
    sense to think of them as `pushforward presentations'.
    Pushforward presentations often give a conceptually clearer
    description than a codensity presentation of a monad (compare e.g.~the codensity and pushforward presentations of the filter monad on $\Set$ in \Cref{sec:filter}).
  \end{rem}

\section{Codensity Monads = Density + Duality}
\label{sec:main-theorem}

In this section we present our core technical result, a simple and general criterion for a monad induced by a given dual adjunction to be the codensity monad of a given functor.
The following setting generalizes a common pattern: a dual adjunction between two categories restricting to an adjoint equivalence of certain subcategories.

\begin{defi}[Codensity Setting]\label{def:codensity-sit}
A \emph{codensity setting} is given by categories and functors as shown in diagram \eqref{eq:codensity-sit}, where
(1)~$L$ is left adjoint to $R$, (2)~$E$ is an equivalence, (3)~$G$ is dense, and~(4)~the outside commutes up to natural isomorphism ($RG^\op\cong FE$).
\begin{equation}\label{eq:codensity-sit}
  \begin{tikzcd}
    \C_{0}
    \rar{F} &
    \C
    \ar[yshift=0pt, bend right=20,]{d}[swap]{L}
    \ar[yshift=0pt, bend left=20, leftarrow, swap]{d}[swap]{R}
    \ar[phantom]{d}{\dashv}
    \\
    \D_{0}^\op \rar{G^\op}
    \ar{u}{E}[swap]{\rotatebox{-90}{$\simeq$}}
    &
    \D^\op
  \end{tikzcd}
  \end{equation}
\end{defi}

The dual adjunction $L\dashv R$ of a codensity setting induces a monad on the category~$\C$, and the decomposition of
the functor $F$ is enough to deduce that $F$ is a codensity presentation of that
monad:

\begin{thm}\label{thm:codensity}
  In every codensity setting \eqref{eq:codensity-sit}, the codensity monad of the functor
  $F$ exists, is pointwise, and is isomorphic to the monad induced by the adjunction
  $L\dashv R$:
  \begin{align*}\label{eq:codensity}
    RL \cong \Cody{F}.
  \end{align*}
\end{thm}

\begin{proof}
  \takeout{
  The theorem is essentially a rephrasing of the following fact due to Mateo:
  \begin{notheorembrackets}
    \begin{thmC}[{\cite[Prop.\ 2.15]{m25}}]\label{prop:mateo}
      Let $R \colon\C \rightarrow \D$ and $H \colon \A \rightarrow \C$ be functors such that $R$ is a right adjoint and $H$ is codense.
      Then $RH$ and $R$ have isomorphic codensity monads: $\Cody{RH} \cong \Cody{R}$.
    \end{thmC}
  \end{notheorembrackets}
  The theorem now simply follows as
  \[\Cody{F} \cong \Cody{RG^\op E} \cong \Cody{R} \cong RL,\]
  since $G^\op E$ is \emph{co}dense and the codensity monad of $R$ is $RL$, see~\iref{R:cody}{2}.
  
  To get an intuition for concrete codensity monads given by pointwise right Kan extensions,
  it is instructive to spell out the isomorphism:}
  We have the following isomorphisms that are natural in $X\in \C$:
    \begin{align*}
      RLX ~&\cong~R(\colim_{{f\colon GD\to LX}} GD) && \text{$G$ dense} \\
           &\cong~\lim_{{f\colon GD\to LX}} RGD && \text{right adjoints preserve limits}  \\
           &\cong~\lim_{{g\colon X\to RGD}} RGD && L\dashv R\colon \D^\op\to \C  \\
           &\cong~\lim_{{g\colon X\to FED}} FED && RG^\op\cong FE  \\
           &\cong~\lim_{{g\colon X\to FC}} FC && \text{$E$ equivalence functor} \\
           &\cong~(\Ran_F F)X  && \text{limit formula for~$\Ran$}.
    \end{align*}

    We verify that this natural isomorphism \(\omega\colon RL\cong \Cody{F}\) is an
    isomorphism of monads: it preserves the unit and multiplication. In the following, we
    denote by $\eta$ and $\mu$ the unit and multiplication of the monad $RL$, and $\eta^F$
    and $\mu^F$ denote those of $\Cody{F}$.

To show preservation of the unit, for every \(g \colon X \rightarrow FC\) we take some \(D\) with \(ED \cong C\), and let \(f\colon GD\to LX\) be the adjoint transpose of
\[\hat{f}\colon X \rightarrow FC \cong FED \cong RGD. \]
The diagram below then proves that $\omega$ preserves the unit: the outer path commutes, since it commutes when postcomposed with the projections \(\pi_g\) of the limit cone defining \(\Cody F\).
\newcommand{\ds}{\displaystyle}
\[
  \begin{tikzcd}[column sep=6, outer sep=0]
    RLX \rar[phantom]{\cong} \drar[near end]{Rf} & \ds{R(\colim_{f \colon GD \rightarrow LX} GD)} \rar[phantom]{\cong} \dar{R \kappa_f} & \ds{R(\colim_{\hat{f} \colon X \rightarrow RGD} GD)} \rar[phantom]{\cong} \dar{R \kappa_{\hat{f}}} & \ds{\lim_{\hat{f} \colon X \rightarrow RGD} RGD} \rar[phantom]{\cong} \dlar[swap]{\pi_{\hat{f}}}
    & \ds{\lim_{g \colon X \rightarrow FC} FC} \dlar[swap]{\pi_g} \\
    & RGD \rar[equals]{} & RGD \rar[equals]{\cong} & FC & \\
    & & X \ular[swap]{\hat{f}} \ar{ur}{g} \ar[bend left=26]{uull}{\eta} \ar[bend right=20, swap]{uurr}{\eta^F} & &
  \end{tikzcd}
\]
Here $\kappa_f$ are the coprojections of the canonical colimit. Furthermore, note that the
upper half without its right-hand part yields
\begin{equation}\label{eq:aux}
  R\hat g = \big( RLX \xra{\omega} \lim_{X \to RGD} RGD \xra{\pi_g} RGD\big)
  \qquad
  \text{for every $g\colon X \to RGD$}.
\end{equation}
(That is, we take this part for $f = \hat g\colon GD\to LX$.)

For the multiplication,
we simplify \[\Cody F \cong \lim_{\substack{X \rightarrow RGD \\ D \in \D_0}} RGD\] to make the diagram more digestible.
Given \(g \colon X \rightarrow RGD\) note that the adjoint transpose \(\widehat{\pi_g}\) of \(\pi_g \colon \Cody F X \rightarrow RGD\) is given by
\begin{equation}
  \label{eq:pi_g}
\widehat{\pi_g}  = \big( GD \xra{\varepsilon_{GD}} LRGD \xra{L \pi_g} L \Cody F X\big).
\end{equation}
Therefore, we have the following commutative diagram:
\[
  \begin{tikzcd}
    RLRLX \ar{rr}{RL \omega} \dar{R \varepsilon L} \drar{RLR \hat{g}} & & \ds{RL \lim_{X \rightarrow RGD} RGD} \dlar{RL \pi_g} \dar[phantom]{\rotatebox{90}{\(\cong\)}} \\
    RLX \dar{\omega} \drar{R \hat g} &  RLRGD \dar{R \varepsilon} & RL \Cody F X \dar[phantom]{\rotatebox{90}{\(\cong\)}} \dlar[swap]{R \widehat{\pi_g}} \\
    \ds{\lim_{X \to RGD} RGD } \rar{\pi_g} & RGD & \ds{R \colim_{GD \to L \Cody F X} GD } \lar[swap, near start]{R \kappa_{\hat{\pi}_g}} \dlar[swap, xshift=20pt, equals]{\rotatebox{30}{$\sim$}}  \\
    & \ds{\lim_{\Cody F X \to RGD} \lim RGD} \ular{\mu^F} \uar{\pi_{\pi_g}} &
  \end{tikzcd}
\]
The upper triangle commutes, since it is \eqref{eq:aux} under $RL$; similarly, the middle
left-hand triangle is \eqref{eq:aux}. The right trapezoid commutes by \eqref{eq:pi_g}, the left trapezoid is naturality of \(\varepsilon\), and the rest are just properties of (co)limit (co)projections.
Since the multiplication \(\mu\) of \(RL\) is \(R \varepsilon L\), this states precisely that \(\omega \cdot \mu = \mu^F \cdot (\omega \circ \omega)\).
\end{proof}

\begin{rem}\label{rem:dona-mateo}
   Modulo the fact that $\Cody{F}$ is pointwise, \Cref{thm:codensity} is a consequence of 
   a recent result due to Do\~na Mateo~\cite[Prop.\ 2.15]{m25}: for a right adjoint $R
   \colon\C \rightarrow \D$ and a codense functor $H \colon \A \rightarrow \C$, the
   codensity monads of $RH$ and $R$ are isomorphic. 

  Indeed, given a codensity setting \eqref{eq:codensity-sit}, take a pseudoinverse $E'\colon \C_0 \to \D_0^\op$ of $E$,
   use that $H=G^\op E'$ is \emph{co}dense (since $G^\op$ is codense and $E'$ is an equivalence functor) and that the codensity monad of $R$ is $RL$
   (\iref{R:cody}{2}) to obtain
   \[
     \Cody{F} \cong \Cody{RG^\op E'} \cong \Cody{RH} \cong \Cody{R} \cong RL.
   \]
The key insight behind our \Cref{thm:codensity} is that it suggests a simple and concrete recipe to construct the codense functor $H$, namely as the composite of a dual equivalence with the opposite of a dense functor.  Our applications in the subsequent sections show that this duality-theoretic refinement of  Do\~na Mateo's result can be very powerful in practice.
\end{rem}

\begin{rem}\label{rem:abstract-codensity-situation}
  \begin{enumerate}
    \item\label{rem:abstract-codensity-situation:adj} If $\C_0$ is small and $\C$ is complete then
          the codensity monad of any functor $F \colon \C_0 \rightarrow \C$ has a codensity setting: $\Cody F$ is induced by the  dual adjunction with the functor category $[\C_0, \Set]$
          \begin{equation}\label{eq:abs-sit}
            \begin{tikzcd}
              \C_{0}
              \rar{F} &
              \C
              \ar[yshift=0pt, bend right=20,]{d}[swap]{N_F}
              \ar[yshift=0pt, bend left=20, leftarrow, swap]{d}[swap]{R_F}
              \ar[phantom]{d}{\dashv}
              \ar[out=20,in=-20,loop,looseness=5, "\Cody{F}",pos=0.49]
              \\
              (\C_{0}^\op)^\op \rar{y^\op}
              \ar[equals]{u}{}
              &
              {[\C_0, \Set]^\op},
            \end{tikzcd}
          \end{equation}
          where $N_F C = \C(C, F(-))$ is the \emph{conerve} and its right adjoint $R$ is \emph{totalization} given by the right Kan extension $\Ran_{\tilde{y}}F$ of $F$ along the contravariant Yoneda embedding $\tilde{y} \colon \C_0 \rightarrow [\C_0, \Set]^\op$.
          The Yoneda embedding $y \colon \C_0 \rightarrow [\C_0^\op, \Set]$ is dense, and the adjunction induces $\Cody F$ by definition~\cite[Section 2]{l13}.

    \item Di Liberti~\cite{dl20} showed that if one additionally assumes density of $F$ and cocompleteness of $\C$, then the adjunction~\eqref{eq:abs-sit} factorizes through Isbell's dual adjunction between presheaves and copresheaves over $\C_0$ (cf.\ \Cref{sec:isbell-duality}).
          This result is conceptually useful for the investigation of algebras of such codensity monads, yet does not appear to simplify finding concrete codensity presentations.
          In addition, for a wide class of codensity presentations, such as for the probability monads considered in \Cref{sec:probability-monads}, the density assumption is not satisfied.
    \item\label{rem:abstract-codensity-situation:compositional} Codensity settings are compositional under right adjoints~\cite[Lem.~1.9]{m25}: If $U \colon \C \rightarrow \C'$ is another right adjoint with left adjoint $P$ then \[\Cody{UF} \cong U_{\#} \Cody F \cong URLP \cong U \Cody F P.\]
  \end{enumerate}
\end{rem}

In the following sections, we will illustrate the wide scope of \Cref{thm:codensity} and use it to derive codensity presentations for a number of popular monads.
To show that a given monad~$\T$ on~$\C$ is the codensity monad of a functor $F\colon \C_0\to \C$, we employ a uniform recipe: 
\begin{enumerate}
\item Identify a suitable dual adjunction $L\dashv R\colon \D^\op\to \C$ inducing the monad $\T$.
\item Extend the functor $F$ and the adjunction $L\dashv R$ to a codensity setting~\eqref{eq:codensity-sit}.
\item Apply \Cref{thm:codensity} to conclude $\T\cong \Cody{F}$.
\end{enumerate}

In all our applications, the crucial (and sometimes non-trivial) step is the choice of
the dual adjunction $L\dashv R$ in Step~(1). Then Step~(2) is in most cases straightforward:
The functor~$E$ in \eqref{eq:codensity-sit} is given by some simple dual equivalence $\D_0^\op\simeq \C_0$ known in the literature, and the likewise density of the functor $G$ amounts to some standard property of the (typically algebraic) category $\D$, like the ones from~\Cref{ex:dense-alg}.

\section{Ultrafilter and Double Dualization Monads}
\label{sec:simpl-codens-monads}

The first class of codensity monads that we cover as instances of \Cref{thm:codensity} are (ultra)filter monads and their close relatives, double dualization monads~\cite{l13,as21,dl20,kg71}.

\subsection{Ultrafilter Monads}\label{sec:ultrafilter}
We start with what is probably the best known instance of a codensity monad: the characterization of the ultrafilter monad on $\Set$ as the codensity monad of the inclusion $\Set_\f\subto \Set$ of finite sets into sets. This classical result goes back to Kennison and Gildenhuys~\cite{kg71}; see also Leinster~\cite{l13} for a streamlined exposition.

Let $\BA$ denote the category of Boolean algebras. An \emph{ultrafilter} on a Boolean
algebra~$B$ is a non-empty subset $U\subseteq B$ such that (1)~$U$ is upwards closed,  (2)~$U$ is
closed under meet, and (3)~for every $b\in B$, either $b\in U$ or $\neg b\in
U$. Equivalently, an ultrafilter is given by a morphism \(\chi \in \BA(B, 2)\), where
$2=\{0,1\}$ is the two-element Boolean algebra, by identifying $\chi$ with the preimage
$\chi^{-1}(1)\subseteq B$. An \emph{ultrafilter} on a set $X$ is an ultrafilter on the
Boolean algebra $\Set(X,2)\cong \P X$ of predicates (equivalently subsets) of $X$. The
\emph{ultrafilter monad} $\U$ on $\Set$ sends a set $X$ to the set of
$\U X = \BA(\Set(X,2),2)$ of its ultrafilters; more precisely, $\U$ is the monad induced by the adjunction in \eqref{eq:codensity-setting-ultrafilter} below.
The category $\EM(\U)$ of algebras for $\U$ is isomorphic to the category of compact Hausdorff spaces and continuous maps~\cite[1.5.24--33]{Manes76}; hence the ultrafilter monad provides a bridge between algebra, Boolean logic, and topology.

\noindent The ultrafilter monad $\U$ is captured by codensity setting~\eqref{eq:codensity-setting-ultrafilter} shown below:
\begin{equation}\label{eq:codensity-setting-ultrafilter} 
\begin{tikzcd}[column sep=25, row sep=25]
    \Set_\f
    \ar[hook]{r}{I} &
    \Set
    \ar[out=20,in=-20,loop,looseness=3, "\U",pos=0.49]
    \ar[yshift=0pt, bend right=20,]{d}[swap]{\Set(-,2)}
    \ar[yshift=0pt, bend left=20, leftarrow, swap]{d}[swap]{\BA(-,2)}
    \ar[phantom]{d}{\dashv}
    \\
    \BA_\f^\op \ar[hook]{r}{J^\op}
    \ar{u}{\BA(-,2)}[swap]{\rotatebox{-90}{$\simeq$}}
    &
    \BA^\op
  \end{tikzcd}
\end{equation}

Here $I$ and $J$ are the inclusions of the full subcategories of finite sets and
finite Boolean algebras, respectively, and the equivalence functor on the left
is the restriction of \emph{Birkhoff duality}~\cite{b37}  between distributive lattices and finite
posets to finite sets (discrete posets). The functor~$J$ is dense by \iref{ex:dense-alg}{fp}; note that since the free Boolean algebra $2^{2^X}$ on a finite set $X$ is finite, we have $\BA_\fp=\BA_\f$. From
\Cref{thm:codensity} we obtain the following theorem due to Kennison and Gildenhuys:

\begin{thmC}[\cite{kg71}]\label{thm:codensity-ultrafilter}
The ultrafilter monad $\U$ on $\Set$ is the codensity monad of the inclusion $\Set_\f\hookrightarrow \Set$.
\end{thmC}
In \Cref{sec:appl} we show how to use this presentation of the ultrafilter monad to
recover Börger's characterization~\cite{b87} of the ultrafilter functor.

\begin{rem}\label{rem:codensity-ultrafilter-3}
  Leinster~\cite{l13} noted that it suffices to restrict $I \colon \Set_\f \incl \Set$ to any full subcategory $\C \subseteq \Set_\f$ containing a set with at least three elements.
  The duality $\Set_\f \simeq \BA_\f^\op$ underlying the codensity setting \eqref{eq:codensity-setting-ultrafilter} gives a simple explanation for this fact:
  If the sets in $\C$ have at most two elements, their dual
  Boolean algebras $2^1 = 2^{2^0}$ and $2^2 = 2^{2^1}$  are free with zero and one generators, respectively.
  However, a dense subcategory of $\BA$ requires a free algebra with at least two generators
  (\iref{ex:dense-alg}{kleisli-finitary});
  for example, one can show that for the inclusion $K \colon \{2^1, 2^2\} \incl \BA$ the canonical colimit $\colim_{KA \rightarrow 2^3}2^3$ is isomorphic to the free algebra $2^{2^3}$ on three generators, rather than to the algebra $2^3$.
\end{rem}

Next, in the codensity setting~\eqref{eq:codensity-setting-ultrafilter} we may reverse the roles of $\Set$ and $\BA$:
\begin{equation}\label{eq:codensity-setting-can-ext} 
\begin{tikzcd}[column sep=25, row sep=25]
    \BA_\f \ar[hook]{r}{J}
    & \BA
    \ar[out=20,in=-20,loop,looseness=3, "\mathcal{C}",pos=0.49]
    \ar[yshift=0pt, bend right=20,]{d}[swap]{\BA(-,2)}
    \ar[yshift=0pt, bend left=20, leftarrow, swap]{d}[swap]{\Set(-,2)}
    \ar[phantom]{d}{\dashv}
    \\
    \Set_\f^\op
    \ar{u}{\Set(-,2)}[swap]{\rotatebox{-90}{$\simeq$}}
    \ar[hook]{r}{I^\op} 
    & \Set^\op
  \end{tikzcd}
\end{equation}

  The yields another codensity setting since the inclusion $\Set_\f \incl \Set$  is dense and the outside of
  \eqref{eq:codensity-setting-can-ext} commutes.
  From \Cref{thm:codensity} we obtain:
  \begin{thm}\label{thm:canonical-extension}
    The monad $\mathcal{C} = \Set(\BA(-, 2), 2)$ on \BA is the codensity monad of the inclusion $J \colon \BA_\f \incl \BA$.
  \end{thm}
  \begin{rem}
    For a Boolean algebra $A$, the Boolean algebra $\mathcal{C} A$ is the \emph{canonical
      extension} of~$A$, which is conceptually important for applications in logic and
    duality theory in general; for an introduction to canonical extensions see~\cite{gv10}.
    Note that $\mathcal{C} A$ is a complete atomic Boolean algebra. Hence, it may be seen as
    a \emph{compactification} of $A$, since the category $\Caba$ of all complete atomic
    Boolean algebras is equivalent to the category $\cat{CHausBA}$ of compact Hausdorff
    Boolean algebras (i.e.~Boolean algebras carried by a compact Hausdorff topological space
    with continuous operations)~\cite[Cor.~4.11]{j82}. So if we regard $\mathcal{C} A$ as a
    complete atomic Boolean algebra, then we obtain a  left adjoint to the forgetful functor
    \[
      \Caba \simeq \cat{CHausBA} \rightarrow \BA.
    \]
    Indeed, we have a natural bijection
    \begin{equation*}
      \Caba(\mathcal{C} A, 2^X) \cong \Set(X, \BA(A, 2)) \cong \BA(A, 2^X)
      \ \ 
      \text{for every $A \in \BA$ and $2^X \in \Caba$},
    \end{equation*}
    where we use that $2^X$ is the $X$-fold power of $2$ in $\BA$.

    For a more detailed investigation of algebras for codensity monads in connection with compact Hausdorff algebras, see~\cite{d16,dl20}.
  \end{rem}

  A topological version of \cref{thm:codensity-ultrafilter} was given by
  Sipo\c{s}~\cite{s18} for the ultrafilter monad~$\ol{\U}$ on the category $\Top$ of
  topological spaces and continuous maps.  It assigns to a space \(X\) the space
  \(\ol \U X = \BA(\Top(X, 2), 2)\) of ultrafilters of clopens; that is, $\ol{\U}$ is the
  monad given by the adjunction on the right in the diagram below:
  \begin{equation}\label{eq:codensity-setting-ultrafilter-top}
    \begin{tikzcd}[column sep=25, row sep=25]
      \Set_\f
      \ar[hook]{r}{I} &
      \Top
      \ar[out=20,in=-20,loop,looseness=3, "\ol\U"]
      \ar[yshift=0pt, bend right=20,]{d}[swap]{\Top(-,2)}
      \ar[yshift=0pt, bend left=20, leftarrow, swap]{d}[swap]{\BA(-,2)}
      \ar[phantom]{d}{\dashv}
      \\
      \BA_\f^\op \ar[hook]{r}{J^\op}
      \ar{u}{\BA(-,2)}[swap]{\rotatebox{-90}{$\simeq$}}
      &
      \BA^\op
    \end{tikzcd}
  \end{equation}
Here $2$ is the two-element discrete space, $\Top(X,2)\cong \Cl X$ is the Boolean algebra of
clopen subsets of a space~$X$, and $\BA(B,2)$ is the space of ultrafilters of a Boolean algebra~$B$, viewed as subspace of the space $2^{|B|}$ equipped with the product topology.
Algebras for $\ol{\U}$ correspond to Stone spaces (compact Hausdorff spaces with a basis of clopens)~\cite{s18}.

To capture the topological ultrafilter monad in our setting, we simply just use $\Top$ in lieu of $\Set$ in \eqref{eq:codensity-setting-ultrafilter}, which leads to the codensity setting \eqref{eq:codensity-setting-ultrafilter-top}. Here $I$ identifies a finite set with a discrete topological space. From \Cref{thm:codensity} we recover Sipo\c s' theorem:

\begin{thmC}[\cite{s18}]
The ultrafilter monad $\ol{\U}$ on $\Top$ is the codensity monad of the inclusion $\Set_\f\subto \Top$.
\end{thmC}

\subsection{Filter Monads}\label{sec:filter}
A natural generalization of ultrafilter monads are \emph{filter monads}. They are captured in our setting by working with semilattices in lieu of Boolean algebras.

In more detail, let $\MSL$ denote the category of meet-semilattices with a top element
(equivalently, algebras for the finite power set monad $\P_\f$). A \emph{filter} on a
meet-semilattice $M$ is a non-empty subset
$F\subseteq M$ that is both upwards closed and closed under meets. A filter corresponds to a morphism \(\chi \in \MSL(M, 2)\), where $2=\{0,1\}$ is the
two-element semilattice with the meet given by minimum, by identifying $\chi$ with the
preimage $\chi^{-1}(1)\subseteq M$. A \emph{filter} on a set
$X$ is a filter on the semilattice $\Set(X,2)\cong \P X$. The \emph{filter monad} $\F$ on
$\Set$ sends a set $X$ to the set $\F X = \MSL(\Set(X,2),2)$ of its filters; that is, $\F$ is the monad induced by the adjunction in \eqref{eq:codensity-setting-filter}. Algebras for $\F$ correspond to complete lattices where arbitrary meets distribute over directed joins~\cite{day75}. Two suitable codensity settings for $\F$ are as follows:
\noindent
\begin{minipage}[t]{.5\textwidth}
  \begin{equation}\label{eq:codensity-setting-filter}
    \begin{tikzcd}[column sep=25, row sep=25]
      \MSL_\f
      \ar{r}{U} &
      \Set
      \ar[out=20,in=-20,loop,looseness=3, "\F",]
      \ar[yshift=0pt, bend right=20,]{d}[swap]{\Set(-,2)}
      \ar[yshift=0pt, bend left=20, leftarrow, swap]{d}[swap]{\MSL(-,2)}
      \ar[phantom]{d}{\dashv}
      \\
      \MSL_\f^\op \ar[hook]{r}{J^\op}
      \ar{u}{\MSL(-,2)}[swap]{\rotatebox{-90}{$\simeq$}}
      &
      \MSL^\op
    \end{tikzcd}
  \end{equation}
\end{minipage}
\begin{minipage}[t]{.03\textwidth}
~
\end{minipage}
\begin{minipage}[t]{.45\textwidth}
  \vspace{-.1cm}
  \begin{equation}\label{eq:codensity-setting-filter-kleisli}
    \begin{tikzcd}[column sep=25, row sep=25]
      \Kl_\f(\Pfin)
      \ar{r}{U_{\Pfin}} &
      \Set
      \ar[out=20,in=-20,loop,looseness=3, "\F",]
      \ar[yshift=0pt, bend right=20,]{d}[swap]{\Set(-,2)}
      \ar[yshift=0pt, bend left=20, leftarrow, swap]{d}[swap]{\MSL(-,2)}
      \ar[phantom]{d}{\dashv}
      \\
      \Kl_\f(\Pfin)^\op \ar[hook]{r}{I_{\Pfin}^\op}
      \ar{u}{(-)^\op}[swap]{\rotatebox{-90}{$\simeq$}}
      &
      \MSL^\op
    \end{tikzcd}
  \end{equation}
\end{minipage}

In setting \eqref{eq:codensity-setting-filter}, $U$ is the forgetful functor, and $J$ is 
the inclusion (\iref{ex:dense-alg}{fp}) of finitely presentable algebras.
Here we use that $\EM(\Pfin) \cong \MSL$, and view $\MSL$ as the variety of commutative idempotent monoids. Note that finitely presentable semilattices coincide with the finite ones, since free finitely presentable semilattices are finite. Moreover, we make use of the self-duality $\MSL_\f^\op\simeq \MSL_\f$ of finite
meet-semilattices~\cite[Sec.~VI.3.6]{j82}, which maps a finite meet-semilattice~$M$ to the semilattice $\MSL(M,2)$
of its filters, with the semilattice structure defined pointwise. 

This codensity setting
restricts to~\eqref{eq:codensity-setting-filter-kleisli}; to see this, first recall the functors $U_{\Pfin}$ and
$I_{\Pfin}$ from \Cref{sec:preliminaries} (p.~\pageref{page:mnds}, `Monads')
and that $I_\Pfin$ is dense (\iref{ex:dense-alg}{kleisli-finitary}).
(Note that it maps a set $X$ to $\Pfin X$ equipped with $\cup$ as the binary operation.)
Here $\Kl_\f(\Pfin)^\op\simeq \Kl_\f(\Pfin)$ is the standard identity-on-objects self-duality of the category of finite sets and relations, sending $f\colon X\to \Pfin Y$ to $f^\op\colon Y\to \Pfin X$ given by
$f^\op(y)= \{x : y \in f(x)\}$. We show that the outside of~\eqref{eq:codensity-setting-filter} commutes.
Note that both $U_\Pfin$ and $I_\Pfin$ map $f\colon X \to \Pfin Y$ to the (meet preserving) map $f^\#\colon \Pfin X \to \Pfin Y$ defined by $f^\#(S) = \bigcup_{x \in S} f(s)$ for every finite subset $S \subseteq X$.
It is now easy to see that both paths in the diagram map a given $f\colon X \to \Pfin Y$ in $\Kl_f(\Pfin)^\op$ to the map $\Pfin Y \to \Pfin X$ that takes the preimage of $S\subseteq Y$ under~$f^\#$.

From \Cref{thm:codensity} we obtain two codensity presentations of the filter monad:

\begin{thm}
  The filter monad $\F$ on $\Set$ is the codensity monad of the forgetful functors $U\colon \MSL_\f\to \Set$ and $U_\Pfin\colon \Kl_\f(\Pfin) \to \Set$.
\end{thm}

Combined with \Cref{rem:monad-ext} we recover Do\~na Mateo's pushforward presentation of~$\F$:
\begin{notheorembrackets}
  \begin{cor}[{\cite[Thm.\ 3.10]{m25}}]\label{cor:filt-push}
    The filter monad is the pushforward of the finite powerset monad on $\Set_\f$ along the inclusion $\Set_\f \incl \Set$.
  \end{cor}
\end{notheorembrackets}
Similar to the ultrafilter monad, this reasoning carries over from $\Set$ to
suitable topological categories.
For example, the filter monad $\ol{\F}$ on $\Top$ is given by \(\ol \F X = \MSL(\Top(X, \Sierp), 2)\), that is, $\ol{\F}$ is induced by the adjunction in the diagram below.
\begin{equation}\label{eq:codensity-setting-filter-top}
\begin{tikzcd}[column sep=25, row sep=25]
    \Kl_\f(\Pfin)
    \ar{r}{U_\Pfin} &
    \Top
   \ar[out=20,in=-20,loop,looseness=3, "\ol\F",]
    \ar[yshift=0pt, bend right=20,]{d}[swap]{\Top(-,\Sierp)}
    \ar[yshift=0pt, bend left=20, leftarrow, swap]{d}[swap]{\MSL(-,2)}
    \ar[phantom]{d}{\dashv}
    \\
    \Kl_\f(\Pfin)^\op \ar[hook]{r}{I_\Pfin^\op}
    \ar{u}{(-)^\op}[swap]{\rotatebox{-90}{$\simeq$}}
    &
    \MSL^\op
  \end{tikzcd}
\end{equation}
Here $\Sierp$ is the \emph{Sierpinski space} carried by
$2=\{0,1\}$ with open sets $\emptyset$, $\{1\}$, $\{0,1\}$, and the set $\MSL(M,2)$ is
topologized as a subspace of the product space~$\Sierp^{|M|}$. As open subsets of a space \(X\)
correspond to continuous maps from $X$ to $\Sierp$, the monad $\ol{\F}$
sends $X$ to the space~$\ol{\F}X$ of filters of open sets.
Algebras for $\ol{\F}$ carried by $T_0$ spaces correspond to continuous lattices~\cite{day75}.
To capture $\ol{\F}$ as a codensity monad, we modify the codensity setting~\eqref{eq:codensity-setting-filter-kleisli} to the one shown in~\eqref{eq:codensity-setting-filter-top}.
Here we regard the functor $U_\Pfin\colon \Kl_f(\Pfin) \to \Set$ from \eqref{eq:codensity-setting-filter} as a functor to $\Top$ by identifying $U_\Pfin X=\Pfin X$ with the topological space $\Sierp^X$.

From \Cref{thm:codensity} we obtain a codensity presentation of $\ol{\F}$:
\begin{thm}
  The topological filter monad $\ol{\F}$ on $\Top$ is the codensity monad of the functor $U_\Pfin\colon \Kl_\f(\Pfin) \to \Top$ sending $X$ to $\Sierp^X$.
\end{thm}

{ 
\subsection{Ultrafilters on Median Algebras}
\label{sec:meas-medi-algebr}

Another class of ultrafilter monads was considered by Krenz~\cite{k25}.
In \Cref{sec:ultrafilter}, we have considered ultrafilters, which may be seen as finitely additive, $2$-valued probability measures on discrete spaces, and in \Cref{sec:probability-monads} we will consider classical, non-discrete measures.
An intermediate structure is given by {median algebras}~\cite{bk47,i80}, which may be seen as discrete representations of spaces:
A \emph{median algebra} is a set $M$ equipped with a ternary operation $\langle -, -, -\rangle \colon M^3 \rightarrow M$ satisfying
\[\langle x, x, y \rangle = x, \quad  \langle x, y, z \rangle = \langle y, x, z \rangle = \langle x, z, y \rangle \quad  \text{ and } \quad \langle \langle x, w, y \rangle , w, z \rangle = \langle x, z, \langle y, w, z \rangle \rangle.\]
The intuition is that $\langle x, y, z \rangle$ is the nearest point to $x$ between $y$ and $z$.
For example, every total order $L$ is a median algebra with $\langle x, y , z \rangle = y$ for $x \le y \le z$.

On the dual side we have \emph{poc-sets}, which are posets $P$ with a bottom element $\bot$ equipped with a \emph{negation operator} $\neg \colon P \rightarrow P^\op$ satisfying
\(\neg \neg = \id \) and \(  p \le \neg p \text{ implies } p = \bot \).
For example, every Boolean algebra is a poc-set.

It can be shown that for every median algebra $M$, the set $\Med(M, 2)$ is a poc-set, and conversely, for every poc-set $P$ the set $\Poc(P, 2)$ is a median algebra.
The elements of $\Poc(\Med(M, 2), 2)$ may be interpreted as `discrete integration operators'~\cite{k25}.
We thus obtain the following square:
\[
  \begin{tikzcd}[column sep = 40]
    \Med_\f
    \ar[hook]{r}{I} &
    \Med
   \ar[out=20,in=-20,loop,looseness=3, "\F",]
    \ar[yshift=0pt, bend right=20,]{d}[swap]{\Med(-, 2)}
    \ar[yshift=0pt, bend left=20, leftarrow, swap]{d}[swap]{\Poc(-, 2)}
    \ar[phantom]{d}{\dashv}
    \\
    \Poc_\f^\op \ar[hook]{r}{J^\op}
    \ar{u}[swap]{\rotatebox{-90}{$\simeq$}}
    &
    \Poc^\op
   \ar[out=20,in=-10,loop,looseness=3, "\T",]
  \end{tikzcd}
\]
The dual equivalence between finite median algebras and finite poc-sets is due to Isbell~\cite[Theorem 6.12]{i80}, who showed that the dual adjunction between $\Med$ and $\Poc$ restricts to an equivalence, hence both squares commute.
Density of the horizontal inclusions $J\colon \Poc_\f \incl \Poc$ and $I\colon \Med_\f \incl \Med$ is due to Krenz~\cite[Examples 5.4.1 and 5.4.2]{k25}.

From \Cref{thm:codensity} we obtain the following results due to Krenz:
\begin{thmC}[{\cite[Theorems 4.2.1 and 4.2.2]{k25}}]
  The codensity monads of $I$ and $J$ are isomorphic to the monads $ \F = \Poc(\Med(-, 2), 2)$ and $ \T = \Med(\Poc(-, 2), 2)$, respectively.
\end{thmC}

\subsection{Double Dualization Monads}
\label{sec:dd-monads}
Another interesting class of codensity monads that naturally emerge in our framework are certain \emph{double dualization monads}, which are monads given by dual adjunctions as shown in the first diagram below,
where $\C$ is a symmetric monoidal closed category, $D$ is a fixed object of $\C$, and $\C(C,D)\in \C$ denotes the internal hom object of $C,D\in \C$.
\[
\begin{tikzcd}[column sep=35]
\C \ar[bend right=15]{r}[swap]{\C(-,D)} \ar[phantom]{r}{\top} & {\C^{\op}} \ar[bend right=15]{l}[swap]{\C(-,D)}
\end{tikzcd}\;\;
\begin{tikzcd}[column sep=35]
\Set \ar[bend right=15]{r}[swap]{\Set(-,2)} \ar[phantom]{r}{\top} & {\Set^{\op}} \ar[bend right=15]{l}[swap]{\Set(-,2)}
\end{tikzcd}\;\;
\begin{tikzcd}[column sep=35]
\JSL \ar[bend right=15]{r}[swap]{\JSL(-,2)} \ar[phantom]{r}{\top} & {\JSL^{\op}} \ar[bend right=15]{l}[swap]{\JSL(-,2)}
\end{tikzcd}\;\;
\begin{tikzcd}[column sep=35]
\Vect \ar[bend right=15]{r}[swap]{\Vect(-,\K)} \ar[phantom]{r}{\top} & {\Vect^{\op}} \ar[bend right=15]{l}[swap]{\Vect(-,\K)}
\end{tikzcd}
\]
The other three adjunctions above are concrete examples.
Here $\Vect$ is the category of vector spaces over some fixed field $\K$ and
linear maps, and $X^*=\Vect(X,\K)$ is the dual space of a space $X$ with pointwise
structure. The induced monads are the \emph{neighbourhood monad} on $\Set$ (whose
algebras are complete atomic Boolean algebras~\cite[1.5.17--23]{Manes76}), the monad on
$\JSL$ sending a semilattice to its semilattice of ideals~\cite{as21}, and the monad sending a vector space $X$ to its double dual space $X^{**}$ (whose algebras are \emph{linearly compact vector spaces}~\cite[Thm.~7.8]{l13}).

We start with a slight generalization of the double dualization on $\Set$ to the \emph{endomorphism monad}.
Recall that a category $\C$ has ($\Set$-)\emph{powers} if we have for all $A, B \in \C$ and $X \in \Set$ a bijection
\[\Set(X, \C(A, B)) \cong \C(A, X \power B)\]
natural in all three variables.
Every category with products has powers via $X \power B = B^X$.
Given a category $\C$ with powers and an object $D \in \C$, we have the codensity setting below:
\begin{equation}\label{eq:codensity-presentation-double-dual-set}
\begin{tikzcd}[column sep=25, row sep=25]
    1
    \ar{r}{D!} &
    \C
    \ar[yshift=0pt, bend right=20,]{d}[swap]{\C(-,D)}
    \ar[yshift=0pt, bend left=20, leftarrow, swap]{d}[swap]{- \power D}
    \ar[phantom]{d}{\dashv}
    \\
    1^\op \ar[hook]{r}{I^\op}
    \ar{u}{\rotatebox{90}{$\simeq$}}
    &
    \Set^\op
  \end{tikzcd}
\end{equation}
Here $1$ is the terminal category, which is equivalent to the full subcategory of $\Set$ containing only the singleton, and $D!$ is the constant functor at $D$. The inclusion $I\colon 1 \subto \Set$ is dense by \Cref{ex:dense-set}, and the outer square obviously commutes since $1 \power D \cong D$, so we recover from \Cref{thm:codensity} the following folklore result (which is also easy to prove directly):
\begin{thm}
  The endomorphism monad $\C(-, D) \power D$ is the codensity monad of the constant functor $D!$.
\end{thm}
\begin{exa}\label{ex:endo}
  \begin{enumerate}
    \item\label{ex:endo:dd} For $\C=\Set$ and $D = 2$, we obtain the neighbourhood monad $2^{2^{-}}$ as the codensity monad of the constant functor $2!\colon 1 \to \Set$.
    \item\label{ex:endo:ce} For $\C = \BA$ and $D = 2$, we obtain the monad the canonical
      extension monad $2^{\BA(-, 2)}$ from \Cref{thm:canonical-extension} as the codensity monad of the constant functor $2!\colon 1 \to \BA$. 
  \end{enumerate}
\end{exa}
\begin{rem}\label{rem:endo}
  \fbnote{Maybe take this out.}
  Combining \Cref{prop:monad-ext-univ-prop} with the codensity presentation \eqref{eq:codensity-presentation-double-dual-set} we recover a folklore universal property of $\C(-, D) \power D$ (using that $\T$-module structures on a constant functor $D!$ are in bijection with $\T$-algebra structures on $D$): For every monad $\T$ on $\C$, 
  \[
    \text{$\T$-algebra structures on $D$ are in bijection with monad morphisms $\T \rightarrow \C(-, D) \power D$}.
  \]
  As pointed out by Kock~\cite{k70}, double-dualization monads thus are categorical versions of the \emph{endomorphism ring} $\Ab(A, A)$ of abelian group $A$, since for every ring $R$, the $R$-module structures over $A$ correspond to ring homomorphisms $R \rightarrow \Ab(A, A)$.
  Note that this is more than a mere analogy.
  First, codensity monads can be defined in every bicategory.
  Now in the monoidal category of abelian groups, seen as a one-object bicategory, the codensity monad of an object (1-cell) $A$ is precisely given by the internal hom $\Ab(A, A)$.
  Since monads in this bicategory are precisely rings, the correspondence between ring homomorphisms $R \rightarrow \Ab(A, A)$ and $R$-module structures on $A$ is just the universal property of the codensity monad $\Ab(A, A)$.
\end{rem}

\begin{rem}
  Here is a different codensity setting for the double-dualization monad $2^{2^-}$:
\[
    \begin{tikzcd}[column sep=25, row sep=25]
      \BA_\f
      \ar{r}{U} &
      \Set
      \ar[yshift=0pt, bend right=20,]{d}[swap]{\Set(-,2)}
      \ar[yshift=0pt, bend left=20, leftarrow, swap]{d}[swap]{\Set(-,2)}
      \ar[phantom]{d}{\dashv}
      \\
      \Set_\f^\op \ar[hook]{r}{I^\op}
      \ar{u}{\Set(-, 2)}[swap]{\rotatebox{90}{$\simeq$}}
      &
      \Set^\op
    \end{tikzcd}
  \]
  The bottom inclusion is dense, and the outer square commutes.
  This gives a different characterization of the double-dualization monad $2^{2^-}$:
  Since $\BA_\f$ is (isomorphic to) the category $\EM(\mathcal{B})$ of the free Boolean algebra monad $\mathcal B$ on $\Set_\f$, we obtain from \Cref{rem:monad-ext} that the double-dualization monad $2^{2^-}$ is the pushforward $I_{\#}\mathcal{B}$ of the monad $\mathcal{B}$ along the inclusion $I\colon\Set_\f \incl \Set$.
\end{rem}

\begin{rem}[Monotone Neighbourhood Monad]
  There exists a variation on the neighbourhood monad used e.g.~in positive logic~\cite{hk04} called the \emph{monotone neighbourhood monad} $\mathcal{N} \colon \Set \rightarrow \Set$.
  It is the submonad of the neighbourhood monad consisting of upwards closed neighbourhoods:
  \[\mathcal{N} X = \{\mathcal{A} \in \mathcal{P}(\Set(X, 2)) \mid \mathcal{A}(A) = 1, A \subseteq B \Rightarrow \mathcal{A}(B) = 1\}.\]
  Since $\mathcal{N}X = \Pos(\Set(X, 2), 2)$, we obtain the following codensity presentation of $\mathcal{N}$: 
  \[
    \begin{tikzcd}[column sep=40, row sep=25]
      \DL_\f
      \ar{r}{U} &
      \Set
      \ar[yshift=0pt, bend right=20,]{d}[swap]{\Set(-,2)}
      \ar[yshift=0pt, bend left=20, leftarrow, swap]{d}[swap]{\Pos(-,\{0 \le 1\})}
      \ar[phantom]{d}{\dashv}
      \\
      \Pos_\f^\op \ar[hook]{r}{J^\op}
      \ar{u}{\Pos(-, \{0 \le 1\})}[swap]{\rotatebox{90}{$\simeq$}}
      &
      \Pos^\op
    \end{tikzcd}
  \]
  Here $\DL_\f$ is the full subcategory of finite distributive lattices $U \colon \DL_\f \rightarrow \Set$ is the forgetful functor.
  Finite posets are dense in $\Pos$, since they are precisely the finitely presentable posets.
  The duality between finite posets and finite distributive lattices is \emph{Birkhoff duality}~\cite{b37}.
  By \Cref{thm:codensity}, we obtain that $U$ is a codensity presentation for the monad $\mathcal{N}$.

  Moreover, since~$\DL_\f$ is (isomorphic to) the category $\EM(\mathcal D)$ of the free distributive lattice monad $\mathcal D$ on $\Set_\f$, we obtain from \Cref{rem:monad-ext} that the monotone neighbourhood monad $\N$ is the pushforward $I_{\#}\mathcal{D}$ of the monad $\mathcal{D}$ restricted to $\Set_\f$ along the inclusion $I\colon \Set_\f \incl \Set$.
\end{rem}

Next we consider modules.
Recall that a \emph{semiring} $S$ consists of a commutative monoid $(S, + ,0)$ and a monoid $(S, \cdot, 1)$ such that the usual distributive laws hold: 
\[
  r\cdot (s+t) = r\cdot s + r\cdot t,
  \qquad
  (r+s) \cdot t = r\cdot t + s\cdot t,
  \qquad
  r\cdot 0 = 0 = 0\cdot r \qquad \text{for all $r,s,t \in S$}.
\]
An \emph{$S$-module} is a commutative monoid $(A, +, 0)$ together with an 
action $S \times A \xra{\cdot} A$ of the multiplicative monoid of $S$ denoted by juxtaposition $rm$ for $r \in S$ and $m \in A$ such that for every $r,s\in S$ and $m,n\in A$, the following laws hold:
\[
  \begin{array}{r@{\,}l@{\qquad}r@{\,}l}
    (r+s)m & = rm + sm, & r(m+n) & = rm + rn, \\
    0m & =0, & r0 &= 0, \\
    1m &= m, & r(sm) &= (r\cdot  s)m.
  \end{array}
\]
Equivalently, the action is represented by a semiring morphism $S \rightarrow \CMon(A, A)$ into the endomorphism ring with composition as multiplication; here $\CMon$ denotes the category of all commutative monoids and their morphisms.
A morphism $h\colon A \to A'$ of $S$-modules is a monoid morphism which preserves the action:
\[
  h(m+n) = h(m) + h(n)\qquad\text{and} \qquad h(rm) = r \cdot h(m), \qquad
  \text{for all $m, n\in A$ and $r \in S$.}
\]
All $S$-modules and their morphisms form a category $\Mod_S$, which is (isomorphic to) the
category of algebras $\EM(\mathcal{S})$ for the \emph{free semimodule monad} $\mathcal{S}$
on $\Set$ given by
\[
  \mathcal{S}X = \{f \colon X \rightarrow \mathcal{S} \mid f(x) = 0 \text{ for all but finitely many $x$}\}.
\]
Note that for every finite set $n$, we have $\mathcal{S}n \cong S^n$.
\begin{exa}
  \begin{enumerate}
    \item The natural numbers $\N$ and the integers $\Z$ form semirings, and $\Mod_\N = \CMon $, and $\Mod_\Z = \Ab$.
    \item The two-element lattice $\mathbf{2} = (\{0 < 1\}, \lor, \land, 0, 1)$ is a semiring, and its modules are semilattices $\Mod_\mathbf{2} \cong \JSL$: Every $\mathbf{2}$-module $A$ satisfies $a = 1a = (1 \lor 1)a = 1a + 1a = a + a$, so it is a commutative idempotent monoid.
    \item For every field $\K$, its modules are vector spaces over $\K$, that is,  $\Mod_\K = \Vect_\K$.
  \end{enumerate}
\end{exa}
The subcategory $\Kl_\f(\mathcal{S}) \incl \Kl(\mathcal{S})$ on sets is equivalent to the following category $\Mat_{S}$ of matrices:
the objects of $\Mat_{S}$ are natural numbers $n \in \N$, and morphisms $n \rightarrow m$ are $m \times n$-matrices with entries in $S$.
The equivalence identifies an $m \times n$-matrix  $M \in \Mat_S(n, m)$ with the morphism $n \rightarrow \mathcal{S}m$  corresponding to matrix-vector multiplication
\[I(M) \colon S^n \rightarrow S^m, \qquad v \mapsto Mv.\]
Note that for every $S$-module $A$, every matrix $M  \in \Mat_S(n,m)$ also induces an $S$-module morphism
\begin{equation}\label{eq:mat-precomp}
  J_A(M) \colon A^n \rightarrow A^m, \qquad v \mapsto Mv,
\end{equation}
which makes $J_A$ a functor $\Mat_S \rightarrow \Mod_S$.
Taking for $A$ the semiring $S$ as $S$-module over itself, we have $J_S = I$.

The category $\Mat_S$ is self-dual, with inverse given by transposition $(-)^\tp \colon \Mat_S \simeq \Mat_S^\op$.
We therefore have for every $S$-module $A$ the following codensity setting: 
\begin{equation}\label{eq:codensity-setting-smod}
\begin{tikzcd}[column sep = 40]
    \Mat_S
    \ar{r}{J_A} &
    \Mod_S
    \ar[yshift=0pt, bend right=20,]{d}[swap]{\Mod_S(-,A)}
    \ar[yshift=0pt, bend left=20, leftarrow, swap]{d}[swap]{\Mod_S(-,A)}
    \ar[phantom]{d}{\dashv}
    \\
    \Mat_S^\op \ar[hook]{r}{I^\op}
    \ar{u}{\rotatebox{90}{$\simeq$}}
    \ar[swap]{u}{(-)^\tp}
    &
    \Mod_S^\op
  \end{tikzcd}
\end{equation}
The inclusion $I \colon \Mat_S \simeq \Kl_\f(\mathcal{S}) \incl \Mod_S$ is dense by \iref{ex:dense-alg}{kleisli-finitary}.
We show that the outer square commutes. On objects $\Mod_S(I(n), A) \cong \Mod_S(S^n, A) \cong A^n = J_A(n^\tp)$, since $S^n = \mathcal{S}n$ is free on~$n$. On morphisms, given an $m \times n$-matrix $M \in \Mat_S(n,m) = \Mat_S^\op(m,n)$ and $h \in \Mod_S(S^m, A)$, we have that $\Mod_S(I(M),A)(h) = h \cdot I(M) = J_A(M^\tp)(h)$: $h\colon S^m \to A$ corresponds to the (row) vector $a \in A^m$, so $h \cdot I(M)\colon S^n \to  A$ corresponds to the (column) vector $aM = M^\tp a^\tp$, which is the same vector that $J_A(M^\tp)(h)$ corresponds to.
Thus, from \Cref{thm:codensity} we obtain:

\begin{thm}\label{thm:s-mod-dd}
  For every commutative semiring $S$ and $S$-module $A$ the double-dualization monad $\Mod_S(\Mod_S(-, A), A)$ is the codensity monad of $J$.
\end{thm}


For $S= \K$ and $A = S = \K$ we recover, since $\Mat_\K\simeq \Vect_\fd$ (the full subcategory of $\Vect$ given by finite dimensional spaces), Leinster's characterization of the double-dualization monad on vector spaces:
\begin{equation*}
  \begin{tikzcd}
    \Mat_\K \rar[phantom]{\simeq}
    &
    \Vect_\fd
    \ar[hook]{r}{J_\K = I} &
    \Vect
    \ar[yshift=0pt, bend right=20,]{d}[swap]{\Vect(-,\K)}
    \ar[yshift=0pt, bend left=20, leftarrow, swap]{d}[swap]{\Vect(-,\K)}
    \ar[phantom]{d}{\dashv}
    \\
    \Mat_\K^\op \rar[phantom]{\simeq}
    \ar{u}{\rotatebox{90}{$\simeq$}}[swap]{(-)^\tp}
    &
    \Vect_\fd^\op \ar[hook]{r}{I^\op}
    \ar{u}{\rotatebox{90}{$\simeq$}}
    &
    \Vect^\op
  \end{tikzcd}
\end{equation*}
\begin{thmC}[\cite{l13}]
  The codensity monad of the inclusion $\Vect_\fd \incl \Vect$ is the double-dualization monad $\Vect(\Vect(-, \K), \K)$ on vector spaces.
\end{thmC}

Similarly, for $A=S=\mathbf{2}$ we obtain the double-dualization monad on $\MSL$.  In the diagram below, note that the self-duality $\MSL_\f \simeq^\op \MSL_\f$ indeed restricts to $(-)^\tp$ on matrices, as it preserves free objects.
\begin{equation*}
\begin{tikzcd}
    \Mat_\mathbf{2} \rar[tail]{} \ar[shiftarr={yshift=20pt}]{rr}{J_{\mathbf{2}} \cong I}
    &
    \MSL_\f
    \ar[hook]{r}{} &
    \MSL
    \ar[yshift=0pt, bend right=20,]{d}[swap]{\MSL(-,2)}
    \ar[yshift=0pt, bend left=20, leftarrow, swap]{d}[swap]{\MSL(-,2)}
    \ar[phantom]{d}{\dashv}
    \\
    \Mat_\mathbf{2}^\op \rar[tail]{}
    \ar{u}{\rotatebox{90}{$\simeq$}}[swap]{(-)^\tp}
    &
    \MSL_\f^\op \ar[hook]{r}{}
    \ar{u}{\rotatebox{90}{$\simeq$}}
    &
    \MSL^\op
  \end{tikzcd}
\end{equation*}
From \Cref{thm:codensity} we obtain the following result due to Ad\'amek and Sousa~\cite{as21}:

\begin{thmC}
  The double-filter monad $\MSL(\MSL(-, 2), 2)$ on $\MSL$ is the codensity monad of $I \colon \Mat_{\mathbf{2}} \rightarrow \MSL$.
\end{thmC}

Finally, for $S = \Z$ we have $\Mod_S = \Ab$, and taking for $A = S^1$ the unit circle we get a codensity presentation of the double dualization monad $\mathcal{B} = \Ab(\Ab(-, S^1), S^1)$ on abelian groups, which is known as the \emph{Bohr compactification} (see e.g.~\cite{ak43}) of (discrete) abelian groups.
\begin{cor}\label{cor:bohr}
  The monad $\mathcal{B}$ on $\Ab$  is the codensity monad of $J_{S^1} \colon \Mat_\Z \rightarrow \Ab$.
\end{cor}


\begin{rem}\label{rem:V-dd}
  Kock~\cite{k70} considered double-dualization monads $\V(\C(-, D), D)$ over symmetric monoidal closed categories $\V$ as \emph{$\V$-enriched monads}  over $\V$, where they have much simpler presentations analogous to the double-dualization monad on $\Set$.
\end{rem}

  \takeout{

For a slightly more intricate example, we consider the category $\Ab$ of abelian groups and as the dualizing object the circle group $S^1=\{ x\in \mathbb{C} \mid |x|=1 \}$, a subgroup of the multiplicative group of non-zero complex numbers. It forms
a (compact Hausdorff) topological group w.r.t.~the Euclidian topology on
$\mathbb{C}$. A suitable codensity setting is given by:
\begin{equation*}\label{eq:codensity-setting-abelian}
\begin{tikzcd}[column sep=25, row sep=25]
    \{ S^1\times S^1 \}
    \ar{r}{U} &
    \Ab
    \ar[yshift=0pt, bend right=20,]{d}[swap]{\Ab(-,S^1)}
    \ar[yshift=0pt, bend left=20, leftarrow, swap]{d}[swap]{\Ab(-,S^1)}
    \ar[phantom]{d}{\dashv}
    \\
    \{\mathbb{Z}\oplus \mathbb{Z}\}^\op \ar[hook]{r}{J^\op}
    \ar{u}{\rotatebox{90}{$\simeq$}}
    &
    \Ab^\op
  \end{tikzcd}
\end{equation*}
While all previous examples are based on simple dualities, we now use
a corollary of a more advanced duality result: (discrete) \emph{Pontryagin
  duality}~\cite{pontryagin46,morris77}, the dual equivalence $\Ab^\op\simeq \CHAb$ between
$\Ab$ and the category $\CHAb$ of compact Hausdorff abelian groups and
continuous group morphisms. The equivalence sends an abelian group $A$
to the
topological group $\Ab(A,S^1)$ of all group morphisms into $S^1$, topologized as a subspace
of $(S^1)^{|A|}$. Under this duality, the additive group $\mathbb{Z} \oplus \mathbb{Z}$ (the
free abelian group on two generators) corresponds to the torus $S^1\times S^1$, so Pontryagin duality restricts to a
dual equivalence between the one-object full subcategories
$\{\mathbb{Z} \oplus \mathbb{Z}\}\subto \Ab$ and $\{S^1\times S^1\}\subto \CHAb$. The functor
$U$
is the composition of the inclusion $\{S^1\times S^1\}\subto \CHAb$ with the forgetful
functor to $\Ab$. The inclusion $J\colon \{\mathbb{Z} \oplus \mathbb{Z}\} \subto\Ab$
is dense (\iref{ex:dense-alg}{kleisli-finitary}), and the outside of the diagram
commutes, since Pontryagin duality is given by the functor $\Ab(-,S^1)\colon \Ab^\op \to \CHAb$.
From \Cref{thm:codensity} we obtain the following result:

}

\subsection{Isbell Duality}\label{sec:isbell-duality}
We continue with an example of a more abstract nature: for every small category~$\C$, the codensity monad of the Yoneda embedding $\C \subto [\C^\op,\Set]$ is the monad given by \emph{Isbell duality}~\cite{isbell66}. This result due to Kock~\cite{kock66} originally motivated the introduction of codensity monads, and many codensity settings may be seen as restrictions of Isbell duality, as has been observed by Di Liberti~\cite{dl20}.
Recall that Isbell duality is the dual adjunction in \eqref{eq:codensity-setting-isbell} between the categories of contravariant and covariant presheaves on $\C$ defined by
\[ \mathcal{O}(X) = (\, C\mapsto [\C^\op,\Set](X,\C(-,C)) \,)\quad\;\;\text{and}\;\;\quad \mathsf{Spec}(A) = (\, C\mapsto [\C,\Set](A,\C(C,-)) \,).  \]
The two Yoneda embeddings $y$ and $\tilde{y}$ give rise to the codensity setting shown below:\\[-0.2cm]
\begin{minipage}[c]{.5\textwidth}
  \begin{align*}
    y&\colon \C\subto [\C^\op,\Set],&  C&\mapsto \C(-,C),\\
    \tilde{y}& \colon \C^\op \subto [\C,\Set], & C&\mapsto \C(C,-).
  \end{align*}
\end{minipage}
\begin{minipage}{.1\textwidth}
\end{minipage}
\begin{minipage}[c]{.4\textwidth}
\begin{equation}\label{eq:codensity-setting-isbell}
\begin{tikzcd}[column sep=25, row sep=25]
    \C
    \ar[hook]{r}{y} &
    {[}\C^\op,\Set{]}
    \ar[yshift=0pt, bend right=20]{d}[swap,pos=.52]{\mathcal{O}}
    \ar[yshift=0pt, bend left=20, leftarrow, swap]{d}[swap,pos=.53]{\mathsf{Spec}}
    \ar[phantom]{d}{\dashv}
    \\
    (\C^\op)^\op \ar[hook]{r}{(\tilde{y})^\op}
    \ar[equals]{u}
    &
    {[}\C,\Set{]}^\op
  \end{tikzcd}
\end{equation}
\end{minipage}\\[2ex]
Indeed, the embedding $\tilde{y}$ is dense (\Cref{ex:dense-yoneda}), and the outside of the diagram commutes up to isomorphism by the Yoneda lemma: for all objects $C,D\in \C$, we have
\[ \mathsf{Spec}(\tilde{y}(D))(C) = [\C,\Set](\C(D,-),\C(C,-))\cong \C(C,D) = y(D)(C). \]
From \Cref{thm:codensity} we obtain:

\begin{thmC}[\cite{kock66,dl20}]
  For every small category $\C$, the monad on~$[\C^\op,\Set]$ induced by Isbell duality is
  the codensity monad of $y\colon \C\subto [\C^\op,\Set]$.
\end{thmC}

{ 
\subsection{The Ultraproduct Monad}
\label{sec:ultraproducts}
As the final example of the present section we show how to recover the codensity presentation of the ultraproduct monad due to Leinster~\cite{l13}.

The \emph{ultraproduct} $\int_I P \diff u $ of a family $P = (P_i)_{i \in I}$ of sets with respect to an ultrafilter $u \in \U I$ on $I$ is defined as the directed colimit indexed by the poset $u^\op$ (i.e.\ $u$ ordered by reverse inclusion):
\begin{equation}
  \label{eq:ultraproduct}
  \int_I P \diff u = \colim_{K \in u^\op}\prod_{i \in K}P_i.
\end{equation}
If all $P_i$ are non-empty, then the ultraproduct is simply the quotient
\[
  \int_I P \diff u = \big(\prod_{i \in I} P_i\big) /  \mathord{\sim} \qquad \text{with $(a_i)_i \sim (b_i)_i$  if $\{i \in I \mid a_i = b_i\} \in u$}.
\]
If $u$ is understood as a discrete measure telling us which subsets $K$ of $I$ are negligible (those not in $u$), then this definition states that two families are considered equal if they are `essentially equal'.
The ultraproduct construction is an important tool in model theory.

The ultraproduct construction gives rise to a monad~\cite{e74,k81}:
Its domain is the  category $\Fam(\Set^\op)$ whose objects are indexed families $(I, P)$ of sets,  consisting of an index set $I$ and a functor $P \colon I \rightarrow \Set$.
A morphism $(I, P) \rightarrow (J, Q)$ consists of a map $f \colon I \rightarrow J$ and a natural transformation $\varphi \colon Q \cdot f \rightarrow P$ as in \eqref{eq:fam-mor}, that is, a family of maps $\varphi_i \colon Q_{f(i)} \rightarrow P_i$.
\begin{equation}
    \label{eq:fam-mor}
    \begin{tikzcd}[column sep=small]
      I \ar{rr}{f} \drar[swap]{P} \ar{rr}[yshift=-20pt, phantom]{\overset{\varphi}{\Leftarrow}} &  & J \dlar{Q} \\
      & \Set &
    \end{tikzcd}
  \end{equation}
We denote its full subcategory of finitely-indexed families by $\Fam_\f(\Set^\op)$.
The categories $\Fam(\Set^\op)$ and $\Fam_\f(\Set^\op)$ are the free (finite-)coproduct completions of $\Set^\op$, respectively.
The \emph{ultraproduct monad} $\UP$ on $\Fam(\Set)$ sends a family $P$ over $I$ to the family $(\int_I P \diff u)_{u \in \U I}$.
Its algebras are sheaves over compact Hausdorff spaces~\cite{k81}.

A \emph{Boolean sheaf} consists of a Boolean algebra $B$ and a \emph{sheaf} on $B$, which is a functor $F \colon B^\op \rightarrow \Set$ satisfying the \emph{sheaf condition}:
writing $F(a \le b)(x)$ as $x|_a$ for every $x \in Fb$, it states that whenever $b = c \lor d$ and $x \in Fc, y \in Fd$ are \emph{compatible} in the sense that $x|_{c \land d} = y|_{c \land d}$, then there exists a unique $z \in Fb$ with $z|_c = x, z|_d = y$.
[Remark: a Boolean sheaf is \emph{not} a sheaf over the site $\BA$ of all Boolean algebras, but over the site $B$.]
A morphism of Boolean sheaves $(B, F) \rightarrow (C, G)$ consists of a homomorphism $h \colon B \rightarrow C$ of Boolean algebras and a natural transformation $\gamma \colon F \rightarrow G \cdot h^\op $.
    \begin{equation}
    \label{eq:sh-mor}
    \begin{tikzcd}[column sep=small]
      B^\op \ar{rr}{h^\op} \drar[swap]{F} \ar{rr}[yshift=-20pt, phantom]{\overset{\gamma}{\Rightarrow}} &  & C^\op \dlar{G} \\
      & \Set &
    \end{tikzcd}
  \end{equation}
We denote the category of Boolean sheaves by $\BSh$ with the full subcategory $\BSh_\f$ of sheaves over finite Boolean algebras.

The adjunction inducing the ultraproduct monad is suggested by \eqref{eq:ultraproduct}:
The left adjoint $\Sh \colon \Fam(\Set^\op) \rightarrow \BSh^\op$ sends a family $(I, P)$
to its discrete sheaf:
\begin{equation}
  \label{eq:sh}
  \Sh(I, P) = (2^I, K \mapsto \prod_{k \in K}P_k).
\end{equation}
The right adjoint $\Stalk\colon \BSh^\op\to \Fam(\Set^\op)$ sends a Boolean sheaf $(B, F)$ to the family of its stalks:
\begin{equation}
  \label{eq:stalk}
  \Stalk(B, F) = (\mathcal{U} B, u \mapsto \colim_{b \in u^\op}Fb).
\end{equation}
Somewhat remarkably, we can describe this adjunction in elementary categorical terms:
Recall that the \emph{weighted limit} $ \langle W, F \rangle$ of a functor $F \colon I \rightarrow \D$ with weight $W \colon I \rightarrow \Set$ is given by $(\Ran_{\tilde{y}^\op} F)W$, and the \emph{weighted colimit} $V \star F$ of $F$ by a weight $V \colon \I^\op \rightarrow \Set$ by $(\Lan_y F)V$, where $y$ and $\tilde{y}$ are the Yoneda embeddings from~\eqref{eq:codensity-setting-isbell}.

\begin{prop}\label{prop:adj-sh}
  There is an adjunction $\Sh \dashv \Stalk$ between $\Fam(\Set^\op)$ and $\BSh$ induced by taking weighted limits and colimits: it is given by
  \begin{align*}
    \Sh \colon \Fam(\Set^\op) &\rightarrow \BSh & \Stalk \colon \BSh &\rightarrow \Fam(\Set^\op) \\
    (I, P) &\mapsto (2^I,  \langle -, P \rangle \colon (2^I)^\op \rightarrow \Set) & (B, F) &\mapsto (\mathcal{U} B, (-) \star F \colon \mathcal{U} B \rightarrow \Set), \\
  \end{align*}
  where we identify subsets $K \subseteq 2^I$ with $\{\emptyset, 1\}$-valued functors $I \rightarrow \Set$ and ultrafilters $u \in \mathcal{U} B$ with $\{\emptyset, 1\}$-valued functors $B \rightarrow \Set$.
\end{prop}
\begin{proof}
  The formulas \eqref{eq:sh} and \eqref{eq:stalk} are simply the weighted (co)limits spelled out.
  We consider $I$ as a discrete category and the weights are discrete in the sense that they only take values in the subcategory $\{\emptyset, 1\}$ of $\Set$. Then for every $K \in 2^I$, we have
  \[\Sh(I, P)K = \langle K, P \rangle \cong \int_{i \in I} K(i) \power P_i \cong \prod_{i \in K}P_i, \]
  and for every $u \in \U B$,
  \[ \Stalk(B, F)u = u \star F \cong \int^{b \in B^\op} u(b) \cdot F(b) \cong \colim_{b \in u^\op}Fb.\]
  The adjunction isomorphism \[\Fam(\Set^\op)((I, P), \Stalk(B, F)) \cong \BSh((B, F), \Sh(I, P))\]
  is on the index-level the familiar adjunction $2^{(-)} \dashv \BA(-, 2)$ between sets and Boolean algebras.
  Given $f \in \Set(I, \BA(B, 2))$ with adjoint transpose $\flip f \in \BA(B, 2^I)$ and uncurried form $\uncurry f \colon I \times B \rightarrow 2$ this induces a bijection between natural transformations
  \[[I, \Set](f \star F, P) \cong [I \times B, \Set](\uncurry f, \Set(F-, P-)) \cong [B^\op, \Set](F, \langle \flip f, P \rangle ). \qedhere\]
\end{proof}

\takeout{
\fbnote{I guess we should leave the fibration stuff out}
It may be interesting that this adjunction can also be understood from a fibrational perspective, which may bridge the gap to categorical logic, e.g.~as in \cite{bsss21}.
\begin{prop}
  The functors \[\Fam(\Set^\op) \rightarrow \Set \quad (I, P) \mapsto I, \qquad \text{ and } \qquad \BSh \rightarrow \BA, \quad (B, F) \mapsto B\] are bifibrations.
\end{prop}
\begin{proof}
  We start with $\Fam(\Set^\op)$. The reindexing of a family $(J, Q)$ along $f \colon I \rightarrow J$ is given by pullback, that is $(f^* Q)_i = Q_{f(i)}$ and the cartesian morphism over $f$ is pointwise identity.
  The pushforward of $(I, P)$ along $f$ is given by dependent product: $(f_! P)_j = \prod_{i \in f^{-1}(j)}P_i$, and the cocartesian morphism over $f$ is given by projection.

  As for $\BSh$, the reindexing of a sheaf $(C, G)$ along $h \colon B \rightarrow C$ is often called the \emph{direct image sheaf}  with $(h^*G) b = Ghb$.
  The pushforward of $(B, F)$ along $h$ is the \emph{inverse image sheaf} $(h_! G) c = \colim_{hb \ge c} Fb$, where the directed colimit is taken over the reverse order.
  The cocartesian morphism is given by the obvious coprojection.
  It remains to verify that $h_! G$ is indeed a sheaf. The restriction action \[(h_! G) c' \rightarrow (h_! G) c \qquad \text{ for } c \le c'\] is simply induced by the coprojections.
  We verify the sheaf condition for binary joins: let $c = c_1 \lor c_2$, and let $y_i \in (h_! F)c_i$ for $i = 1, 2$ be a compatible family.
  Then $y_i = \kappa_i(x_i) \in Fb_i$ for some $b_i$ with $hb_i \ge c_i$ and $x_i \in F b_i$.
  By assumption the $y_i$ are compatible, that is, $y_i|_{c_1 \land c_2} \in (h_! G)(c_1 \land c_2)$.
  But the colimit of $(h_! G)(c_i \land c_2)$ is directed, so the $y_i = \kappa_i(x_i)$ are joined in some $Fb'$ with $h(b') \ge c_1 \land c_2$.
  This means that the $x_i$ are compatible, so there exists, since $F$ is a sheaf, a unique amalgation $x \in F(b_1 \lor b_2)$.
  Then $h(b_1 \lor b_2) \ge c$, and the element $\kappa_{b_1 \lor b_2}(x) \in (h_!G)c$ is an amalgation.
  \fbnote{Maybe draw diagram}
  Uniqueness is shown by a similar argument.
\end{proof}
Under this perspective, the adjunction from \Cref{prop:adj-sh} is an extension of the adjunction between sets and Boolean algebras to fibers via pushforward:
A family $(I, P)$ is sent to $(2^I, K \mapsto (K_! P)1)$ and a sheaf $(B, F)$ to $(\mathcal{U} B, u \mapsto (u_! F)1)$.
In particular, we get commuting squares: \fbnote{TODO I think this is an adjunction of bifibrations but I didn't check it.}
\[
  \begin{tikzcd}[sep=large]
    \mathllap{\Fam(}\Set^\op) \dar{} \rar[bend left=15]{\Sh} \rar[phantom]{\rotatebox{-90}{$\dashv$}}  & \BSh^\op \dar{} \lar[ bend left=15 ]{\Stalk} \\
    \Set \rar[bend left=15]{2^{-}} \rar[phantom]{\rotatebox{-90}{$\dashv$}} & \BA \lar[bend left=15]{\BA(-, 2)}
  \end{tikzcd}
\]
}

The codensity situation for the ultraproduct monad is the following:
\begin{equation}
  \label{eq:codensity-setting-ultraproduct}
  \begin{tikzcd}
    \Fam_\f(\Set^\op)
    \rar[hook]{I} &
    \Fam(\Set^\op)
    \ar[out=20,in=-20,loop,looseness=3, "\UP",]
    \ar[yshift=0pt, bend right=20,]{d}[swap]{\Sh}
    \ar[yshift=0pt, bend left=20, leftarrow, swap]{d}[swap]{\Stalk}
    \ar[phantom]{d}{\dashv}
    \\
    \BSh_\f^\op \ar[hook]{r}{J^\op}
    \ar{u}{|_{\At}}[swap]{\rotatebox{-90}{$\simeq$}}
    &
    \BSh^\op
  \end{tikzcd}
\end{equation}
Here $I$ and $J$ are the inclusions, and the functor $(-)|_{\At}$ is the restriction of the functor $\Stalk$ to sheaves over finite Boolean algebras that acts by restriction to atoms:
Given $(B, F)$ where $B \cong 2^A$ for a finite set of atoms $A$, we have that every ultrafilter is principal, that is, $\eta_A \colon A \cong \mathcal{U}2^A$ is an isomorphism.
Since $\{a\}$ is the top element of $\eta(a)^\op$ we have $\colim_{b \in \eta(a)^\op} Fb \cong F\{a\}$ and thus $\Stalk(B, F) \cong (A, F|_\At)$.
Second, we prove density of~$J$:
\takeout{
\begin{lem}\label{lem:sh-incl-dense}
  The inclusion $J \colon \BSh_\f \incl \BSh $ is dense.
\end{lem}
\begin{proof}
  We give a direct proof.
  Let $(B, F)$ be a Boolean sheaf with canonical cone $(f_{i}, \varphi_{i}) \colon (B_{i}, F_{i}) \rightarrow (B, F)$ over all Boolean sheaves with $B_{i}$ finite.

  (1) Since the inclusion $\BA_\f \incl \BA$ is dense the family $f_{i}$ is jointly surjective.
  We show the same for the family $\varphi_{i}$, that is,
  given $x \in Fb$ there exists a finite Boolean subalgebra $\iota \colon A \incl B$ with $b \in B$.
  We get the obvious inclusion
  \[(\iota \colon A \incl B, \id \colon F\iota^\op \Rightarrow F \iota^\op) \colon (A, F\iota^\op) \incl (B, F)\]
  such that $x$ is contained in the image of the fibre map at $b$.

  (2) To show that the canonical cocone $(f_i, \varphi_i) \colon (B_i, F_i) \rightarrow (B, F)$ is indeed the colimit of the canonical diagram let $(g_{i}, \gamma_{i}) \colon (B_{i}, F_{i}) \rightarrow  (C, G)$ be a cocone.
  Since $B \cong \colim_{i} B_{i}$, there  exists a unique Boolean algebra morphism $g \colon B \rightarrow C$ with $g \cdot f_{i} = g_{i}$.
  We define the natural transformation
  \(\gamma_b \colon Fb \rightarrow Ggb\) such that $x \in Fb$ is mapped to $\gamma_{i,b}(x)$,
  where $ f_{i} \colon B_{i} \incl B$ is an inclusion of a finite Boolean subalgebra $B_i$ containing $b$ as in point (1).
  Then $\gamma_b$ is well-defined, since if $b$ is contained in another $f_j \colon B_j \incl B$, then there exists a finite subalgebra $f_k \colon B_k \incl B$ containing both $B_i$ and $B_j$.
  Since th connecting morphisms $(B_i, F \cdot f_i^\op) \rightarrow (B_k, F \cdot f_k^\op)$ induced by the inclusion $B_i \subseteq B_k$ are fibrewise the identity as in (1) we trivially have \(\gamma_{i, b}(x) = \gamma_{k, b}(x) = \gamma_{j, b}(x)\).

  By point (1) uniqueness also follows, since this definition is forced upon us.
\end{proof}
}
%
\begin{lem}\label{lem:bsh-dense}
The inclusion $\BSh_\f \incl \BSh$ is~dense.
\end{lem}

\begin{proof}
  Let $(\beta_i, \varphi_i)\colon (B_i, F_i) \rightarrow (B, F)$ ($i \in I$) be the canonical cocone of all morphisms of Boolean sheaves from those in $\BSh_\f$ to $(B, F)$.
  We prove that this is a colimit.
  Suppose $(\gamma_i, \psi_i) \colon (B_i, F_i) \rightarrow (C, G)$ is any cocone.

  First note that for every Boolean algebra homomorphism $h \colon B_i \rightarrow B$ there exists a $B_i$-indexed sheaf $F \cdot h^\op$ and a sheaf morphism $(h, \id) \colon (B_i, F \cdot h^\op) \rightarrow (B, F)$.
  Thus, the $\beta_i\colon B_i \to B$ ($i\in I$) form the canonical cocone of all Boolean algebra homomorphisms from finite Boolean algebras to $B$. So by the density of the inclusion $\BA_\f \subto \BA$ (\Cref{ex:dense-alg:fp}), that cocone forms a colimit. Since the $\gamma_i \colon B_i \rightarrow C$ ($i \in I$) form a cocone, we obtain a unique $\gamma \colon B \rightarrow C$ such that $\gamma \cdot \beta_i = \gamma_i$ holds for all $i \in I$.

  We now prove the existence of a natural transformation $\psi \colon F \rightarrow G \cdot\gamma^\op$. Note that for every $b \in B$, there is some $\beta_i \colon B_i \rightarrow B$ and $b_i \in B_i$ such that $\beta_i(b_i) = b$.
  We define the $b$-component $\psi(b)\colon Fb \to (G \cdot \gamma^\op)b = G(\gamma(b))$ via the cocone morphism $\psi_i\colon (B_i, F\cdot \beta_i^\op) \rightarrow (C, G)$ from above as
  \begin{equation}
    \label{eq:con-mor}
    \begin{tikzcd}
      Fb \rar{\psi(b)} \dar[equals]{\id(b)} & G (\gamma b) \dar[equals]{} \\
      (F \cdot \beta_i^\op) b_i \rar{\psi_i(b_i) } & G (\gamma_i b_i).
    \end{tikzcd}
  \end{equation}
  We show that the above definition of $\psi(b)$ this is independent of the choice of $i$ and $b_i$. Note that the colimit $b_i\colon B_i \to B$ ($i \in I$) is directed. So if $b = \beta_j(b_j)$ for some other $b_j \in B_j$, then there exists some $k \ge j, i$ in $I$ such that $\beta_{i,k}(b_i) = b_k = \beta_{j,k}(b_j)$ and $\beta_k(b_k) = b$.
  Since $(\beta_{i,k}, \id) \colon (B_i, F \beta_i^\op) \rightarrow (B_k, F \beta_k^\op)$ is a connecting morphism of our given diagram (and similarly for $j$) and the $\psi_i$ form a cocone, we obtain $Fb = (F \beta_i^\op)b_i =  (F \beta_k^\op)b_k = (F \beta_j^\op)b_j$, $G(\gamma b) = G(\gamma_i b_i) = G(\gamma_j b_j) = G(\gamma_k b_k)$, and
  \[
    \psi_i(b_i) = \psi_k(b_k) = \psi_j(b_j) \colon Fb  \rightarrow G (\gamma b) = (G \cdot \gamma^\op)b.
  \]
  We verify that $\psi$ is natural, so let $b \ge b'$.
  Note that the canonical cocone for $B$ is \emph{filtered}:
  this implies that we may assume that $b$ and $b'$ are in the image of the same $\beta_i$, that is, there exists $i \in I$ such that $b = \beta_i(b_i)$ and $b' = \beta_i(b_i')$.
  Futhermore, by taking image factorizations we may assume that $\beta_i$ is an embedding, so $b_i \ge b_i'$.
  We then obtain the following prism proving naturality:
  \[
    \begin{tikzcd}[column sep=small]
      & Fb \ar{rr}{\psi(b)}  \ar{ddl}{} \dar[equals]{}& & G \gamma b \ar{ddl}{} \\
      & (F \beta_i) b_i \ar{urr}{\psi_i(b_i)} \ar{ddl}{} & & \\
      Fb' \ar[crossing over]{rr}{\psi(b')} \dar[equals]{}& & G \gamma b' \\
      (F \beta_i) b_i' \ar[swap]{rru}{\psi_i(b_i')} & & &
    \end{tikzcd}
  \]

  The equation $\psi \cdot \varphi_i = \psi_i$ is satisfied at every $i \in I$:
  Every morphism $(\beta_j, \varphi_j) \colon (B_j, F_j) \rightarrow (B, F)$
  factorizes as
  \[
    (B_j, F_j) \xlongrightarrow{(\id, \varphi_j)} (B_j, F \cdot \beta_j^\op) \xlongrightarrow{(\beta_j, \id)} (B, F).
  \]
  Hence we have a diagram
  \begin{equation*}
    \begin{tikzcd}
       &  F (\beta_j(c_j)) \rar{\psi(\beta_j(c_j))} \dar[equals]{\id(\beta_j(c_j))} & G (\gamma \beta_jc_j) \dar[equals]{} \\
       F_jc_j \ar[shiftarr={yshift=-20pt},swap]{rr}{\psi_j(c_j)} \urar{\varphi_j(c_j)} \rar[swap]{\varphi_j(c_j)} & (F \cdot \beta_j^\op) c_j \rar{\psi_{i}(c_j) } & G (\gamma_j c_j),
    \end{tikzcd}
  \end{equation*}
  where $\psi_{i}$ is second coordinate of the component of the given cocone $(\gamma_k,\psi_k)$ ($k \in I$) at $(\beta_j, \id) \colon (B_j, F \cdot \beta_j^\op) \rightarrow (B, F)$ from \eqref{eq:con-mor}.
  The outer path commutes:
  the right square commutes by definition \eqref{eq:con-mor} of $\psi$, the left triangle commutes since the $\varphi_k$ form a cocone, and the lower part commutes, since the $\psi_k$ form a cocone.

  The uniqueness of the morphism $(\gamma, \psi)$ is obvious from its definition in \eqref{eq:con-mor}: every natural transformation $\psi' \colon F \rightarrow G$ with $\psi' \cdot \varphi_i = \psi_i$ satisfies \eqref{eq:con-mor}, so $\psi' = \psi$.
\end{proof}

From \Cref{thm:codensity} we obtain:
  \begin{thmC}[\cite{l13}]\label{thm:up}
    The codensity monad of the inclusion $\Fam_\f(\Set^\op) \incl \Fam(\Set^\op)$ is isomorphic to the ultraproduct monad $\mathcal{U} \mathcal{P}$.
  \end{thmC}
  \begin{rem}
    The above reasoning extends to Boolean sheaves over any category $\C$ with limits, i.e.\ functors $B^\op \rightarrow \C$ that sends pullbacks $b = c \lor d$ of diagrams $c \ge c \land d \le d$ in $B^\op$ to pullbacks in $\C$.
    Then, generalizing \Cref{lem:bsh-dense}, the inclusion $\BSh_\f(\C) \incl \BSh(\C)$ of finite Boolean sheaves into Boolean sheaves over $\C$ is dense, and \Cref{thm:up} generalizes to ultraproducts on categories of families $\Fam(\C)$ when $\C$ has limits and filtered colimits.

    In an orthogonal direction, the argument can be generalized to different dualities at the index level, by altering the dual adjunction between $\Set$ and $\BA$. For example, the dual adjunction between $\Set$ and $\MSL$ used for the codensity presentation of the filter monad (\Cref{sec:filter}) yields the \emph{reduced product}.
  \end{rem}

\section{Vietoris Monads}
\label{sec:vietoris-monads}

As another important class of codensity monads, we study monads on categories of topological spaces that associate to a given space a suitably topologized space of interesting subsets.

\subsection{The Vietoris Monad on Stone Spaces}
\label{sec:vietoris-monad-stone}

We start with the Vietoris monad on the category $\Stone$ of Stone spaces and continuous
maps. The \emph{Vietoris space} \(\Viet X\) of a Stone space \(X\) is given by the set of
all closed subsets of~\(X\) equipped with the \emph{hit-or-miss topology}, which has the
following subbasic open sets:
\[
  \Diamond U = \{\text{$C \subseteq X$ closed} \mid C \cap U \ne \emptyset \},
  \
  \Box U = \{\text{$C \subseteq X$ closed} \mid C \cap U  = \emptyset \},
  \
  \text{for $U\subseteq X$ clopen}.
\]
We can represent $\V X$ as a double dual space as follows. Let $\JSL$ be the category of
join-semilattices with bottom. Morphisms $f$ from $J$ to the join-semilattice $2=\{0,1\}$ (with
join given by maximum) correspond to ideals $f^{-1}(0)$: non-empty subsets of $J$ that are downwards
closed and closed under join. The set $\JSL(J,2)$ of ideals is topologized as a subspace of
the product space $2^{|J|}$, where $2$ carries the discrete topology. We then have the
isomorphism
$
  \V X \cong \JSL(\Stone(X,2),2)
$
which identifies a closed set $C\subseteq X$ with the ideal of all clopen sets
$U\subseteq X$ with $C\cap U=\emptyset$ (i.e.\ $C\in \Box U$). The object mapping
$X\mapsto \V X$ thus naturally extends to a monad,
for which the appropriate codensity setting is given by:
\begin{equation}\label{eq:codensity-setting-vietoris}
\begin{tikzcd}[column sep=40, row sep=25]
    \Kl_\f(\Pfin)
    \ar{r}{U_\Pfin} &
    \Stone
   \ar[out=20,in=-20,loop,looseness=3, "\Viet",pos=0.47]
    \ar[yshift=0pt, bend right=20,]{d}[swap]{\Stone(-,2)}
    \ar[yshift=0pt, bend left=20, leftarrow, swap]{d}[swap]{\JSL(-,2)}
    \ar[phantom]{d}{\dashv}
    \\
    \Kl_\f(\Pfin)^\op \ar[hook]{r}{I_\Pfin^\op}
    \ar{u}{(-)^\op}[swap]{\rotatebox{-90}{$\simeq$}}
    &
    \JSL^\op
  \end{tikzcd}
\end{equation}
Here $U_\Pfin\colon \Kl_f(\Pfin) \to \Set$ from \eqref{eq:codensity-setting-filter-kleisli} is
lifted to $\Stone$, identifying $U_\Pfin X=\Pfin X$ with the discrete space $2^X$. The functor $I_\Pfin$ is
the dense inclusion (\iref{ex:dense-alg}{kleisli-finitary}), using
$\EM(\Pfin) \cong \JSL$. Commutativity of \eqref{eq:codensity-setting-vietoris} is shown similarly as for \eqref{eq:codensity-setting-filter-kleisli}. Thus, from \Cref{thm:codensity} we obtain: 

\begin{thmC}[\cite{gpr20}]
  The Vietoris monad $\V$ on $\Stone$ is the codensity monad of the functor $U_\Pfin\colon \Kl_\f(\Pfin)\to \Stone$.
\end{thmC}

\subsection{The Lower Vietoris Monad on Topological Spaces}
\label{sec:vietoris-monad-top}
By varying the ingredients of the above setting a bit, codensity presentations for other Vietoris-type monads emerge easily. Here we consider the \emph{lower
  Vietoris monad}, a.k.a.~\emph{Hoare hyperspace monad}~\cite{fpr21}, on the category $\Top$ of topological spaces.
Given a topological space~$X$, the \emph{lower Vietoris space} is the topological space~$\V_\downarrow X$ of all closed subsets of $X$ with the topology generated by the subbasic open sets $\Diamond U$ (as defined above) for $U \subseteq X$ \emph{open}.
Similar to $\V X$, we can represent $\V_\downarrow X$ as a double dual space. Let $\CJsl$ be the category of complete join-semilattices and join-preserving maps (i.e.~algebras for the power set monad~$\P$).
Morphisms in $\CJsl$ from~$J$ to $2=\{0,1\}$ correspond to complete ideals:
subsets of $J$ that are downwards closed and closed under arbitrary joins. The
set $\CJsl(J,2)$ of all complete ideals is topologized as a subspace of the product space
$\Sierp^{|J|}$ of the Sierpinski space $\Sierp$. We then have
$
  \V_\downarrow X \cong \CJsl(\Top(X,\Sierp),2),
$
where the isomorphism identifies a closed set $C\subseteq X$ with the complete ideal of all open sets $U\subseteq X$
with $C\cap U=\emptyset$. Thus, the object mapping $X\mapsto \V_\downarrow X$ extends to a monad,
whose codensity setting is given by the square below: 
\begin{equation*}
\begin{tikzcd}[column sep=40, row sep=25]
    \Kl(\P)
    \ar{r}{U_\P} &
    \Top
   \ar[out=20,in=-20,loop,looseness=3, "\Viet_\downarrow",pos=0.49]
    \ar[yshift=0pt, bend right=20,]{d}[swap]{\Top(-,\Sierp)}
    \ar[yshift=0pt, bend left=20, leftarrow, swap]{d}[swap]{\JSL(-,2)}
    \ar[phantom]{d}{\dashv}
    \\
   \Kl(\P)^\op \ar[hook]{r}{I_\P^\op}
    \ar{u}{(-)^\op}[swap]{\rotatebox{-90}{$\simeq$}}
    &
    \CJsl^\op
  \end{tikzcd}
\end{equation*}
 Note that this setup is very similar
to the filter monad on \Top from \Cref{sec:filter}, except that we now work
with complete join-semilattices in lieu of (finitary) meet-semilattices and with the full
Kleisli category for the monad~$\P$ in lieu of its finite restriction.
The canonical functor $U_\P\colon \Kl(\P) \to \Set$ is taken as a functor to $\Top$ by
identifying $U_\P X=\P X$ with the topological space $\Sierp^X$, and $I_\P$ is the dense
inclusion (\iref{ex:dense-alg}{kleisli}), using that $\EM(\P) \cong \CJsl$. The outside
commutes analogously to~\eqref{eq:codensity-setting-vietoris}. From \Cref{thm:codensity} we obtain the following new result:

\begin{thm}
  The lower Vietoris monad $\V_\downarrow$ on $\Top$ is the codensity monad of $U_\P$.
\end{thm}
\section{Topological Reflections}
We have seen in \Cref{sec:ultrafilter} that many codensity monads may be viewed as enhancements of spaces:
the ultrafilter monad on $\Set$ may be seen as a reflector from sets, viewed as discrete spaces, into compact Hausdorff spaces, and the ultrafilter monad on $\Top$ may be seen as a reflector into Stone spaces.
In this section we consider some further examples of this phenomenon: sobrification and compactification.

\subsection{The Sobrification Monad on Topological Spaces}
\label{sec:sobr-monad-top}
\emph{Sober spaces}~\cite{i72,j82} (i.e.\ topological spaces where every non-empty irreducible closed set is the closure of a single point) play an important role in point-free topology and are captured by the
\emph{sobrification monad}. Recall that a \emph{frame} is a
poset~$X$ which has all finite meets and all joins and which satisfies the infinite
distributive law $x\wedge \bigvee_{i\in I} {x_i} = \bigvee_{i\in I} x\wedge x_i$ for all
$x, x_i \in X$, $i \in I$. We denote by $\Frm$ the category of all frames and maps preserving finite meets and all joins.
It is isomorphic to the category  \(\EM(\Ffrm)\) of algebras for the \emph{free frame monad} \(\Ffrm\) on \Set sending a set~\(X\) to the set~$\Ffrm X$ of upwards closed subsets of \(\P_\f X\)~\cite[C1.1, Lemma~1.1.3]{j02}.

The \emph{sobrification} monad on $\Top$ is given by $\Sob X = \Frm(\Top(X,\Sierp), 2)$; more precisely, $\Sob$~is the monad induced by the dual adjunction in \eqref{eq:codensity-setting-sobrification}. Sipo\c{s}~\cite{s18} has shown that algebras for~$\Sob$ correspond to sober spaces and
has given a codensity presentation of this monad, which in our duality-based framework corresponds to the codensity setting given by:
\begin{equation}\label{eq:codensity-setting-sobrification}
\begin{tikzcd}[column sep=25, row sep=25]
  \Sier
    \ar[hook]{r}{J} &
    \Top
   \ar[out=20,in=-20,loop,looseness=3, "\Sob",pos=0.49]
    \ar[yshift=0pt, bend right=20,]{d}[swap]{\Top(-,\Sierp)}
    \ar[yshift=0pt, bend left=20, leftarrow, swap]{d}[swap]{\Frm(-,2)}
    \ar[phantom]{d}{\dashv}
    \\
    \Kl(\Ffrm)^\op \ar[hook]{r}{{I}_{\Ffrm}^\op}
    \ar{u}{E}[swap]{\rotatebox{-90}{$\simeq$}}
    &
    \Frm^\op
  \end{tikzcd}
\end{equation}
Here $J$ is the inclusion of the full subcategory $\Sier$ of $\Top$ consisting of powers
$\Sierp^X$, where~$X$ is any set, and $I_{\Ffrm}$ is the
canonical dense inclusion functor from \iref{ex:dense-alg}{kleisli}.
The equivalence functor \(E\) and commutation of the diagram are given by the following lemma.

\begin{lem}\label{lem:sierp-dual}
  The object mapping $X\mapsto \Sierp^X \cong \Frm(\Ffrm X, 2)$ defines a dual equivalence $E$ between the Kleisli category
  $\Kl(\Ffrm)$ and the category $\Sier$.
\end{lem}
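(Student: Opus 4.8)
The plan is to realize $E$ as the corestriction of the ``points'' functor $\Frm(-,2)\colon\Frm^\op\to\Top$ (the right adjoint $R$ of the setting) along the Kleisli inclusion $I_\Ffrm^\op$, which identifies $\Kl(\Ffrm)$ with the full subcategory of free frames, and then to verify the two defining properties of a dual equivalence, essential surjectivity and full faithfulness, separately.

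First I would pin down the action on objects. On a set $X$ the functor sends $X\mapsto\Frm(\Ffrm X,2)$, and by the universal property of the free frame $\Frm(\Ffrm X,2)\cong\Set(X,|2|)=2^X$ as sets. To upgrade this to a homeomorphism onto $\Sierp^X$, I would match the topologies through the generators: the spectrum topology on $\Frm(\Ffrm X,2)$ is generated by the sets $\{\chi:\chi(u)=1\}$ for $u\in\Ffrm X$, and since every element of $\Ffrm X$ is a join of finite meets of the generators $\hat x$ ($x\in X$) while frame homomorphisms into $2$ preserve joins and finite meets, this topology is already generated by the sets $\{\chi:\chi(\hat x)=1\}$, which correspond exactly to the subbasic opens $\pi_x^{-1}(1)$ of $\Sierp^X$. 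This yields a natural homeomorphism $\Frm(\Ffrm X,2)\cong\Sierp^X$, so $E$ lands in $\Sier$ and is essentially surjective by the very definition of $\Sier$.

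Second I would establish the dual fact $\Top(\Sierp^X,\Sierp)\cong\Ffrm X$, namely that the frame of opens of $\Sierp^X$ is the free frame on $X$. I would prove this by an explicit order isomorphism: an open $O\subseteq\Sierp^X$ is a union of basic opens $U_F=\{f:f|_F\equiv 1\}$ indexed by finite $F\in\Pfin X$, and the assignment $O\mapsto\{F\in\Pfin X:U_F\subseteq O\}$ is a bijection onto the upward-closed subsets of $\Pfin X$, with inverse $S\mapsto\bigcup_{F\in S}U_F$. The only nontrivial point is that $U_F\subseteq\bigcup_{G\in S}U_G$ forces $F\in S$ for upward-closed $S$: the indicator $\mathbb 1_F\in U_F$ must lie in some $U_G$, whence $G\subseteq F$ and upward closure gives $F\in S$. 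Since the bijection is plainly monotone in both directions, it is a frame isomorphism $\Top(\Sierp^X,\Sierp)\cong\Ffrm X$, natural in $X$.

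Finally, full faithfulness follows by combining these two facts with the defining natural bijection of the dual adjunction $\Top(-,\Sierp)\dashv\Frm(-,2)$, that is, $\Frm\bigl(A,\Top(Z,\Sierp)\bigr)\cong\Top\bigl(Z,\Frm(A,2)\bigr)$. Taking $A=\Ffrm Y$ and $Z=\Sierp^X$ and substituting $\Frm(\Ffrm Y,2)\cong\Sierp^Y$ and $\Top(\Sierp^X,\Sierp)\cong\Ffrm X$ gives
\[
 \Kl(\Ffrm)(Y,X)\;\cong\;\Frm(\Ffrm Y,\Ffrm X)\;\cong\;\Top(\Sierp^X,\Sierp^Y)\;=\;\Sier(EX,EY),
\]
and naturality of the adjunction identifies this composite with the action $h\mapsto\Frm(h,2)$ of $E$ on morphisms (the adjunction unit at the sober space $\Sierp^X$ being precisely the homeomorphism of the previous step). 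Hence $E$ is fully faithful and essentially surjective, i.e.\ a dual equivalence between $\Kl(\Ffrm)$ and $\Sier$. I expect the main obstacle to be the third step: identifying $\Top(\Sierp^X,\Sierp)$ with the free frame $\Ffrm X$ \emph{as a frame} and naturally in $X$, since everything else reduces to the free-frame universal property and the dual adjunction.
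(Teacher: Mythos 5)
Your proof is correct and follows essentially the same route as the paper's: both hinge on identifying the open-set frame of $\Sierp^X$ with the free frame $\Ffrm X$ and then transferring hom-sets formally, the paper via the universal property of powers ($\Top(\Sierp^B,\Sierp^A)\cong\Set(A,\Top(\Sierp^B,\Sierp))\cong\Set(A,\Ffrm B)$), you via the equivalent instance of the dual adjunction $\Top(-,\Sierp)\dashv\Frm(-,2)$ evaluated at free frames. Your explicit combinatorial verification of $\Top(\Sierp^X,\Sierp)\cong\Ffrm X$ via upward-closed families in $\Pfin X$, and of the homeomorphism $\Frm(\Ffrm X,2)\cong\Sierp^X$, merely spells out what the paper compresses into ``by definition of the product topology.''
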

\begin{proof}
  We prove that the object mapping $X\mapsto \Sierp^X$ defines the dual equivalence $E$.
  By definition of the product topology, for every set $B$, the frame of open sets of $\Sierp^B$ is isomorphic to the free frame generated by $B$:
  $\Top(\Sierp^B,\Sierp) \cong \Ffrm B$.
  Using the universal property of powers we obtain
  \[
    \Top(\Sierp^B, \Sierp^A)
    \cong
    \Set(A, \Top(\Sierp^B, \Sierp))
    \cong
    \Set(A, \Ffrm B)
    \cong \Kl_{\Ffrm}(A, B). \qedhere
  \]
\end{proof}

From \Cref{thm:codensity} we obtain the following result due to Sipo\c{s}~\cite{s18}:
\begin{thmC}[\cite{s18}]
  The sobrification monad \(\Sob\) on $\Top$ is the codensity monad of the embedding $\Sier \subto \Top$.
\end{thmC}

\subsection{The Stone-\v{C}ech Compactification}
\label{sec:stone-cech}
Next, we derive a novel codensity presentation of the general version of the Stone-\v{C}ech compactification for topological spaces, which is induced by a dual adjunction with $\mathsf{C}^*$-algebras.

We recall some fundamentals from operator theory \cite{murphy90}:
A \emph{(unital) $\mathsf{C}^*$-algebra} $A$ is a complex unital Banach algebra with an anti-linear involution $(-)^* \colon A \rightarrow A$ satisfying the $*$-identity.
Spelled out, this means that $A$ is a normed complete complex vector space together with an associative unital bilinear multiplication $\cdot \colon A \times A \rightarrow A$ such that $\| a \cdot b \| \le \| a \| \cdot \| b \|$, and for which the involution satisfies for all $a, b \in A$ and $\lambda, \mu \in \Cpx$ the equations
\[ 1^* = 1, \quad (a^*)^* = a, \quad (a \cdot b)^* = b^* \cdot a^*, \quad (\lambda a + \mu b)^* = \bar \lambda a^* + \bar \mu b^*, \quad \| a^* \cdot a \| = \| a \|^2,\]
where $\bar{(-)}$ denotes complex conjugation.
For example, the complex numbers are a $\mathsf{C}^*$-algebra with complex conjugation as involution, and for every topological space $X$ the set $\Cnt_{\mathsf{b}}(X) = \Top_{\mathsf{b}}(X, \Cpx)$ of bounded continuous maps into the complex numbers form a $\mathsf{C}^*$-algebra with the sup-norm, and operations defined pointwise.

A \emph{$^*$-homomorphism} between $\mathsf{C}^*$-algebras $A, B$ is a map $f \colon A \rightarrow B$ that is a $^*$-preserving $\Cpx$-algebra homomorphism, viz.\ $f$ is linear and
\[f(a \cdot b) = f(a)\cdot f(b),\quad f(1) = 1,\quad f(a^*) = f(a)^*,\]
which yields a category $\Cstar$ of $\mathsf{C}^*$-algebras and $^*$-homomorphisms. 
One can show that every $^*$-homomorphism is norm-decreasing, that is, $\| f(a) \| \le \| a \|$ for all $a \in A$.

A fundamental result of operator theory is \emph{Gelfand duality}, which states that the category $\Cstar$ is dually equivalent to the category $\CHaus$ of compact Hausdorff spaces and continuous maps. The equivalence sends a $\mathsf{C}^*$-algebra $A$ to its \emph{spectrum} $\Spec A = \Cstar(A, \Cpx)$ with the equipped with the weak $^*$-topology, and conversely it sends a compact Hausdorff space $X$ to the $\mathsf{C}^*$-algebra $\Cnt(X) = \Top(X, \Cpx)$ of continuous functions on $X$. 

The category $\Cstar$ is a curious example of a concrete category of algebraic flavour that is monadic over $\Set$~\cite{d69,n71}. The right adjoint $\Cstar \rightarrow \Set$ is not only forgetting structure, but sends
a $\mathsf{C}^*$-algebra $A$ to its unit ball
\(BA = \{a \in A \mid \| a \| \le 1\}\) and a $^*$-homomorphism to its restriction on unit balls.
Its left adjoint is given by  the functor \[F \colon \Set \rightarrow \Cstar, \qquad X \mapsto \mathsf{C}(D^X),\]
where $D = B \Cpx$ is the complex unit disc, so $\Cstar \simeq \EM(\mathcal{C})$~\cite{n71} for the monad $\mathcal{C} = BF$.
The category $\EM(\mathcal{C})$ is known to be $\omega_1$-presentable \cite{i82,p93} (however we do not need a concrete presentation by operations and equations)
This means by (the countable version of) \iref{ex:dense-alg}{kleisli-finitary} that the restriction $\Kl_{\omega}(\mathcal{C})$ of the Kleisli category of $\mathcal{C}$ to countable sets is dense in $\Cstar$. 
Under Gelfand duality $\Cstar \simeq^\op \CHaus$, which restricts the adjunction $\Cnt_{\mathsf{b}} \dashv \Spec$, the subcategory $\Kl_\omega(\mathcal{C})$ is dual to the full subcategory $D_\omega \incl \CHaus \incl \Top$ of countable powers $D^i, i \le \omega$ of the unit disc (see \cite[Section 2]{n71}).

Put together, in the square~\eqref{eq:codensity-setting-sc} below the includion $I_{\mathcal{C}}$ is dense and the outer square commutes, so it is a codensity setting for the monad $\beta = \Spec \Cnt_{\mathsf{b}}$ on topological spaces.

\begin{equation}
  \label{eq:codensity-setting-sc}
\begin{tikzcd}[column sep=25, row sep=25]
  D_\omega
    \ar[hook]{r}{I} &
    \Top
   \ar[out=20,in=-20,loop,looseness=3, pos=0.49, "\beta"]
    \ar[yshift=0pt, bend right=20,]{d}[swap]{\Cnt_{\mathsf{b}}}
    \ar[yshift=0pt, bend left=20, leftarrow, swap]{d}[swap]{\Spec}
    \ar[phantom]{d}{\dashv}
    \\
    \Kl_{\omega}(\mathcal{C})^\op \ar[hook]{r}{{I}_{\mathcal{C}}^\op}
    \ar{u}{E}[swap]{\rotatebox{-90}{$\simeq$}}
    &
    \Cstar^\op
  \end{tikzcd}
\end{equation}

The space $\beta X$ is a way to describe the \emph{Stone-\v{C}ech compactification} of the space $X$~\cite{gj60}.
This description is often accompanied by the assumption that $X$ is Tychonoff so that the unit is a dense embedding. 

From \Cref{thm:codensity} we obtain the following new result:
\begin{thm}
  The Stone-\v{C}ech compactification $\beta$ is the codensity monad of the inclusion $I \colon D_{\omega} \incl \Top$.
\end{thm}

\section{Probability Monads}
\label{sec:probability-monads}
Finally, we investigate monads of interest in probabilistic and quantum
computation, and give a uniform treatment of their codensity presentations based on restrictions of the Kleisli category of the (countable) distribution monad as suggested in the work of Shirazi~\cite{s24}.
As noted by van Belle~\cite{vb22}, integral representation theorems play a key role in such presentations, and we show that in order to apply our framework, they are in fact the \emph{only} non-trivial part.





\subsection{The Expectation Monad}
\label{sec:expect-monad-sets}

We start with the \emph{expectation monad}~\cite{z10,jmf16}, a probability monad that has received a lot of attention in recent years, e.g.~in quantum foundations.
A \emph{finitely additive probability measure} on a set \(X\) is a finitely additive probability measure on the discrete measurable space \((X, \Sigma_{X} = 2^X)\), that is, a map \(p \colon 2^X \rightarrow [0, 1]\) with $p(X)=1$ and \(p(A + B) = p(A) + p(B)\) for disjoint \(A, B \in 2^X\).
If \(X\) is finite, then every finitely additive probability measure on \(X\) is \emph{discrete}:
recall that a \emph{discrete finite probability measure} on \(X\) is a map \(p \colon X \rightarrow [0, 1]\) with \(\sum_{x \in X}p(x) = 1\) and \(p(x) =0\) for all but finitely many \(x \in X\).
 We denote the set of finitely additive probability measures on a set \(X\) by \(\Exp X\), and the set of discrete finite probability measures by \(\Dist_\f X\).
For the codensity presentation of the expectation monad we embed these notions into an algebraic setting given by \emph{effect algebras} and \emph{effect modules}~\cite{jm12,g99}.

An \emph{effect algebra} is a partial commutative monoid \((A, \oplus, 0)\) together with an operator \((-)^{\bot} \colon A \rightarrow A\) satisfying that \(x^{\bot}\) is the unique element in \(A\) with \(x \oplus x^{\bot} = 1\), where \(1 = 0^{\bot}\), and \(x \oplus 1\) is defined iff \(x = 0\).
Morphisms of effect algebras preserve all this structure, forming a category \(\EA\).
The unit interval $[0,1]$ forms an effect algebra with \(r \oplus s = r + s\) defined if \(r + s \le 1\), and \(r^\bot = 1 - r\).
Boolean algebras form a full subcategory of effect algebras, with \(a \oplus b = a \lor b\) defined whenever \(a \land b = 0\), and \(a^{\bot} = \neg a\).
The following non-trivial density result, which is not covered by \Cref{ex:dense-alg}, is due to Staton and Uijlen~\cite[Cor.~10]{su18}:
  \begin{prop}\label{thm:ba-ea-dense}
    The inclusion \(J \colon \BA_\f \incl \EA\) is dense.
  \end{prop}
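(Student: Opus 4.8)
The goal is to show that the inclusion $J\colon \BA_\f \incl \EA$ is dense, i.e.\ that every effect algebra $A$ is the colimit of the canonical diagram $J\circ \pi_A\colon (J\slice A)\to \BA_\f \to \EA$, whose objects are the effect-algebra morphisms $f\colon B\to A$ with $B$ a finite Boolean algebra. Equivalently, by the characterization recalled in the excerpt, I would show that the functor $A\mapsto \BA_\f(J-,A)$ into $[\BA_\f^\op,\Set]$ is fully faithful. Since $\BA_\f$ is small, density can be checked pointwise on the underlying sets: I would verify that for every $A$, the cocone $(f\colon B\to A)_{f\in J\slice A}$ exhibits $|A|$ as the colimit of the underlying sets $|B|$ in $\Set$, together with the compatibility of the effect-algebra structure.

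The plan is to reduce density to two separate facts, \textbf{jointness} and \textbf{faithfulness of the probes}. First, I would argue that the finite Boolean subalgebras of $A$ (more precisely, the morphisms out of finite Boolean algebras) are \emph{jointly epic} and cover $A$: every element $a\in A$ lies in the image of some $f\colon B\to A$. The natural candidate is to use that $\{0,a,a^\bot,1\}$, or some finite Boolean structure built from finitely many elements of $A$, maps into $A$; the key point is that the two-element Boolean algebra $2=\{0,1\}$ maps to $A$ picking out any element together with its orthocomplement only when that element is \emph{sharp} (i.e.\ $a\wedge a^\bot=0$ in the induced order), so one must be careful: arbitrary elements of an effect algebra need not come from Boolean subalgebras. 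This is exactly where the content of Staton--Uijlen lies, so I expect the real work to be in identifying the correct diagram of Boolean probes whose colimit recovers \emph{all} of $A$, not merely its sharp elements.

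Concretely, I would try to show that the colimit $\colim_{f\colon B\to A} |B|$, computed as a quotient of $\coprod_{f} |B|$ by the relation generated by the morphisms of $J\slice A$, maps bijectively onto $|A|$ via the cocone. Surjectivity is the covering statement above; injectivity amounts to showing that if two elements $b\in B$, $b'\in B'$ have the same image in $A$, they are already identified by a zig-zag through the diagram, which follows once the diagram is shown to be filtered enough (any two probes factor through a common larger finite Boolean probe). For this I would use that finite Boolean subalgebras of a fixed effect algebra are closed under the finite joins/meets needed to form a common refinement, making $J\slice A$ cofiltered-like in the relevant sense; then I would transport the resulting set-level colimit to $\EA$ by checking that $\oplus$, $0$, $1$, and $(-)^\bot$ on $A$ are the unique structure making the cocone a cocone of effect algebras.

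The hard part will be the covering/surjectivity step, since it is false that every effect-algebra element is \emph{sharp}, and hence one cannot naively expect each $a\in A$ to arise from a Boolean subalgebra. The essential idea I would pursue, following Staton--Uijlen, is that while $a$ itself may be non-sharp, it is \emph{test-detected} by Boolean probes: the maps $B\to A$ together with the $[0,1]$-valued and $2$-valued probes separate points and jointly generate $A$ as a colimit even though no single finite Boolean subalgebra contains $a$. Making this precise — exhibiting the explicit diagram and cocone and verifying the universal property against an arbitrary competing cocone — is the crux, and I would expect to invoke the finite-dimensional representation theory of effect algebras (e.g.\ that finitely generated effect algebras admit enough morphisms to $[0,1]$ and to finite Boolean algebras) rather than to re-derive it. Since the excerpt explicitly attributes this density result to \cite{su18}, I would cite that argument for the covering step and restrict my own verification to assembling it into the colimit statement required by the definition of density.
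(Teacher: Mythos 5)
Your proposal stalls on a misconception, and the machinery you build around it would not survive. The ``crux'' you identify --- that covering might fail because elements of an effect algebra need not be sharp --- conflates morphisms \emph{from} finite Boolean algebras with Boolean \emph{subalgebras}. Density quantifies over the slice $J \slice A$, i.e.\ over \emph{all} $\EA$-morphisms $2^n \to A$, and such a morphism (a ``test'' in Staton--Uijlen's terminology) is exactly an $n$-tuple $(x_1,\dots,x_n)$ of elements of $A$ with $x_1 \oplus \dots \oplus x_n = 1$; its image need not be a Boolean subalgebra of $A$, and sharpness never enters. In particular every $a \in A$, sharp or not, is hit by the $2$-test $(a, a^\bot)\colon 2^2 \to A$, because $a \oplus a^\bot = 1$ is an effect-algebra axiom. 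So covering is trivial, and the substance lies elsewhere. Conversely, the step you treat as routine is the one that fails: $J \slice A$ is \emph{not} ``cofiltered-like'', since a common refinement of two tests $(x_a)_{a}$ and $(y_b)_{b}$ would be a joint test $(z_{ab})_{a,b}$ with the given marginals, and such joint tests do not exist in general effect algebras --- incompatibility of tests is precisely the contextuality phenomenon that Staton--Uijlen's paper is about. Any injectivity argument relying on zig-zags through common refinements therefore collapses. Finally, reducing density to a bijection of underlying sets is unjustified: colimits in $\EA$ are not computed on underlying sets, so even if the set-level comparison map were shown bijective, you would still have to verify the universal property against arbitrary $\EA$-valued cocones by constructing a mediating \emph{effect-algebra} morphism.

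For comparison: the paper does not reprove this proposition at all; it cites \cite[Cor.~10]{su18}, and its appendix (\Cref{lem:cba-cea-dense-app}, the countable analogue for $\sigma$-effect algebras) displays the shape of that argument. One proves that the nerve $X \mapsto \EA(2^{(-)}, X)$ into $[\BA_\f^\op, \Set]$ is fully faithful. Faithfulness: two distinct morphisms $X \to Y$ already differ on some $2$-test $(x, x^\bot)$. Fullness: given a natural transformation $\varphi$, define $f(x)$ by $\varphi_2(x, x^\bot) = (f(x), f(x)^\bot)$, and then use naturality of $\varphi$ along the $\BA_\f$-maps that relate an $n$-test to its coarse-grainings (e.g.\ the map $2^{2} \to 2^{1+n}$ summing a block of atoms) to show that $f$ preserves $\oplus$, $0$ and $1$. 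This one-step comparison of an arbitrary test with the canonical $2$-test of an element is exactly what replaces your missing filteredness, and it is where the actual content of the result sits.
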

  Effect algebras carry a monoidal structure \( \otimes \) such that \(A \otimes B\) represents bilinear morphisms~\cite{jm12}.
  In more detail, given effect algebras $A$, $B$ and $C$ a map $f\colon A\times B \to C$ is \emph{bilinear} if $f(a,-)\colon B \to C$ and $f(-,b)\colon A \to C$ are morphisms of effect algebras for every $a \in A$ and $b\in B$.
  The tensor product of $A$ and $B$ is an effect algebra $A\times B$ equipped with a canonical bilinear map $\eta\colon A\times B \to A \otimes B$ such that for every bilinear map $f\colon A\times B \to C$, there exists a unique effect algebra morphism $\hat f\colon A\otimes B \to C$ such that $\hat f\cdot \eta = f$. 
  Note that the tensor unit is \(2 = \{0, 1\}\).
  For every $a \in A$ and $b \in B$, we will write $a\otimes b$ for $\eta(a,b)$. 
  The unit interval \([0, 1]\) with multiplication forms a monoid \([0, 1] \otimes [0, 1]  \xra{m} [0, 1]\) for this monoidal structure.

Effect algebras are a partial version of commutative monoids, so there exists a corresponding notion of module:
An \emph{effect module} is an effect algebra \(E\) that is a \([0, 1]\)-module for the above tensor product, that is,
\(E\) is equipped with an effect algebra morphism $a\colon [0,1] \otimes E \to E$ such that the following square commutes:
\[
  \begin{tikzcd}[column sep = 40]
    [0,1] \otimes  [0,1] \otimes E
    \ar{r}{m \otimes E}
    \ar{d}[swap]{a}
    &[]
    [0,1]\otimes E
    \ar{d}{a}
    \\[]
    [0,1]\otimes E
    \ar{r}{a}
    &
    E
  \end{tikzcd}
\]
For $r \in [0,1]$ and $x \in E$ we write $rx$ for $a(r\otimes x)$.

Effect module morphisms are effect algebra morphisms preserving the action, forming a category \(\EMod\).
Its forgetful functor to \(\EA\) has a left adjoint. 
This adjunction yields a monad~\(\Act\) on \(\EA\) given by \(\Act A= [0,1] \otimes A\), 
whose algebras are effect modules (this is analogous to $R$-modules, for a ring $R$, being algebras for the monad $R \otimes (-)$ on abelian groups).
On finite Boolean algebras we have an isomorphism
\begin{align*}
  \Act(2^X) = [0, 1] \otimes 2^X &\cong [0, 1]^X \\
  r \otimes A &\mapsto r \chi_A, \\
  \sum_{x \in X} d(x) \otimes \delta_x &\mapsfrom d,
\end{align*}
where $\chi_A$ is the characteristic map of $A\subseteq X$ given by $\chi_A(x)=1$ if $x\in A$, and $\chi_A(x)=0$ if $x\not\in A$, and $\delta_x$ is given by $\delta_x(x)=1$ and $\delta_x(y)=0$ for $y\neq x$. 

We extend the density result from \Cref{thm:ba-ea-dense} to effect modules, where we denote the full subcategory of \(\Kl(\Act)\) given by finite Boolean algebras by \(\Kl_{\BA_\f}(\Act)\):

\begin{prop}\label{prop:baM-emod-dense}
  The inclusion $I_{\Act} \colon \Kl_{\BA_\f}(\Act) \incl \EMod$, $2^{X} \mapsto [0, 1]^{X}$, is dense.
\end{prop}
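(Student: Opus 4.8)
The plan is to derive density of $I_{\Act}\colon \Kl_{\BA_\f}(\Act)\incl\EMod$ from the density of $J\colon \BA_\f\incl\EA$ (\Cref{thm:ba-ea-dense}) by transporting the colimit presentation along the free-module adjunction $\Act=[0,1]\otimes(-)$. The key structural fact I would exploit is that the free functor $L\colon \EA\to\EMod$, $A\mapsto [0,1]\otimes A$, is a left adjoint and hence preserves all colimits. Since $\Kl_{\BA_\f}(\Act)$ is the full subcategory of $\Kl(\Act)$ on finite Boolean algebras, its objects are exactly the free effect modules $[0,1]\otimes 2^X\cong[0,1]^X$ for finite $X$, i.e.\ the images $L(JB)$ of finite Boolean algebras $B=2^X$ under $L$.

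First I would recall what density of $I_{\Act}$ means concretely: every effect module $E\in\EMod$ should be the colimit of the canonical diagram $\pi_E\colon (I_{\Act}\slice E)\to\EMod$ indexed by all morphisms $[0,1]^X\to E$ from free modules on finite Boolean algebras. My strategy is to show that this colimit can be computed inside $\EA$ and then transported by $L$. Concretely, by density of $J$ the underlying effect algebra $UE$ (where $U\colon\EMod\to\EA$ is the forgetful functor) is the colimit $\colim_{f\colon 2^X\to UE}2^X$ over the finite-Boolean-algebra slice. The main technical content is then a Kleisli reindexing argument: morphisms $2^X\to UE$ in $\EA$ correspond bijectively, via the adjunction $L\dashv U$, to morphisms $[0,1]^X=L(2^X)\to E$ in $\EMod$, i.e.\ to Kleisli morphisms $X\to \Act(2^X)$ landing in $E$, and this correspondence is functorial, giving an equivalence (or at least a cofinality) between the slice category $J\slice UE$ in $\EA$ and the slice category $I_{\Act}\slice E$ in $\EMod$ up to the adjunction transpose.

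The heart of the proof is therefore to show that $E\cong\colim_{g\colon[0,1]^X\to E}[0,1]^X$. I would argue this by combining two ingredients: (1) the adjunction isomorphism $\EMod(L(2^X),E)\cong\EA(2^X,UE)$ identifies the two indexing diagrams, and (2) since $L$ is a left adjoint it preserves the density colimit, so $E=L(UE)$ is not literally true, but rather $E$ itself is a module whose free presentation over its own underlying algebra must be reconciled. The cleaner route, which I expect to use, is to verify directly the Yoneda formulation of density: the functor $\EMod\to[\Kl_{\BA_\f}(\Act)^\op,\Set]$ sending $E\mapsto \EMod(I_{\Act}-,E)$ is fully faithful. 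Using the adjunction this functor factors as $E\mapsto \EA(J-,UE)$, and full faithfulness would follow from density of $J$ together with the fact that a module morphism is determined by its underlying effect-algebra morphism and compatibility with the $[0,1]$-action, the latter being automatic because every $[0,1]^X$ is free.

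The step I expect to be the main obstacle is controlling the interaction between the $[0,1]$-action and the colimit: I must ensure that the module structure on $E$ reconstructed from the cocone of free modules $[0,1]^X\to E$ agrees with the given one, i.e.\ that passing to underlying effect algebras does not lose the action data. This is where the specific structure of $\Act$ as a \emph{free-module} monad is essential, since the action on a colimit of free modules is forced by the universal property; the freeness of $[0,1]^X$ over $2^X$ is precisely what makes the effect-algebra-level density of \Cref{thm:ba-ea-dense} lift to the module level without additional hypotheses.
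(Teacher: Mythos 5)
Your overall route---verifying the Yoneda formulation of density by factoring $E\mapsto \EMod(I_{\Act}-,E)$ through the forgetful functor $U\colon\EMod\to\EA$ via the adjunction isomorphism $\EMod([0,1]\otimes B,E)\cong\EA(B,UE)$, and then invoking density of $J\colon\BA_\f\incl\EA$ (\Cref{thm:ba-ea-dense})---is exactly the paper's, and your faithfulness argument is fine. The gap is in fullness, at precisely the step you declare ``automatic''. Given a natural transformation $\varphi\colon\EMod(I_{\Act}-,E)\to\EMod(I_{\Act}-,E')$, restricting along $\BA_\f\to\Kl_{\BA_\f}(\Act)$ and applying density of $J$ produces an effect algebra morphism $f\colon UE\to UE'$; you claim its compatibility with the $[0,1]$-action is automatic ``because every $[0,1]^X$ is free''. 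This cannot be right: by density of $J$, natural transformations $\EA(J-,UE)\to\EA(J-,UE')$ over $\BA_\f^\op$ correspond bijectively to \emph{all} $\EA$-morphisms $UE\to UE'$, linear or not, and non-linear ones exist even out of free modules. For instance, take $E=[0,1]=I_{\Act}(2^1)$, let $E'$ be the unit interval of $\mathbb{R}^2$ with the lexicographic order and coordinatewise $[0,1]$-action, and let $\lambda\colon\mathbb{R}\to\mathbb{R}$ be additive but not $\mathbb{R}$-linear with $\lambda(1)=0$; then $x\mapsto(x,\lambda(x))$ is an $\EA$-morphism $UE\to UE'$ that is not $[0,1]$-linear. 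So linearity of $f$ cannot follow from naturality over $\BA_\f$ together with freeness of the objects; it must be extracted from naturality of $\varphi$ at Kleisli morphisms that are \emph{not} in the image of $\BA_\f$. This is exactly the paper's key computation: for $r\in[0,1]$, consider the Kleisli morphism $k\colon 2^2\to[0,1]\otimes 2^2$ sending $x\mapsto r\cdot x$; naturality of $\varphi$ at $k$, applied to the test corresponding to $(x,x^\bot)$, yields $\varphi(rx,(rx)^\bot)=(ry,(ry)^\bot)$ and hence $f(rx)=rf(x)$. Without this use of the genuinely Kleisli morphisms your proof does not go through; with it, it becomes the paper's proof.

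A secondary remark on your first route (transporting the density colimit along $L\dashv U$): it also cannot be completed as described. Besides $L(UE)\ne E$, which you acknowledge, the transposition comparison $J\slice UE\to I_{\Act}\slice E$ is bijective on objects but \emph{not} full, since the right-hand slice has the extra Kleisli morphisms; so it is not an equivalence, and its finality is not obvious---proving it runs into the same issue of accounting for the action via those extra morphisms. Since you pivot to the Yoneda route anyway, the fullness gap above is the one to repair.
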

\begin{proof}
  We show that the functor
  \begin{align*}
   \y_{I_{\Act}} \colon \EMod &\rightarrow [\Kl_{\BA_\f}(\Act)^{\op}, \Set] \\
    X & \mapsto \EMod(I_{\Act}(-), X)
  \end{align*}
  is fully faithful.

  \begin{enumerate}
  \item The square
  \[
    \begin{tikzcd}[column sep = 40]
      \EMod \dar{\y_{I_{\Act}}} \rar{|-|} & \EA \dar{\y_{J}} \\
      {[\Kl_{\BA_\f}(\Act)^{\op}, \Set]} \rar{[F_{\Act}^{\op}, \Set]} & {[\BA_{\f}^{\op}, \Set]}
    \end{tikzcd}
  \]
  commutes, where \(F_{\Act} \colon \BA_{\f} \incl \Kl_{\BA_\f}(\Act)\) denotes the canonical
  left adjoint to the Kleisli category restricted to finite Boolean algebras.
  The composite \(\y_{J} \cdot |-|\) is faithful, since \(\y_{J}\) is faithful by \Cref{thm:ba-ea-dense}. This implies \(\y_{I_{\Act}}\) is faithful.
  
  \item To show that \(\y_{I_{\Act}}\) is full, let \(\varphi \colon \y_{I_{\Act}}X \rightarrow \y_{I_{\Act}}Y\) be a natural transformation.
    Since \(\y_{J}\) is fully faithful due to \Cref{thm:ba-ea-dense}, we obtain an effect \emph{algebra} morphism \(f \colon |X| \rightarrow |Y|\) such that $y_J f = [F_{\Act}^{\op}, \Set]\varphi \colon \y_{J}X \rightarrow \y_{J}Y$. 
    To show that \(f\) is an effect module morphism it just remains to prove it preserves the action. 
    To this end, we recall from the proof in~\cite{su18} the concrete definition of \(f\):
    For a finite set $A$, an \(\EA\)-morphism \(2^{A} \rightarrow X\) (called a \emph{test} in op.~cit.) corresponds to a family \((x_{a})_{a \in A}\) with \(\sum_{a}x_{a} = 1\), and so we denote it as the tuple
    \[
      (x_{a})_{a} \in \y_{J}X(2^{A}) = \EA(2^A, X).
    \]
    Hence, elements \(x \in X\) correspond to \(\EA\)-morphisms \((x, x^{\bot}) \colon 2^{2} \rightarrow X\), and moreover, we have \(f(x) = y\) whenever \(\varphi_{2^2}(x, x^{\bot}) = (y, y^{\bot})\). Thus, we have natural isomorphisms
    \[
      X \cong \EA(2^2,X) \cong \EMod([0,1] \otimes 2^2, X) = y_{I_\Act}X.
    \]

  Now, given \(r\in [0,1]\) and \(x \in X\), consider the Kleisli morphism \(k\colon 2^{2} \rightarrow [0, 1]
  \otimes 2^{2}\) defined by \(x \mapsto r \otimes x\). Using the above natural isomorphism,
  it is easy to show that 
  \[
    (rx, (rx)^{\bot}) = \y_{I_{\Act}}X(k)(x, x^{\bot}) \in \y_{I_{\Act}}X(2^{2}),
  \]
  By the naturality of $\varphi$ we have
  \[
    \varphi_{2^2}(rx, (rx)^{\bot}) = \varphi_{2^2}(\y_{I_{\Act}}(k)(x, x^{\bot})) = \y_{I_{\Act}}(k)(\varphi_{2^2}(x, x^{\bot})) = \y_{I_{\Act}}(k)(y, y^{\bot}) = (ry, (ry)^{\bot}).
  \]
  Thus, \(f(rx) = ry = rf(x)\) as desired.
  \qedhere
\end{enumerate}
\end{proof}

Finitely additive probability measures on a set $X$ are, by definition, precisely effect algebra morphisms from $2^X$ to $[0,1]$, that is, $\Exp X = \EA(2^X, [0, 1])$.
They also correspond to effect module morphisms from $[0, 1]^{X}$ to $[0, 1]$, which is the content of the following `discrete integral representation theorem':

\begin{thmC}[{\cite[Prop.~33]{jmf16}}]\label{prop:exp-rep}
  For every set $X$, there is a natural bijection
  \[\EMod([0, 1]^{X}, [0, 1]) \cong \EA(2^{X}, [0, 1]) = \Exp X.\]
\end{thmC}
The bijection sends $f\in\EMod([0, 1]^{X}, [0, 1])$ to the measure $p$ with $p(A)=f(\chi_A)$.
We thus define the \emph{expectation monad} $\Exp$ on $\Set$ by \(\Exp X = \EA(2^{X}, [0, 1])\); by \Cref{prop:exp-rep}, $\Exp$  is given by the adjunction in the codensity setting below:
\begin{equation}\label{eq:codensity-setting-exp}
\begin{tikzcd}[column sep=25, row sep=25]
    \Kl_\f({\Dist_\f})
    \ar{r}{U_{\Dist_\f}}&
    \Set
    \ar[out=20,in=-20,loop,looseness=3, "\Exp",pos=0.49]
    \ar[yshift=0pt, bend right=20,]{d}[swap]{\Set(-,[0, 1])}
    \ar[yshift=0pt, bend left=20, leftarrow, swap]{d}[swap]{\EMod(-, [0, 1])}
    \ar[phantom]{d}{\dashv}
    \\
    \Kl_{\BA_\f}(\Act)^\op \ar[hook]{r}{I_{\Act}^\op}
    \ar{u}{E}[swap]{\rotatebox{-90}{$\simeq$}}
    &
    \EMod^\op
  \end{tikzcd}
\end{equation}
The inclusion \(I_\Act\) is dense by \Cref{prop:baM-emod-dense}.
The duality $E$ of the left extends the duality $\BA_\f^\op\simeq \Set_\f$.
On objects, it does the same, and for morphisms we have for $A,B\in \Set_\f$ the following natural bijection:
\begin{align*}
  \Kl_{\f}(\Dist_\f)(A, B) &= \Set(A, \Dist_\f B) \cong \EA(2^B, [0, 1]^A) \cong \EA(2^B,
  \mathcal{M}(2^A)) \\
  &\cong \Kl_{\BA_\f}(\Act)(EB, EA).
\end{align*}
To see that the outer square commutes, note that on finite sets $A$ we have
\[
  \EMod([0, 1]^A, [0, 1]) \cong \EA(2^A,[0,1]) \cong \Dist_\f A,
\]
where the first step holds by \Cref{prop:exp-rep} and the second one since finitely
additive probability measures on the finite set $A$ are discrete.

Thus, from \Cref{thm:codensity} we obtain a novel codensity presentation of $\Exp$:
\begin{thm}\label{thm:codensity-exp}
The expectation monad $\Exp$ is the codensity monad of $U_{\Dist_\f}\colon \Kl_\f(\Dist_\f)\to \Set$.
\end{thm}
\begin{rem}\label{rem:exp-functor}
  The expectation \emph{functor} \(\Exp X \cong \EA(2^{X}, [0, 1])\) has a much simpler presentation:
  One can show, by (co)limit manipulation very similar to our proof of \Cref{thm:codensity}, that~$\Exp$ is the right Kan extension of \(\Dist_\f \colon \Set_\f \rightarrow \Set\) along \(I\colon \Set_\f \subto \Set\).
  This only requires density of \(\BA_\f \incl \EA\)  (\Cref{thm:ba-ea-dense}), not the representation result for $\Exp$ (\Cref{prop:exp-rep}).
\end{rem}

\subsection{The Giry Monad and its Variants}
\label{sec:giry-monad-meas}

For classical probability theory one uses countably (instead of finitely) additive probability measures.
A \emph{probability measure} on a measurable space \((X, \Sigma_{X})\) is a map
\(p \colon \Sigma_{X} \rightarrow [0, 1]\) such that \(p(X) = 1\) and \(p(\sum_{i}A_{i}) = \sum_{i}p(A_{i})\) for every countable family \((A_{i})_{i}\) of pairwise disjoint sets.
We denote the set of probability measures on \(X\) by \(\Giry X\).
The assignment \(X \mapsto \Giry X\) forms a monad on the category \(\Meas\) of measurable spaces and measurable maps called the \emph{Giry monad}~\cite{giry1982}.
Here, \(\Giry X\) is equipped with the \(\Sigma\)-subalgebra of the power \([0, 1]^{\Sigma_{X}}\) of \([0, 1]\) in \(\Meas\), where $[0,1]$ is equipped with the usual Borel $\sigma$-algebra.
On a countable discrete space \((X, \Sigma_{X} = 2^{X})\) a probability measure corresponds to a map \(p \colon X \rightarrow [0, 1]\) with \(\sum_{x}p(x) = 1\).
We denote the set of all such discrete probability measures by \(\Dist X\).
If we equip \(\Dist X\) with the \(\Sigma\)-subalgebra of \([0, 1]^{X}\), then \(\Dist\) is a functor \(\Setc \rightarrow \Meas\), where \(\Setc \incl \Set\) is the full subcategory of countable sets.

Again, these constructions can be expressed in the language of effect algebras, though we have to use a countably infinitary operation.
A \emph{\(\sigma\)-effect algebra}~\cite{bgm19} (or $\sigma$-complete effect algebra~\cite{at06}) is an effect algebra \(A\) with a partial commutative associative operation \(\oplus_{n < \omega} \colon A^{\omega} \rightharpoonup A\) that behaves as expected with unit \(0\) and addition \(\oplus\) of \(A\) (here commutativity means that the operation is invariant under permutation of the arguments).
Morphisms of \(\sigma\)-effect algebras preserve all this structure, forming a category \(\sEA\).

The category $\sEA$ contains the category of \(\sigma\)-algebras and the category of countably complete Boolean algebras as subcategories, analogously to the finite case:
for a countably complete Boolean algebra $B$ the operation $a_0 \oplus a_1 \oplus a_2 \oplus \cdots$ is defined for a sequence $(a_i)_{i < \omega}$ iff $a_i \land a_j = 0$ for all $i \ne j$ and the result is the union $\bigvee_{i < \omega} a_i$.
We obtain a dense subcategory:
Restricting the familiar dual equivalence between \(\Set\) and the category \(\Caba\) of complete atomic Boolean algebras to countable sets \(\Setc\), we obtain a dual equivalence $\Setc \simeq^\op \Cabac$ between the category $\Setc$ of countable sets and the category $\Cabac$ of Boolean algebras of the form \(2^{A}\) for a countable set \(A\).
\begin{lem}\label{lem:cba-cea-dense-app}
  The inclusion \(I \colon \Cabac \incl \sEA\) is dense.
\end{lem}
\noindent
The proof for  is the same as for effect algebras~\cite[Thm.~9]{su18}, just replacing finite sums with countable ones.
  We include the adjusted proof for the convenience of the reader.
\begin{proof}
  We show that the functor
  \[
    T \colon \sEA \rightarrow [\Cabac^{\op}, \Set], \qquad X \mapsto \sEA(I-, X)
  \]
  is fully faithful.
  Under the equivalence $2^{-} \colon \Setc \simeq \Cabac^\op$, this is equivalent to showing that the functor
\[
    T \colon \sEA \rightarrow [\Setc, \Set], \qquad X \mapsto \sEA(2^{-}, X)
  \]
  is fully faithful.
We define an \emph{$A$-test} on a countable set \(A\) to be a morphism \(f \in \sEA(2^{A}, X)\): it corresponds to an \(A\)-indexed family of elements \((x_a)_{a \in A}\) with \(\sum_{a}x_{a} = 1\), so we also denote $f$ by $(x_a)_{a \in A}$.

  \begin{enumerate}
    \item  To show that \(T\) is faithful, let $X$ and $Y$ be \(\sigma\)-effect algebras, and let \(f \ne g \colon X \rightarrow Y\) be \(\sigma\)-effect algebra homomorphisms.
          Then \(f(x) \ne g(x)\) for some \(x \in X\).
          Then \(Tf \ne Tg\), since at \(2 \in \Setc\) the \(2\)-test \((x, x^{\bot})\) is in \(\sEA(2^{2}, X)\) and \[(Tf)_{2}(x, x^{\bot}) = (f(x), f(x^{\bot})) \ne (g(x), g(x^{\bot})) = (Tg)_{2}(x, x^{\bot}).\]

    \item To show that \(T\) is full let \(\varphi \colon TX \rightarrow TY\).
          We define the map
          \begin{equation}
            \label{eq:f}
            f \colon X \rightarrow Y, \qquad f(x) = y \quad \text{ where } \quad  (y', y) = \varphi_{2}(x^{\bot}, x),
          \end{equation}
          is the \(\varphi\)-image of the \(2\)-test \((x^{\bot}, x)\).
          We have to show that \(f\) preserves countable sums.
          If \(\sum_{a \in A}x_{a} = x\) is a countable sum in \(X\), then \((x^{\bot}, (x_{a})_{a \in A})\) is an \(1 + A\)-test.
          We denote its image under \(\varphi_{1+A}\) by
          \[
          \varphi_{1+A}(x^{\bot}, (x_{a})_{a \in A}) = (y', (y_{a})_{a \in A}) \in \sEA(2^{1+A}, Y).
          \]
          Now for any fixed \(a \in A\), define \(k_a \colon 1 + A \rightarrow 2 = \{0, 1\}\) by \(a \mapsto 1\) and \(b \mapsto 0\) for \(b \ne a\). Then we have
          \begin{align*}
            \varphi_{2}(x_{a}^{\bot}, x_{a})
            &=
              \varphi_{2}(TX(k_a)(x^{\bot}, (x_{a})_{a \in A}))
            \\
            &=
              TY(k_a)(\varphi_{1 + A}(x^{\bot}, (x_{a})_{a \in A}))
            \\
            &=
              TY(k_a)(y', (y_{a})_{a \in A})
            \\
            &=
              (y' + \sum_{b \ne a} x_{b}, y_{a}),
          \end{align*}
          so for all $a$ we have $f(x_a) = y_a$ by definition \eqref{eq:f} of $f$.
          Now consider the map \(j = [0!, 1!] \colon 1 + A \rightarrow 2\) that is the characteristing map of $A$.
          We have 
          \begin{align*}
            &\varphi_{2}(x^{\bot}, \sum_{a \in A}x_{a}) \\
            &= \varphi_{2}(T X (j)(x^{\bot}, (x_{a})_{a \in A})) \\
            &= TY (j)(\varphi_{1+A}(x^{\bot}, (x_{a})_{a \in A})) \\
            &= TY(j)(y', (y_{a})_{a \in A}) \\
            &= (y', \sum_{a}y_{a \in A}), 
          \end{align*}
          and together with the first calculation above we conclude \(f(\sum_{a}x_{a}) = \sum_{a}y_{a} = \sum_{a}f(x_{a})\).
          The proof that \(f\) preserves \(0\) and \(1\) is similar:
          consider $j \colon 1 \rightarrow 2$ with $j(\ast) = 1$, then
          \[\varphi_2(0, 1) = \varphi_2(TX(l)(1)) = TY(l)(\varphi_1(1)) = TY(l)(1) = (0, 1)\]
          since \(\varphi\) must send the unique $1$-test $(1)$ on $X$ to $(1)$ on $Y$.
          By definition of $f$ we have $f(1) = 1$; the proof for $f(0) = 0$ is analogous.
          \qedhere
  \end{enumerate}
\end{proof}

A \emph{\(\sigma\)-effect module} $X$ is an effect module on a \(\sigma\)-effect algebra $X$ whose action satisfies an infinite distributive law for the infinitary sum operation:
for all $r \in [0, 1]$ and countable sequence $(x_a)_{a \in A}$ we have that in
\[r \sum_{a \in A}x_a = \sum_{a \in A} rx_a\]
the left side is defined iff the right side is, and in this case they coincide.
This yields a category \(\sEMod\), and a monad \(\Act\) on  \(\EA_{\sigma}\) sending \(X\) to the free \(\sigma\)-effect module \([0, 1] \otimes X\) analogous to the finitary case.
The full subcategory of the Kleisli category on complete atomic Boolean algebras is denoted \(\Kl_{\Cabac}(\Act)\), and we have a density result analogous to \Cref{prop:baM-emod-dense}, which is proved in exactly the same way:
\begin{prop}\label{cor:cabacM-semod-dense}
  The inclusion \(I_{\Act} \colon \Kl_{\Cabac}(\Act) \incl \sEMod\) is dense.
\end{prop}

A probability measure on a measurable space \(X\) is a \(\sigma\)-effect algebra morphism \(\Sigma_{X} \rightarrow [0, 1]\), so \(\Giry X \cong \sEA(\Sigma_{X}, [0, 1])\).
For the codensity presentation of the Giry monad we use the following integral representation theorem:
\begin{notheorembrackets}
  \begin{thm}[{\cite[Cor.~3.8]{vb22}}]\label{thm:giry-rep}
    For every measurable space \(X\) we have
    \[\Giry X \cong \sEMod(\Meas(X, [0, 1]), [0, 1]) .\]
  \end{thm}
\end{notheorembrackets}
We therefore obtain the following codensity setting for the Giry monad:
\begin{equation}\label{eq:codensity-setting-giry}
\begin{tikzcd}[column sep=25, row sep=25]
    \Kl_\cnt(\Dist)
    \ar{r}{U_{\Dist}}&
    \Meas
   \ar[out=20,in=-20,loop,looseness=3, "\Giry",pos=0.49]
    \ar[yshift=0pt, bend right=20,]{d}[swap]{\Meas(-,[0, 1])}
    \ar[yshift=0pt, bend left=20, leftarrow, swap]{d}[swap]{\sEMod(-,[0, 1])}
    \ar[phantom]{d}{\dashv}
    \\
    \Kl_{\Cabac}(\Act)^\op \ar[hook]{r}{I_{\Act}^{\op}}
    \ar{u}{E}[swap]{\rotatebox{-90}{$\simeq$}}
    &
    \sEMod^\op
  \end{tikzcd}
\end{equation}
Here \(\sEMod(E, [0, 1])\) is topologized as a \(\Sigma\)-subalgebra of the power \([0, 1]^{E}\).
The functor~\(I_\Act\) is dense by \Cref{cor:cabacM-semod-dense}, and the equivalence on the left and commutativity of the outer square are analogous to the finite case for the expectation monad.
Thus, from \Cref{thm:codensity} we obtain the following codensity presentation, similar to \Cref{thm:codensity-exp}:
  \begin{thm}\label{prop:giry-cod}
    The Giry monad $\Giry$ is the codensity monad  of \(U_\Dist \colon \Kl_\cnt(\Dist ) \rightarrow \Meas\).
  \end{thm}
From this result, we can easily derive an even simpler codensity presentation of $\Giry$ by the functor \(\Dist \colon \Setc \rightarrow \Meas\) due to van Belle~\cite{vb22}. Indeed, consider the codensity setting on the left below, whose adjunction is the composition of the two adjunctions on the right:

\noindent
\begin{minipage}{.3\textwidth}
\begin{equation*}\label{eq:codensity-setting-giry-simple}
\begin{tikzcd}[column sep=25, row sep=25]
    \Setc
    \ar{r}{\Dist}&
    \Meas
    \ar[yshift=0pt, bend right=20,]{d}[swap]{\Meas(-,[0, 1])}
    \ar[yshift=0pt, bend left=20, leftarrow, swap]{d}[swap]{\sEA(-,[0, 1])}
    \ar[phantom]{d}{\dashv}
    \\
    \Cabac^{\op} \ar[hook]{r}{I_{\Act}^{\op}}
    \ar{u}{\rotatebox{-90}{$\simeq$}}
    &
    \sEA^\op,
  \end{tikzcd}
\end{equation*}
\end{minipage}
\begin{minipage}{.1\textwidth}
~
\end{minipage}
\begin{minipage}{.4\textwidth}
  \[
    \begin{tikzcd}[column sep=50]
      \Meas \ar[bend right=10]{r}[swap,pos=.6]{\Meas(-,[0, 1])} \ar[phantom]{r}{\top}
      &
      {\sEMod^\op} \ar[bend right=10]{l}[swap,pos=.4]{\sEMod(-,[0, 1])} \ar[bend right=10, swap]{r}{|-|^{\op}} \ar[phantom]{r}{\top}
      &
      \sEA^{\op} \ar[bend right=10]{l}[swap]{\Act^{\op}}
\end{tikzcd}.
\]
\end{minipage}

\medskip\noindent This yields a codensity presentation of the monad \(\sEA(\Meas(X, [0, 1]), [0, 1])\) on \Meas by the functor $\Dist$.
This monad is isomorphic to the Giry monad,
since every \(\sigma\)-effect algebra homomorphism \(\Meas(X, [0, 1]) \rightarrow [0, 1]\) is already a morphism of $\sigma$-effect modules~\cite[Lemma A.3]{vb22}:}
\[\Giry X \cong \sEMod(\Meas(X, [0, 1]), [0, 1]) \cong \sEA(\Meas(X, [0, 1]), [0, 1]) \cong \Cody{\Dist}.\]
Hence, we recover the presentation due to van Belle~\cite{vb22}:
\begin{notheorembrackets}
  \begin{thm}[\cite{vb22}]\label{thm:giry-rep-simple}
    The Giry monad $\Giry$ is the codensity monad  of \(\Dist \colon \Setc \to \Meas\).
  \end{thm}
\end{notheorembrackets}
By replacing $\Meas$ with $\Set$ (regarded as discrete measurable spaces) in the above codensity settings, we also obtain a similar presentation of the \emph{countable expectation monad}~$\Exp_\sigma$:
\begin{notheorembrackets}
  \begin{thm}[\cite{s24}]\label{thm:expsigma-rep}
    The monad $\Exp_\sigma$ is the codensity monad  of \(\Dist \colon \Setc \to \Set\).
  \end{thm}
\end{notheorembrackets}

\begin{rem}\label{prop:giry-ran-app}
  As noted by van Belle~\cite{vb22}, the Giry \emph{functor} can also be presented as the right Kan extension of \(\Dist \colon \Setc \rightarrow \Meas\) along the inclusion \(I \colon \Setc \subto \Meas\) equipping a countable set with the discrete \(\Sigma\)-algebra.
  This also follows easily via a computation analogous to~\eqref{thm:codensity}, and the proof does not need an integral representation theorem.
\end{rem}

  \takeout{
\subsection{The Giry Monad}
\label{sec:giry-monad-meas}

To capture general probability theory, we move from finitely additive to countably additive probability measures.
A \emph{probability measure} on a measurable space \((X, \Sigma_{X})\) is a map
\(p \colon \Sigma_{X} \rightarrow [0, 1]\) such that \(p(X) = 1\) and \(p(\sum_{i}A_{i}) = \sum_{i}p(A_{i})\) for every countable family \((A_{i})_{i}\) of pairwise disjoint sets.
We denote the set of probability measures on \(X\) by \(\Giry X\).
The assignment \(X \mapsto \Giry X\) forms a monad on the category \(\Meas\) of measurable spaces and measurable maps, called the \emph{Giry monad}~\cite{giry1982}.
Here, \(\Giry X\) is equipped with the \(\Sigma\)-subalgebra of the power \([0, 1]^{\Sigma_{X}}\) of \([0, 1]\) in \(\Meas\), where $[0,1]$ carries the usual Borel $\sigma$-algebra.
On a countable discrete space \((X, \Sigma_{X} = 2^{X})\), a probability measure corresponds to a map \(p \colon X \rightarrow [0, 1]\) with \(\sum_{x}p(x) = 1\).
We denote the set of all such discrete probability measures by \(\Dist X\).
If we equip \(\Dist X\) with the \(\Sigma\)-subalgebra of \([0, 1]^{X}\), then \(\Dist\colon \Setc \rightarrow \Meas\) forms a functor, where \(\Setc \incl \Set\) denotes the full subcategory of countable sets.

Again, these constructions can be expressed in the language of effect algebras, though we have to use a countably infinitary sum operation.
A \emph{\(\sigma\)-effect algebra}~\cite{at06} is an effect algebra \(A\) with a partial commutative associative operation \(\oplus_{n < \omega} \colon A^{\omega} \rightharpoonup A\) that behaves as expected with unit \(0\) and addition \(\oplus\) of \(A\) (here commutativity means that the operation is independent of permutation of the arguments).
Morphisms of \(\sigma\)-effect algebras preserve all this structure, forming a category \(\sEA\).
It contains the category of \(\sigma\)-algebras and the category of countably complete Boolean algebras as subcategories.
The familiar dual equivalence between \(\Set\) and the category \(\Caba\) of complete atomic Boolean algebras~\cite[Sec.~VI.4.6]{j82} restricts to countable sets, yielding a dual equivalence between $\Setc$ and the category $\Cabac$ of countably complete Boolean algebras of the form \(2^{A}\) for a countable set \(A\).

The codensity presentation of $\Giry$ again rests on a representation theorem~\cite[Cor.~3.8]{vb22}:
  \begin{thm}\label{thm:giry-rep}
    For every $X\in \Meas$ we have
    $\Giry X \cong \sEA(\Meas(X, [0, 1]), [0, 1])$.
  \end{thm}
The isomorphism is analogous to that of \Cref{prop:exp-rep}. This means that the Giry monad is given by the adjunction in the diagram below, which also establishes a codensity setting:
\begin{equation*}
\begin{tikzcd}[column sep=25, row sep=25]
    \Setc
    \ar{r}{\Dist}&
    \Meas
    \ar[out=20,in=-20,loop,looseness=3, "\Giry",pos=0.49]
    \ar[yshift=0pt, bend right=20,]{d}[swap]{\Meas(-,[0, 1])}
    \ar[yshift=0pt, bend left=20, leftarrow, swap]{d}[swap]{\sEA(-,[0, 1])}
    \ar[phantom]{d}{\dashv}
    \\
    \mathbf{CABA}_\mathsf{c}^{\op} \ar[hook]{r}{I^{\op}}
    \ar{u}{\rotatebox{90}{$\simeq$}}
    &
    \sEA^\op,
  \end{tikzcd}
\end{equation*}
Here $I$ is the inclusion, and we have:

\begin{lem}\label{lem:cba-cea-dense}
  The inclusion \(I\colon \Cabac \incl \sEA\) is dense.
\end{lem}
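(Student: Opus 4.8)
The plan is to prove density through the equivalent criterion recalled in \Cref{sec:preliminaries}: the \emph{nerve} functor $N\colon \sEA \to [\Cabac^\op,\Set]$, $A\mapsto \sEA(I{-},A)$, is fully faithful. The whole argument is governed by one computation of hom-sets. A $\sigma$-effect-algebra morphism out of a power $2^S$ (with $S$ countable) is determined by, and freely specified by, the images of the singletons $\{s\}$: every element of $2^S$ is a countable disjoint union of singletons, so a morphism $\phi$ is fixed by the family $(\phi\{s\})_{s\in S}$, and conversely any countably jointly-orthogonal family $(a_s)_{s\in S}$ with $\bigoplus_{s\in S}a_s = 1$ extends to a morphism via $\phi(T)=\bigoplus_{s\in T}a_s$, where well-definedness and structure-preservation are exactly the associativity and commutativity axioms of the $\sigma$-sum. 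Thus $\sEA(2^S,A)$ is naturally the set of $S$-indexed \emph{orthogonal partitions of unity} of $A$, and in particular $\sEA(2^S,2^{S'})\cong\Setc(S',S)$, recovering the duality $\Cabac\simeq\Setc^\op$ that indexes the canonical diagram.

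Next I would reconstruct a candidate morphism from a natural transformation $\theta\colon NA\Rightarrow NB$. Every element $a\in A$ is named by the two-block partition $(a,a^\bot)$, i.e.\ by a morphism $f_a\colon 2^{\{0,1\}}\to A$ out of the four-element Boolean algebra $2^{\{0,1\}}\in\Cabac$ sending the first atom to $a$. I set $g(a)$ to be the image of that atom under $\theta(f_a)\colon 2^{\{0,1\}}\to B$. Faithfulness is then immediate, since for any morphism $h$ the component $N(h)$ records where each two-block partition is sent; hence two morphisms inducing the same $\theta$ agree on every $f_a$, so they agree on every $a$ and coincide.

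The substance is to check that $g$ is a $\sigma$-effect-algebra morphism and that $N(g)=\theta$. Preservation of $0$, $1$ and $(-)^\bot$ is a short finitary computation: for instance, naturality of $\theta$ at the swap $2^{\{0,1\}}\to 2^{\{0,1\}}$ gives $g(a^\bot)=\theta(f_a)(\text{second atom})$, and since $\theta(f_a)$ is again a two-block partition of unity in $B$ one reads off $g(a)\oplus g(a^\bot)=1$. This is the finitary skeleton and runs as in the effect-algebra situation underlying \Cref{thm:ba-ea-dense}. That $N(g)=\theta$ then follows from the atom-extraction trick: for any partition $f\colon 2^S\to A$ and any $s\in S$, naturality at the single-atom-selecting reindexing $2^{\{0,1\}}\to 2^S$ yields $\theta(f)(\{s\})=g(f(\{s\}))$, so $\theta(f)$ and $N(g)(f)=g\circ f$ agree on all atoms and are therefore equal.

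I expect the genuine obstacle to be preservation of the \emph{countable} sum, i.e.\ $g(\bigoplus_n a_n)=\bigoplus_n g(a_n)$ for an orthogonal sequence $(a_n)$; this is precisely the new content beyond \Cref{thm:ba-ea-dense}, and the only place where the $\sigma$-structure is indispensable. I would obtain it by applying naturality of $\theta$ to the countable coarsening map $2^{\mathbb{N}\sqcup\{\ast\}}\to 2^{\{0,1\}}$ that merges all of $\mathbb{N}$ into one block, evaluated at the partition of unity $(a_0,a_1,\dots;(\bigoplus_n a_n)^\bot)$. On one side this names $g(\bigoplus_n a_n)$ via the merged two-block partition; on the other, combined with the atom-extraction identity $\theta(f)(\{n\})=g(a_n)$ already established, it evaluates to $\bigoplus_n g(a_n)$. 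Making this single compatibility precise -- and verifying that the orthogonality and summability required to form these partitions is respected -- is the one essentially $\sigma$-theoretic step of the proof.
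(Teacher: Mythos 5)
Your proposal is correct and is essentially the paper's own proof: the paper likewise establishes full faithfulness of the nerve (its functor $T = \sEA(2^{-},-)$, indexed by countable sets via the duality $\Cabac \simeq \Setc^\op$), identifies morphisms out of $2^S$ with countable partitions of unity (\emph{tests}), defines the candidate morphism from two-block tests $(x,x^\bot)$, and proves countable additivity by precisely your two naturality instances --- single-atom selection and the coarsening map merging all summands into one block. The only slip is cosmetic: your coarsening arrow should be written $2^{\{0,1\}} \to 2^{\mathbb{N}\sqcup\{\ast\}}$ as a $\Cabac$-morphism (it is dual to the merging map of sets), but the naturality square you invoke is the intended and correct one.
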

From \Cref{thm:codensity} we obtain the codensity presentation of $\Giry$ due to van Belle~\cite{vb22}.

  \begin{thmC}[\cite{vb22}]\label{thm:giry-rep-simple}
    The Giry monad $\Giry$ on measurable spaces is the codensity monad of \(\Dist \colon \Setc \to \Meas\).
  \end{thmC}
Let us mention that in much the same way, \Cref{thm:codensity} yields codensity presentations of
\begin{enumerate}
\item the Giry monad by the forgetful functor
$U_\Dist\colon \Kl_\mathsf{c}(\Dist) \to \Meas$, analogous to \Cref{thm:codensity-exp};
\item the \emph{countable expectation monad} $\Exp_\sigma X = \sEMod([0,1]^X,[0,1])$ on $\Set$ by the functor $\Dist\colon \Set_\mathsf{c}\to \Set$, analogous to \Cref{thm:giry-rep-simple}. 
\end{enumerate}
Here $\sEMod$ is the category of \emph{$\sigma$-effect modules}~\cite{at06}, the infinitary version of effect modules. Both results rest on the fact that every $\sEA$-morphism of type $[0,1]^X\to [0,1]$ is an $\sEMod$-morphism, i.e.~linear~\cite[Lem.~A.3]{vb22}. 

\begin{rem}\label{prop:giry-ran}
  As already noted in~\cite[Rem.~4.3]{vb22}, the Giry \emph{functor} $\Giry$ can also be presented as the right Kan extension of \(\Dist \colon \Setc \rightarrow \Meas\) along the inclusion \(I \colon \Setc \subto \Meas\) equipping a countable set with the discrete \(\Sigma\)-algebra.
  Similar to the case of $\Exp$
(\Cref{rem:exp-functor}), this follows via an easy computation as in the proof of~\Cref{thm:codensity}, and does not need an integral representation theorem.
\end{rem}

} 

\subsection{The Radon Monad and the Kantorovich Monad}
\label{sec:radon-monad-compact}
Variants of the Giry monad on subcategories of measurable spaces are treated similarly in our framework. Here we consider the \emph{Radon monad}~\cite{fj15}, which captures an important subclass of probability measures. A \emph{Radon} probability measure on a compact Hausdorff space \(X\) is a probability measure \(\mu\) on \(\Bo_X\) (the Borel $\sigma$-algebra generated by the open sets of $X$) satisfying the condition
  $\mu(A) = \sup \{\,\mu(K) \mid K \subseteq A \text{ and } K \text{ compact}\,\}$ for all \(A \in \Bo_X\).
  The set of all Radon probability measures on \(X\) is denoted by \(\Rad X\); it forms a compact Hausdorff space when topologized as a subspace of the product space \([0, 1]^{\Bo_X}\) in \CHaus, the category of all compact Hausdorff topological spaces and continuous maps between them.

  Similar to \Cref{thm:giry-rep}, Radon measures come with a representation theorem, which yields a particularly simple codensity presentation.
  It is a variant of the Riesz-Markov representation theorem; van Belle~\cite{vb22} proved it via the Daniell-Stone representation:
\begin{notheorembrackets}
\begin{thm}\label{thm:rad-pres}
  For every $X\in \CHaus$, we have 
  $\Rad X \cong \EA(\CHaus(X, [0, 1]), [0, 1])$.\end{thm}
\end{notheorembrackets}
We therefore obtain the following codensity setting for the Radon monad:
\begin{equation*}
\begin{tikzcd}[column sep=50, row sep=25]
    \Set_\f
    \ar{r}{\Dist}&
    \CHaus
   \ar[out=14,in=-14,loop,looseness=4, "\Rad",pos=0.5]
    \ar[yshift=0pt, bend right=20,]{d}[swap]{\CHaus(-,[0, 1])}
    \ar[yshift=0pt, bend left=20, leftarrow, swap]{d}[swap]{\EA(-,[0, 1])}
    \ar[phantom]{d}{\dashv}
    \\
    \BA_\f^\op \ar[hook]{r}{J^\op}
    \ar{u}{\rotatebox{90}{$\simeq$}}
    &
    \EA^\op
  \end{tikzcd}
\end{equation*}
Here $\Dist X$ is topologized as a subspace of the hypercube $[0,1]^X$, and $J$ is the inclusion functor, which is dense by~\Cref{thm:ba-ea-dense}.
Commutativity of the outer square is a special case of commutativity of the setting \eqref{eq:codensity-setting-exp}.
Thus, from \Cref{thm:codensity} we obtain the following codensity presentation of the Radon monad:

\begin{notheorembrackets}
  \begin{thmC}[{\cite{vb22}}]
    The Radon monad on compact Hausdorff spaces  is the codensity monad of \(\Dist \colon \Set_\f \rightarrow \CHaus\).
  \end{thmC}
\end{notheorembrackets}
\begin{rem}
  The Radon monad restricts to the \emph{Kantorovich}~\cite{vb05} or \emph{bounded Lipschitz monad}~\cite{vb22}  on the category \(\CMet\) of compact metric spaces and nonexpansive maps by equipping it with the \emph{Kantorovich metric}.
  Similarly, the above functors \(\Dist\) and \(\EA(-, [0, 1])\) corestrict to \(\CMet\) by equipping them with the Kantorovich and supremum metric, respectively.
  Replacing in the above codensity setting \(\CHaus\) by \(\CMet\) and the functors by their metric counterparts yields the codensity presentation of the Kantorovich monad by $\Dist$~\cite{vb22}.
\end{rem}

\subsection{Finitely Additive Measures in a Finite Semiring}
\label{sec:finit-addit-meas}

As our last example in this section, we depart from classical probability measures and generalize to measures valued in a finite semiring~\cite{r20}. Fix a finite semiring \(S\), viewed as a discrete topological space.
  An \emph{\(S\)-valued measure} on a Stone space \(X\) is a map \(p \colon \Cl X \rightarrow S\), where $\Cl X$ denotes the set of all clopen subsets of $X$, satisfying \(p(\emptyset) = 0\) and \(p(A + B) = p(A) + p(B)\) for disjoint clopen subsets $A, B \subseteq X$.
  Let \(\SMeas X \) denote the set of such measures.

  The set \(\SMeas X \subseteq S^{\Cl X}\) is a closed subspace~\cite[Lemma~3.4]{r20}, whence a Stone space. The integral representation theorem in this case is particularly simple:
Given $X\in \Stone$ we topologize \(\Mod_S(\Stone(X, S), S)\) as a subspace of \(S^{\Stone(X, S)}\),
where the set \(\Stone(X, S)\) is an \(S\)-module under pointwise operations.
Then we obtain
\begin{lem}\label{lem:smeas-rep}
  For every Stone space \(X\) we have
  \(\SMeas X \cong \Mod_S(\Stone(X, S), S)\).
\end{lem}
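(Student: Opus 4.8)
The plan is to establish the claimed isomorphism $\SMeas X \cong \Mod_S(\Stone(X,S), S)$ by exhibiting a concrete bijection and checking that it respects the relevant structure (both as a set-level bijection and, implicitly, as a homeomorphism of Stone spaces, which is needed for the codensity setting to make sense). First I would unwind the two sides. An element of $\Mod_S(\Stone(X,S), S)$ is a homomorphism of $S$-modules $\varphi \colon \Stone(X,S) \to S$, where $\Stone(X,S)$ carries the pointwise $S$-module structure. An element of $\SMeas X$ is a finitely additive $S$-valued measure $p \colon \Cl X \to S$ with $p(\emptyset)=0$. The natural map to try sends a measure $p$ to the functional $\varphi_p$ defined on a continuous $f \colon X \to S$ by integrating $f$ against $p$: since $S$ is finite (discrete) and $X$ is Stone, every $f \in \Stone(X,S)$ takes finitely many values and has clopen level sets, so one can set $\varphi_p(f) = \sum_{s \in S} s \cdot p(f^{-1}(s))$, a finite $S$-linear combination making sense in the $S$-module $S$.

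The key steps, in order, are as follows. First I would check this ``integral'' is well-defined and that $\varphi_p$ is an $S$-module morphism: additivity of $\varphi_p$ reduces to finite additivity of $p$ on the clopen level sets (using that $S$-valued simple functions add by refining the common clopen partition, where disjointness and $p(A+B)=p(A)+p(B)$ are exactly what is needed), and $S$-homogeneity $\varphi_p(s \cdot f) = s \cdot \varphi_p(f)$ follows directly from the defining sum. Second, I would exhibit the inverse: given $\varphi \in \Mod_S(\Stone(X,S), S)$, recover a measure by $p_\varphi(A) = \varphi(\chi_A)$ for $A \in \Cl X$, where $\chi_A \colon X \to S$ is the indicator landing in $\{0,1\} \subseteq S$ (continuous because $A$ is clopen). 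Finite additivity of $p_\varphi$ for disjoint clopens $A,B$ follows since $\chi_{A+B} = \chi_A + \chi_B$ pointwise (disjointness guarantees no value exceeds $1$, so the pointwise sum is the genuine module sum in $S$), and $p_\varphi(\emptyset)=\varphi(0)=0$. Third, I would verify these two assignments are mutually inverse: starting from $p$, the composite recovers $p$ because $\varphi_p(\chi_A) = 1 \cdot p(A) = p(A)$; starting from $\varphi$, the composite recovers $\varphi$ because every $f \in \Stone(X,S)$ is the finite $S$-linear combination $f = \sum_{s} s \cdot \chi_{f^{-1}(s)}$ of indicators of its clopen level sets, and $\varphi$ is $S$-linear, so $\varphi$ is determined by its values on indicators.

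The last ingredient, which is really what makes the lemma usable in the surrounding codensity setting rather than a mere set bijection, is that the bijection is a homeomorphism: both spaces sit inside product spaces ($\SMeas X \subseteq S^{\Cl X}$ and $\Mod_S(\Stone(X,S),S) \subseteq S^{\Stone(X,S)}$ with the product/subspace topologies, both closed subspaces of products of the discrete finite space $S$), and the correspondence between $p$ and $\varphi_p$ is given coordinatewise by finite $S$-linear formulas, hence continuous in both directions; I would note this makes it a continuous bijection of compact Hausdorff (Stone) spaces and therefore a homeomorphism. I expect the main obstacle to be bookkeeping around the finite-additivity/linearity translation, specifically ensuring the ``indicator decomposition'' $f = \sum_s s\cdot \chi_{f^{-1}(s)}$ and its image under $\varphi$ are handled correctly given that $\oplus$-type constraints (values staying within $S$) play no role once we are in a genuine semiring $S$ where addition is total; this is exactly why the statement is flagged as ``easy to see'' — the finiteness of $S$ and clopenness of level sets remove all the analytic subtleties present in the classical ($[0,1]$-valued) representation theorems, reducing everything to elementary $S$-linear algebra over finite clopen partitions.
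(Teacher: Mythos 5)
Your proposal is correct and follows essentially the same route as the paper's own proof: integrate against the measure via $\varphi_p(f)=\sum_{s\in S} s\cdot p(f^{-1}(s))$, invert by restricting an $S$-linear map to characteristic functions of clopens, and observe that both constructions are continuous (the paper states this in exactly these terms, only more tersely). Your additional bookkeeping---the common-refinement argument for additivity, the indicator decomposition $f=\sum_s s\cdot\chi_{f^{-1}(s)}$, and the coordinatewise continuity check---simply fills in the details the paper labels as ``obvious'' and ``clearly continuous.''
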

\begin{proof}
  The homeomorphism extends an \(S\)-valued measure \(p \in \SMeas(X)\) to an \(S\)-module morphism \(\hat{p} \colon \Stone(X, S) \rightarrow S\) in the obvious way:
  For clopens $K \subseteq X$ we have continuous characteristic maps $\chi_{K} \colon X \rightarrow 2$,
  and clearly want $\hat{p}(\chi_K) = p(K)$.
  As we can express every map \(k \in \Stone(X, S)\) as a linear combination
  \begin{equation}
    \label{eq:k}
    k = \sum_{s \in S}s \cdot \chi_{k^{-1}(s)}.
  \end{equation}
  over disjoint clopens $k^{-1}(s)$
  we are thus forced to define 
  \begin{equation}
    \label{eq:hatp}
    \hat{p}(k) = \sum_{s \in S} s \cdot p(k^{-1}(s)),
  \end{equation}
  which makes $\hat{p}$ $S$-linear by definition.

  Conversely, restricting an \(S\)-module morphism \(h \colon \Stone(X, S) \rightarrow S\) to characteristic maps of clopens under $\iota \colon \Stone(X, \{0, 1\}) \incl \Stone(X, S)$ yields an $S$-valued measure $h \iota$.
  Both constructions are clearly continuous. For $p$ and $h$ as above we have we have $\hat{p}\iota = p$ by definition and $\widehat{h \iota}$ is obvious from \eqref{eq:k} and \eqref{eq:hatp}. 
\end{proof}

This induces a monad \(\SMeas\) with suitable codensity setting given by the diagram below:
\begin{equation*}
\begin{tikzcd}[column sep=50, row sep=25]
    \Kl_\f(\mathcal{S})
    \ar{r}{U_\mathcal{S}}&
    \Stone
   \ar[out=20,in=-20,loop,looseness=3, "\SMeas",pos=0.48]
    \ar[yshift=0pt, bend right=20,]{d}[swap]{\Stone(-,S)}
    \ar[yshift=0pt, bend left=20, leftarrow, swap]{d}[swap]{\Mod_S(-,S)}
    \ar[phantom]{d}{\dashv}
    \\
    \Kl_\f(\mathcal{S})^\op \ar[hook]{r}{I_\mathcal{S}^\op}
    \ar{u}{(-)^\op}[swap]{\rotatebox{-90}{$\simeq$}}
    &
    \Mod_S^\op
  \end{tikzcd}
\end{equation*}
The
   inclusion \(I_{\mathcal{S}}\) is dense by \iref{ex:dense-alg}{kleisli-finitary}, using that $\EM(\mathcal{\mathcal{S}})\cong \Mod_S$, and the outer square commutes similar to \eqref{eq:codensity-setting-filter-kleisli}.
  Thus, from \Cref{thm:codensity} we obtain the following result:

    \begin{thmC}[{\cite{r20}}]\label{thm:codensity-meas-s}
      The monad $\SMeas$ on \Stone is the codensity monad of the functor $U_\mathcal{S}\colon \Kl_\f(\mathcal{S})\to \Stone$.
    \end{thmC}
    By \Cref{rem:monad-ext} we can also see it as a pushforward of the restriction of $\mathcal{S}$ to $\Set_\f$:
    \begin{cor}
      The monad $\SMeas$ is the pushforward of the monad $\mathcal{S}$ on $\Set_\f$ along the inclusion $\Set_\f \incl \Stone$.
    \end{cor}

Note that we recover the codensity presentation of the Vietoris monad on Stone spaces from \Cref{sec:vietoris-monad-stone} by setting $S = \mathbf{2}$.

\section{Application: Börger's Characterization of the Ultrafilter Monad}
\label{sec:appl}
Up to this point, we gave numerous derivations of codensity presentations using our duality-based framework. In particular, each of those presentations
yields a universal property of the respective monad (\Cref{prop:monad-ext-univ-prop}).
In this section we give an illustration of how to use this universal property: we prove a generalization of Börger's characterization~\cite{b87} of the ultrafilter functor/monad as the final finite-coproduct-preserving functor/monad on $\Set$.

We start with a general criterion for preservation of finite coproducts by codensity monads.
Recall that a category $\C$ with finite coproducts is \emph{extensive} if for any two objects $A$ and $B$ the sum functor 
\[
  (\C \downarrow A) \times (\C \downarrow B) \xra{+} \C \downarrow (A + B).
\]
is an equivalence of categories; see~\cite{clw93} for several alternative characterizations.

\begin{prop}\label{prop:cod-pres-+}
  Let $\A$ be a category with finite coproducts and let $\C$ be extensive.
  If $F \colon \A \rightarrow \C$ preserves finite coproducts and
  its pointwise codensity monad exists, then $\Cody{F}$ preserves finite coproducts.
\end{prop}

Before we prove the proposition, we require some auxiliary results for computing limits:
Recall that a functor $F \colon J \rightarrow I$ is \emph{initial} if precomposition with $F$ does not change limits:
for every diagram $D \colon I \rightarrow \C$, if $\lim DF$ exists, then so does $\lim F$ and they are isomorphic.
This is equivalent to every coslice $F \downarrow i$  ($i \in I$) being non-empty and connected~\cite[Ch.~IX.3]{m98}.
\begin{lem}\label{lem:plus-initial}
  Let $\A$ and $\C$ have finite coproducts and let $F \colon \A \rightarrow \C$ be a functor preserving coproducts.
  Then for all $X_1$ and $X_2$, the following functor is initial:
  \begin{align*}
    + \colon (X_1 \downarrow F) \times (X_2 \downarrow F) \rightarrow (X_1 + X_2) \downarrow F \\
    (f_1, f_2) \mapsto f_1 + f_2 \colon X_1 + X_2 \rightarrow FA_1 + FA_2 \cong F(A_1 + A_2).
  \end{align*}
\end{lem}
\begin{proof}
  For every $c=[c_1, c_2]\colon X_1 + X_2 \to FA$ in $(X_1 + X_2 \downarrow F)$, the coslice $+ \downarrow c$ is non-empty and connected since the codiagonal $\nabla \colon c_1 + c_2 \rightarrow c$ is a terminal object in $+ \downarrow c$.
\end{proof}

The following fact is well-known (see e.g.~\cite{k21} for a proof).
\begin{lem}\label{lem:ext-connected-lim}
  In every extensive category, connected limits commute with finite coproducts.
\end{lem}
\begin{proof}[{Proof of \Cref{prop:cod-pres-+}}]
  Let $F \colon \A \rightarrow \C$ preserve finite coproducts.
    We have the following chain of isomorphisms:
    \begin{align}
      \label{thm:boerger:pres-+:step:1}
      \Cody F (X_1 + X_2)
      &\cong \lim_{X_1 + X_2 \rightarrow FA}FA \\
      \label{thm:boerger:pres-+:step:2}
      &\cong \lim_{\substack{X_1  \rightarrow FA_1 \\ X_2 \rightarrow FA_2}}FA_1 + FA_2 \\
      \label{thm:boerger:pres-+:step:3}
      &\cong (\lim_{X_1 \epi FA_1}FA_1) + (\lim_{X_2 \epi FA_2} FA_2) \\
      \label{thm:boerger:pres-+:step:4}
      &\cong \Cody{X_1} + \Cody{X_2}.
    \end{align}
    The isomorphism \eqref{thm:boerger:pres-+:step:1} is just the limit formula \eqref{eq:ran} for right Kan extensions.
    Step \eqref{thm:boerger:pres-+:step:2} is \Cref{lem:plus-initial}.
    Step \eqref{thm:boerger:pres-+:step:3} is \Cref{lem:ext-connected-lim}.
    Step \eqref{thm:boerger:pres-+:step:4} is just the limit formula for $\Ran$ again.

    Finally, $\Cody F$ preserves the initial object:
    \[\Cody F( 0) = \int_{A \in \C} \C(0,  FA) \power FA \cong \int_{A \in \C} FA = \lim F \cong 0, \]
    since $0$ is an initial object of the diagram scheme \A.
  \end{proof}

We now turn to the ultrafilter functor and give a new, purely categorical, proof of Börger's characterization of the ultrafilter functor using its codensity presentation.
For a recent exposition on this result with orthogonal generalizations see Garner's work~\cite{g20}.
Recall that the ultrafilter monad on $\Set$ is the codensity monad of the inclusion $I \colon \Set_\f \incl \Setf$.
Since $\Set$ is extensive and~$I$ preserves finite coproducts, \Cref{lem:plus-initial} yields:
\begin{cor}
  The functor $\U$ preserves finite coproducts.
\end{cor}
\begin{rem}\label{rem:ba-rep-pres-prod}
  For an alternative argument one can use the presentation $\mathcal{U} \cong \BA(2^-, 2)$:
  the exponential $2^- \colon \Set \rightarrow \BA^\op$ is a left adjoint and thus preserves coproducts, and the representable $\BA(-, 2) \colon \BA^\op \rightarrow \Set$ preserves finite coproducts by Stone duality:
  \[
    \BA(A \times B, 2)
    \cong
    \Stone(1, \hat{A} + \hat{B})
    \cong
    \Stone(1, \hat{A}) + \Stone(1, \hat B)
    \cong
    \BA(A, 2)+ \BA(B, 2),
  \]
  where $\hat{A} = \BA(A, 2)$ is the Stone spectrum of $A$ (i.e.~the Stone space dual to $A$).
\end{rem}
\begin{exa}
  From \Cref{lem:plus-initial} we get that various other codensity monads considered in~\cite{as21} preserve coproducts:
  all those codensity monads of inclusions $\C_\f \incl \C$ where $\C$ is extensive and $\C_\f$ consists of finitely carried $\C$-objects.
  For example, $\C$ can be the category of graphs, posets or $M$-sets for a finite monoid $M$.
\end{exa}

\begin{rem}\label{rem:2-cat}
  We denote by $\cat{CAT}_+ \incl \cat{CAT}$ the sub-2-category of all categories with finite products, all functors finite-coproduct preserving functors and all natural transformations between them.
  Note that for $F, G \in [\A, \B]_+$ all natural transformations $\varphi \colon (F, +, 0) \rightarrow (G, +, 0)$ are monoidal, that is, the following squares commute: 
  \[
    \begin{tikzcd}[column sep = 40]
      F(A + B) \ar{d}[swap]{\rotatebox{90}{$\cong$}} \rar{\varphi_{A + B}}
      &
      G(A + B) \ar{d}{\rotatebox{-90}{$\cong$}}
      \\
      FA + FB \rar{\varphi_A + \varphi_B}
      & GA + GB 
    \end{tikzcd}
  \]
  To see this precompose the square by $F\inl\colon FA \to F(A+B)$ and $F\inr\colon FB\to F(A+B)$, respectively, where $\inl$ and $\inr$ are the coproduct injections, note that so are  $F\inl, F\inr$ and use the naturality of $\varphi$. 
\end{rem}
\begin{thmC}[{\cite{b87}}]\label{thm:boerger}
  The ultrafilter functor $\U \colon \Set \rightarrow \Set$ is the terminal object in the category of finite-coproduct preserving set functors.
\end{thmC}
\begin{proof}
  To see that for every finite-coproduct preserving set functor $G$, there exists a unique natural transformation $G \rightarrow \U$ consider the following chain of isomorphisms (here $[\Set,\Set]_+$ denotes the category of all finite-coproduct preserving set functors):
  \begin{align}
    \notag
     [\Set, \Set](G, \U) & \cong [\Set, \Set](G, \Cody{I}) \\
    \label{thm:boerger:unique:step:1}
    & \cong [\Set_\f, \Set](GI, I) \\ 
    \label{thm:boerger:unique:step:2}
    &\cong [\Set_\f,\Set]_+(GI, I) \\
    \label{thm:boerger:unique:step:3}
    &\cong [\cat{1}, \Set](GIi, Ii) \\
    \notag
    &\cong \Set(G1, 1)  \\
    \notag
    &\cong  1. 
  \end{align}
  Step \eqref{thm:boerger:unique:step:1} uses the universal property of the right Kan extension.
  Step \eqref{thm:boerger:unique:step:2} uses that $\cat{CAT}$ and $\cat{CAT}_+$ have the same 2-cells (natural transformations), see \Cref{rem:2-cat}.
  Step \eqref{thm:boerger:unique:step:3} is induced by the fact that the inclusion $i \colon 1 \subto \Set_\f$ is the \emph{finite-coproduct completion} of the terminal category $1$:
  it is the unit at $1$ of the corresponding 2-adjunction between the 2-category $\cat{CAT}$ and the 2-category $\cat{CAT}_+$, so precomposition with $i$ induces an isomorphism $[\Set_\f, \Set]_+ \cong [1, \Set]$.
  Since $I$ preserves finite coproducts (being the free extension of the embedding $1! \colon 1 \incl \Set$, viz.\ $Ii = 1!$), we obtain the desired bijection of 2-homsets.
\end{proof}

\begin{cor}
  The ultrafilter monad is the terminal object in the category of finite-coproduct preserving set monads.
\end{cor}
\begin{proof}
  If $G$ is a monad, then the condition that $\alpha \colon G \rightarrow \U \cong \Cody I$ is a monad morphism corresponds to the unique map $G1 \rightarrow 1$ being a $G$-algebra structure, which is true.
\end{proof}

\begin{rem}
  We emphasize that \emph{(re-)presentation matters}:
  We did not necessarily choose the \emph{simplest}, but a \emph{canonical} presentation of the ultrafilter monad.
  For example, one can show that the ultrafilter monad (\Cref{sec:ultrafilter}) is also the codensity monad of the inclusion $\{3\} \incl \Set$ of the full subcategory on the three-element set~\cite{l13};
  however, it seems difficult to derive the universal property of $\U$ from this codensity presentation.
\end{rem}

\section{Conclusion and Future Work}
We have introduced a general, unifying approach to codensity presentations of monads, based on the simple core principle of relating codensity to density via duality. We have shown that numerous known and new codensity presentations for important monads, e.g.~(ultra)filter, Vietoris, and probability monads, emerge as instances of our framework, in many cases almost for free by invoking standard duality and density results from the literature.

A previous approach towards a general understanding of codensity presentations is due to Ad\'amek and Sousa~\cite{as21}.
They introduce a notion of ultrafilter monad on a symmetric monoidal closed category that is locally finitely presentable, and prove that this monad is the codensity monad of the full subcategory of finite presentable objects, provided that the category has a nice cogenerator.
Most of their concrete instances of ultrafilter monads are easily captured by our duality framework (along the lines of \Cref{sec:ultrafilter}); whether their general presentation result is an instance of our \Cref{thm:codensity} remains an open problem.

Finally, on the more applied side, we aim to use our framework to study codensity presentations of additional monads. We are particularly interested in variants of the Vietoris monad related to compact spaces and probabilistic powerdomains~\cite{fpr21}. Here, the choice of the corresponding dual adjunction (and even of the dualizing objects) is far from obvious.

\section*{Acknowledgement} The authors wish to thank Nathanel Arkor for pointing out the connection between our \Cref{thm:codensity} and the recent work by Do\~na Mateo~\cite{m25} (\Cref{rem:dona-mateo}).
We are also grateful to Andrew Krenz for suggesting that results from his PhD thesis~\cite{k25} on double-dualization monads for median algebras (\Cref{sec:meas-medi-algebr}) are captured by our framework.

\bibliographystyle{alphaurl}
\bibliography{bibliography}

\end{document}